\pdfoutput=1
\documentclass[10pt]{article}
\usepackage{fullpage}
\usepackage[pdftex, pdfborder={0 0 0}, colorlinks=false]{hyperref}
\hypersetup{pdfborder={0 0 0}, colorlinks = true, linkcolor=black, citecolor=black, filecolor=black, urlcolor=black}
\usepackage[all]{hypcap}
\usepackage{graphicx}
\usepackage{subfigure}
\usepackage{algorithm}
\usepackage{algorithmic}
\usepackage{enumerate}
\usepackage{tikz}
\usepackage{soul}
\usepackage{etoolbox}
\usepackage{amsmath, amsthm, amssymb}
\usepackage{authblk}

\newbool{techreportbool}
\setbool{techreportbool}{false}

\newcommand{\techreport}[1]{\ifbool{techreportbool}{#1}{}}
\newcommand{\1}{{\rm 1\hspace*{-0.4ex}\rule{0.1ex}{1.52ex}\hspace*{0.2ex}}}
\newcommand{\ind}[1]{\1_{\set{#1}}}
\newcommand{\eat}[1]{}

\newcommand{\set}[1]{\{#1\}}
\newcommand{\red}[1]{\textcolor{red}{#1}}
\newcommand{\blue}[1]{\textcolor{blue}{#1}}
\newcommand{\rr}[1]{\ensuremath{\mech_{rr(#1)}}}
\newcommand{\mrr}[1]{\ensuremath{M_{rr(#1)}}}
\newcommand{\diffpriv}[1][\epsilon]{\ensuremath{{#1\text{-}}\mathfrak{diffpriv}}}
\newcommand{\mechdrop}[1][p]{\ensuremath{\mech_{\text{drop}(#1)}}}
\newcommand{\mechsamp}[1][p]{\ensuremath{\mech_{\text{sample}(#1)}}}
\newcommand{\matdrop}[1][p]{\ensuremath{M_{\text{drop}(#1)}}}
\newcommand{\matsamp}[1][p]{\ensuremath{M_{\text{sample}(#1)}}}

\DeclareMathOperator{\inp}{\mathbb{I}}  
\DeclareMathOperator{\mech}{\mathfrak{M}}  
\DeclareMathOperator{\randalg}{\mathcal{A}}  
\DeclareMathOperator{\priv}{\mathfrak{Priv}}
\DeclareMathOperator{\tdom}{\mathcal{TUP}}

\DeclareMathOperator{\range}{range}
\DeclareMathOperator{\domain}{domain}
\DeclareMathOperator{\choice}{choice}
\DeclareMathOperator{\rowcone}{rowcone}

\DeclareMathOperator{\cnf}{CNF}
\DeclareMathOperator{\data}{data}
\DeclareMathOperator{\partition}{\mathbb{P}}

\DeclareMathOperator{\parity}{parity}
\DeclareMathOperator{\hamming}{ham}
\DeclareMathOperator{\sign}{sign}
\DeclareMathOperator{\frapp}{\gamma-FRAPP}
\DeclareMathOperator{\mechsort}{\mech_{\text{sort}}}
\DeclareMathOperator{\matsort}{M_{\text{sort}}}

\newtheorem{theorem}{Theorem}[section]
\newtheorem{definition}[theorem]{Definition}
\newtheorem{lemma}[theorem]{Lemma}
\newtheorem{corollary}[theorem]{Corollary}

\newtheorem{example}[theorem]{Example}
\newtheorem{axiom}[theorem]{Axiom}

\begin{document}
\title{A Framework for Extracting Semantic Guarantees from Privacy Definitions\techreport{ (Supplementary Material)}}
\author{Bing-Rong Lin}
\author{Daniel Kifer}
\affil{Department of Computer Science \& Engineering, Pennsylvania State University, \{blin,dkifer\}@cse.psu.edu}

\maketitle
\begin{abstract}
In the field of privacy preserving data publishing, many privacy definitions have been proposed. Privacy definitions are like contracts that guide the behavior of an algorithm that takes in sensitive data and outputs non-sensitive \emph{sanitized} data. In most cases, it is not clear what these privacy definitions actually guarantee.

In this paper, we propose the first (to the best of our knowledge)
general framework for extracting semantic guarantees from privacy
definitions. These guarantees are expressed as bounds on the change in
beliefs of Bayesian attackers.

In our framework, we first restate a privacy definition in the language
of set theory and then extract from it a geometric object called the
\emph{row cone}. 
Intuitively, the row cone captures all the ways an attacker's prior
beliefs can be turned into posterior beliefs after observing an
output of an algorithm satisfying that privacy definition.
The row cone is a convex set and therefore has an associated set of linear inequalities. Semantic guarantees are generated by interpreting these inequalities as probabilistic statements.

Our framework can be applied to privacy definitions or to individual
algorithms to identify the types of inferences they prevent. In this
paper we use our framework to analyze the semantic privacy guarantees provided
by randomized response, FRAPP, and
several algorithms that add integer-valued noise to their inputs.
\end{abstract}
\section{Introduction}\label{sec:intro}

The ultimate goal of statistical privacy is to produce statistically useful sanitized data from sensitive datasets.
It has two main research thrusts: 
 developing/analyzing privacy definitions for protecting sensitive datasets, and designing algorithms that satisfy a given privacy definition while producing useful outputs. The algorithm design problem is well-posed and is the focus of most of the research activity. By contrast, privacy is a very subtle topic for which formalizing concepts is extremely challenging.

%
%

Analysis of privacy is important when organizations prepare to release data. When choosing a privacy definition (which subsequently guides the design of an algorithm for producing sanitized data), an organization is interested in questions such as the following. What classes of information does the privacy definition protect?
Does it offer protections that the organization is interested in? Does it offer additional protections that are not necessary (meaning that the sanitized data will contain too much distortion)? What formal protections are provided by intuitive approaches to privacy that have been collected over the past 50 years \cite{warner65:randomizedResponse}? 

In this paper we present the first (to the best of our knowledge) framework for extracting semantic guarantees from privacy definitions and individual algorithms. 
That is, instead of answering narrow questions such as ``does privacy definition Y protect X?'' the goal is to answer the more general question ``what does privacy definition Y protect?''
This lets an organization judge whether a privacy definition is too weak or too strong for its needs.

The framework can be used to extract guarantees about changes in the beliefs of  computationally unbounded Bayesian attackers. 
We apply our framework to several privacy definitions and algorithms for which we derived previously unknown privacy semantics -- these include randomized response \cite{warner65:randomizedResponse}, FRAPP \cite{shipraH05:frapp}/PRAM \cite{gouweleeuwKWW98:PRAM}, and several algorithms that add integer-valued noise to their inputs. It turns out that their Bayesian semantic guarantees are a consequence of their ability to protect various notions of \emph{parity} of a dataset. Since parity is frequently not a sensitive piece of information, we also show how privacy definitions can be relaxed when they are too strong.

Currently our framework requires a certain level of mathematical skill from the user. Tools and methodologies for reducing this burden are part of our future plans. For example, the large class of privacy definitions proposed by \cite{pufferfish} are a direct consequence of this framework; they were specifically designed to bypass the difficult parts of the framework.

The framework is based on a partial axiomatization of privacy \cite{privaxioms,privaxioms:journal}. However, the only ideas we need from \cite{privaxioms,privaxioms:journal} are two axioms and a anecdote about 2 specific  privacy definitions that do not satisfy the privacy axioms\footnote{For example, one axiom states that building a histogram from sanitized data and then releasing the histogram (instead of the sanitized data) is acceptable. This idea is widely accepted by designers of privacy definitions, yet many privacy definitions inadvertently fail to satisfy that axiom \cite{privaxioms,privaxioms:journal}.} but which imply other privacy definitions that do.

Given any privacy definition, the first step of the framework is to manipulate it using the two axioms to obtain a related privacy definition that we call the \emph{consistent normal form} (the axioms essentially remove implicit assumptions in the original privacy definition). From the consistent normal form we extract an object called the \emph{row cone} which, intuitively,  captures all the ways in which an attacker's prior belief can be turned into a posterior belief after observing an output of an algorithm that satisfies the given privacy definition. Mathematically, the row cone is represented as a convex set  and therefore has an associated collection of linear inequalities. We extract semantic guarantees by re-interpreting the coefficients of the linear inequalities as probabilities and re-interpreting the  linear inequalities themselves as statements about probabilities.


Our contributions are:
\begin{itemize}
\item A novel framework that introduces the concepts of consistent normal form and row cone and uses them to extract semantic guarantees from privacy definitions. 
\item Several applications of our framework, from which we extract previously unknown semantic guarantees for randomized response, FRAPP/PRAM, and several algorithms that add integer-valued noise to their inputs (including the Skellam distribution \cite{skellamdist} and a generalization of the geometric mechanism  \cite{universallyUtilityMaximizingPrivacyMechanisms}) .
\end{itemize}

The remainder of the paper is organized as follows. We provide a detailed overview of our approach in Section \ref{sec:overview}. We discuss related work in Section \ref{sec:related}. In Section \ref{sec:cnf}, we review two privacy axioms from \cite{privaxioms,privaxioms:journal} and then we show how to use them to obtain the \emph{consistent normal form}, which removes some implicit assumptions from a privacy definition. Using the consistent normal form, we formally define the row cone (a fundamental geometric object we use for extracting semantic guarantees) in Section \ref{subsec:cnf:rowcone}. 
In Section \ref{sec:applications}, we then apply our framework to extract new semantic guarantees for randomized response (Section \ref{sec:applications:rr}), FRAPP/PRAM (Section \ref{sec:applications:frapp}), and noise addition algorithms (Section \ref{sec:applications:addnoise}). We discuss relaxations of privacy definitions in Section \ref{sec:applications:relax} and present conclusions in Section \ref{sec:conclusions}.

%
%
%
%

\section{The Bird's-Eye View}\label{sec:overview}
We first present some basic concepts in Section \ref{sec:overview:basic} and then provide a high-level overview of our framework in Section \ref{sec:overview:method}.
\subsection{Basic Concepts}\label{sec:overview:basic}

Let $\inp=\set{D_1, D_2,\dots}$ be the set of all possible databases. We now explain the roles played by data curators, attackers,  and privacy definitions.

\textbf{The Data Curator} owns a dataset $D\in\inp$. This dataset contains information about individuals, business secrets, etc., and therefore cannot be published as is. Thus the data curator will first choose a privacy definition and then an algorithm $\mech$ that satisfies this definition. The data curator will  apply $\mech$ to the data $D$  and will then release its output (i.e. $\mech(D)$), which we refer to as the \emph{sanitized output}. We assume that the schema of $D$ is public knowledge and that the data curator will disclose the privacy definition, release all details of the algorithm $\mech$ (except for the specific values of the random bits it used), and release the sanitized output $\mech(D)$.

\textbf{The Attacker} will use the information about the schema of $D$, the sanitized output $\mech(D)$, and knowledge of the algorithm $\mech$ to make inferences about the sensitive information contained in $D$. In our model, the attacker is computationally unbounded. The attacker may also have side information -- in the literature this is often expressed in terms of a prior distribution over possible datasets $D_i\in\inp$. In this paper we are mostly interested in guarantees against attackers who reason probabilistically and so we also assume that an attacker's side information is encapsulated in a prior distribution.

\textbf{A Privacy Definition} is often expressed as a set of algorithms that we trust (e.g., \cite{warner65:randomizedResponse,privaxioms:journal}), or a set of constraints on how an algorithm behaves (e.g., \cite{diffprivacy}), or on the type of output it produces (e.g., \cite{samarati01:microdata}). Note that treating a privacy definition as a set of algorithms is the more general approach that unifies all of these ideas \cite{privaxioms:journal} -- if a set of constraints is specified, a privacy definition  becomes the set of algorithms that satisfy  those constraints; if outputs in a certain form (such as $k$-anonymous tables \cite{samarati01:microdata})  are required, a privacy definition becomes the set of algorithms that produce those types of outputs, etc. The reason that a privacy definition should be viewed as a set of algorithms is that it allows us to manipulate privacy definitions using set theory. 

Formally, a privacy definition is 
the set of algorithms \emph{with the same input domain} that are trusted to produce nonsensitive outputs from sensitive inputs.
 We therefore use the notation $\priv$ to refer to a privacy definition and $\mech\in\priv$ to mean that the algorithm $\mech$ satisfies the privacy definition $\priv$.  

The data curator will choose a privacy definition based on what it can guarantee about the privacy of sensitive information. If a privacy definition offers too little protection (relative to the application at hand), the data curator will avoid it because sensitive information may end up being disclosed, thereby causing harm to the data curator. On the other hand, if a privacy definition offers too much protection, the resulting sanitized data may not be useful for statistical analysis. Thus it is important for the data curator to know exactly what a privacy definition guarantees.

\textbf{The Goal} is to determine what guarantees a privacy definition provides. In this paper, when we discuss semantic guarantees, we are interested in the guarantees that always hold regardless of what sanitized output is produced by an algorithm satisfying that privacy definition. We focus on computationally unbounded Bayesian attackers and look for bounds on how much their beliefs change after seeing sanitized data. It is important to note that the guarantees will depend on assumptions about the attacker's prior distribution. This is necessary, since it is well-known that without any assumptions, it is impossible to preserve privacy while providing useful sanitized data \cite{naorgame,nfl,pufferfish}.

\subsection{Overview}\label{sec:overview:method}
In a nutshell, our approach is to represent deterministic and randomized algorithms as matrices (with possibly infinitely many rows and columns) and to represent privacy definitions as sets of algorithms and hence as sets of matrices. If our goal is to analyze only a single algorithm, we simply treat it as a privacy definition (set) containing just one algorithm. The steps of our framework then require us to normalize the privacy definitions to remove some implicit assumptions (we call the result the \emph{consistent normal form}), extract the set of all rows that appear in the resulting matrices (we call this the \emph{row cone}), find linear inequalities describing those rows, reinterpret the coefficients of the linear inequalities as probabilities, and reinterpret the inequalities themselves as statements about probabilities to get semantic guarantees. In this section, we describe these steps in more detail and defer a technical exposition of the consistent normal form and row cone to Section \ref{sec:cnf}.

\subsubsection{Algorithms as matrices.}\label{sec:overview:matrix}
Since our approach relies heavily on linear algebra, it is convenient to represent algorithms as matrices. \emph{Every} algorithm $\mech$, randomized or deterministic, that runs on a digital computer can be viewed as a matrix in the following way.
An algorithm has an input domain $\inp=\set{D_1, D_2,\dots}$ consisting of datasets $D_i$, and a range $\set{\omega_1,\omega_2,\dots}$. The input domain $\inp$ and $\range(\mech)$ are necessarily countable because each $D_i\in\inp$ and $\omega_j\in\range(\mech)$ must be encoded as finite bit strings. The probability $P(\mech(D_i)=\omega_j)$ is well defined for both randomized and deterministic algorithms. The \emph{matrix representation} of an algorithm is defined as follows  (see also Figure \ref{fig:matrix}).
\begin{definition}[Matrix representation of $\mech$]\label{def:matrix} Let $\mech$ be a deterministic or randomized algorithm with domain $\inp=\set{D_1, D_2,\dots}$ and range  $\set{\omega_1,\omega_2,\dots}$. The matrix representation of $\mech$ is a (potentially infinite) matrix whose columns are indexed by $\inp$ and rows are indexed by $\range(\mech)$. The value of each entry $(i,j)$ is the quantity $P(\mech(D_j)=\omega_i)$.
\end{definition}

\begin{figure}
\begin{eqnarray*}
\bordermatrix{
 & \red{D_1} & \red{D_2} & \red{\dots} \cr
\blue{\omega_1} & P(\mech(D_1)=\omega_1) & P(\mech(D_2)=\omega_1) & \dots \cr
\blue{\omega_2} & P(\mech(D_1)=\omega_2) & P(\mech(D_2)=\omega_2) & \dots \cr
\blue{\omega_3} & P(\mech(D_1)=\omega_3) & P(\mech(D_2)=\omega_3) & \dots \cr
\vdots & \vdots &\vdots& \vdots \cr
}
\end{eqnarray*}
\label{fig:matrix}\caption{The matrix representation of $\mech$. Columns are indexed by datasets $\in\domain(\mech)$ and rows are indexed by outputs $\in\range(\mech)$.}
\end{figure}

\subsubsection{Consistent Normal Form of Privacy Definitions.}
Recall from Section \ref{sec:overview:basic} that we take the unifying view that a privacy definition is a set of algorithms (i.e., the set of algorithms that satisfy  certain constraints or produce certain types of outputs). 

Not surprisingly, there are many sets of algorithms that do not meet common expectations of what a privacy definition is \cite{privaxioms:journal}. For example, suppose that we decide to trust an algorithm $\mech$ to generate sanitized outputs from the sensitive input data $D$. Suppose we know that a researcher wants to run algorithm $\randalg$ on the sanitized data to build a histogram. If we are willing to release the sanitized output $\mech(D)$ publicly, then we should also be willing to release $\randalg(\mech(D))$. That is, if we trust $\mech$ then we should also trust $\randalg\circ\mech$ (the composition of the two algorithms). In other words, if $\mech\in \priv$, for some privacy definition $\priv$, then $\randalg\circ\mech$ should also be in $\priv$.

Many privacy definitions in the literature do not meet criteria such as this \cite{privaxioms:journal}. That is, $\mech$ may explicitly satisfy a given privacy definition but $\randalg\circ\mech$ may not. However, since the output of $\mech$ is made public and anyone can run $\randalg$ on it, these privacy definitions come with the implicit assumption that the composite algorithm $\randalg\circ\mech$ should be trusted.

Thus, given a privacy definition $\priv$, we first must expand it to include all of the algorithms we should trust (via a new application of privacy axioms).  The result of this expansion  is called the \emph{consistent normal form} and is denoted by $\cnf(\priv)$. Intuitively, $\cnf(\priv)$ is the complete set of algorithms we should trust if we accept the privacy definition $\priv$ and the privacy axioms. We describe the consistent normal form in full technical detail in Section \ref{ref:cnf:cnf}.


\subsubsection{The Row Cone}\label{sec:overview:rowcone}
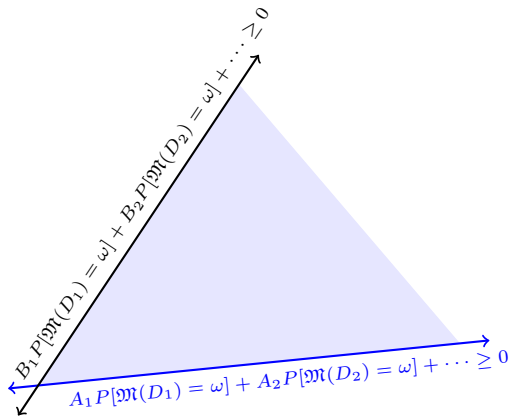
\begin{figure}[t]
\center
\begin{tikzpicture}
[scale=4, font=\scriptsize,
extreme/.style={shape=circle,draw=black!100, fill=blue!50!black!50, minimum size=2mm, inner sep=0pt, thick}]
\coordinate (zz1) at (0,0);
\coordinate (pp1) at (1.4, 1.4/10) ;
\coordinate (qq1) at (1/1.5, 1);
\draw[fill=blue!10, color=blue!10] (zz1)--(pp1)--(qq1)--(zz1);
\draw[<->, thick,blue] (-0.1, -0.1/10) to  node[sloped, anchor=north west, pos=0.1]{$A_1P[\mech(D_1)=\omega] + A_2 P[\mech(D_2)=\omega]+\dots \geq 0$}  (1.5,1.5/10);
\draw[<->, thick] (-0.1/1.5, -0.1) to  node[sloped, anchor=south west, pos=0.05]{$B_1P[\mech(D_1)=\omega] + B_2 P[\mech(D_2)=\omega]+\dots \geq 0$}  (1.1/1.5,1.1);
\end{tikzpicture}
\label{fig:rowcone}
\caption{An example of a row cone (shaded) and its defining linear inequalities.}
\end{figure}
Recall that we represent algorithms as matrices (Definition \ref{def:matrix}) and privacy definitions as sets of algorithms. Therefore $\cnf(\priv)$ is really a \emph{set of matrices}. 
The row cone of $\priv$, denoted by $\rowcone(\priv)$, is the set of vectors of the form $c\vec{x}$ where $c\geq 0$ and $\vec{x}$ is a row of a matrix corresponding to some algorithm $\mech\in\cnf(\priv)$.

How does the row cone capture the semantics of $\priv$? Suppose $\mech\in\cnf(\priv)$ is one of the algorithms that we trust. Let $D$ be the true input dataset and let $\omega = \mech(D)$ be the sanitized output that we publish. A Bayesian attacker who sees output $\omega$ and is trying to derive sensitive information will need to compute the posterior distribution $P(\data=D_i~|~\mech(\data)=\omega)$ for all datasets $D_i$. This posterior distribution is a function of the attacker's prior $P(\data=D_i)$ and the vector of probabilities:
$$[P(\mech(D_1)=\omega),~~ P(\mech(D_2)=\omega),~~ \dots]$$
This vector belongs to $\rowcone(\priv)$ because it corresponds to some row of the matrix representation of $\mech$ (i.e., the row associated with output $\omega$). Note that multiplying this vector by any positive constant will leave the attacker's posterior beliefs unchanged. The row cone is essentially the set of all such probability vectors that the attacker can ever see if we use a trusted algorithm (i.e. something belonging to $\cnf(\priv)$); therefore it determines all the ways an attacker's beliefs can change (from prior to posterior). 

Thus constraints satisfied by the row cone are also constraints on how prior probabilities could be turned into posterior probabilities.
In Figure \ref{fig:rowcone} we illustrate a row cone in 2 dimensions (i.e. the input domain consists of only 2 datasets). Each vector in the row cone is represented as a point in 2-d space. 
Later in the paper, it will turn out that
 the row cone is always a convex set and hence has an associated system of linear inequalities (corresponding to the intersection of halfspaces containing the row cone)  as shown in Figure \ref{fig:rowcone}.

\subsubsection{Extracting Semantic Guarantees From the Row Cone}
The row cone is a convex set (in fact, a convex cone) and so satisfies a set of linear inequalities having the forms \cite{Boyd:convex}:
\begin{eqnarray*}
A_1 P(\mech(D_1)=\omega) + A_2 P(\mech(D_2)=\omega) + \dots &\geq& 0 \text{ or}\\
A_1 P(\mech(D_1)=\omega) + A_2 P(\mech(D_2)=\omega) + \dots &=& 0\text{ or }\\
A_1 P(\mech(D_1)=\omega) + A_2 P(\mech(D_2)=\omega) + \dots &>& 0
\end{eqnarray*}
that must hold for all trusted algorithms $\mech\in\cnf(\priv)$ and sanitized outputs $\omega\in\range(\mech)$ they can produce.
The key insight is that we can re-interpret the magnitude of the coefficients $|A_1|, |A_2|,\dots$ of these linear inequalities as probabilities (dividing by $|A_1|+|A_2| + ...$ if necessary) and then re-interpret the linear inequalities as statements about prior and posterior probabilities of an attacker. We give a detailed example in Section \ref{sec:applications:rr}, where we apply our framework to randomized response. The semantic guarantees we extract then have the form: ``if the attacker's prior belongs to set $X$ then here are restrictions on the posterior probabilities the attacker can form'' (note that avoiding any assumptions on prior probabilities/knowledge is not possible if the goal is to release even marginally useful sanitized data \cite{naorgame,nfl,pufferfish}).

\section{Related Work}\label{sec:related}
\subsection{Evaluating Privacy}
Research in statistical privacy mainly focuses on developing privacy definitions and algorithms for publishing sanitized data (i.e., nonsensitive information) derived from sensitive data. To the best of our knowledge, this paper provides the first framework for extracting semantic guarantees from privacy definitions. Other work on evaluating privacy definitions looks for the presence or absence of specific vulnerabilities in privacy definitions or sanitized data. 

In the official statistics community, re-identification experiments are performed to assess whether individuals can be identified from sanitized data records \cite{nowsurvey}. In many such experiments, software is used to link sanitized data records to the original records \cite{winkler04reident}. Reiter \cite{reiterDisclosureRisk} provides a detailed example of how to apply the decision-theoretic framework of Duncan and Lambert \cite{duncanL89:disclosure} to measure disclosure risk. There are many other methods for assessing privacy for the purposes of official statistics; for surveys, see \cite{nowsurvey,willenborgW96:disclosure,willenborg00:elements}.

Other work in statistical privacy seeks to identify and exploit specific types of weaknesses that may be present in privacy definitions. 
Dwork and Naor \cite{naorgame} formally proved that it is not possible to publish anonymized data that prevents an attacker from learning information about people who are not even part of the data unless the anonymized data has very little utility or some assumptions are made about the attacker's background knowledge. 
 Lambert \cite{lambert93:disclosure} suggests that harm can occur even when an individual is linked to the wrong anonymized record (as long as the attacker's methods are plausible). Thus one of the biggest themes in privacy is preventing an attacker from linking an individual to an ``anonymized'' record \cite{dalenius86:haystack}, possibly using publicly available data \cite{sweeney02:kAnon} or other knowledge \cite{ashwin06:ldiversity}. Dinur and Nissim \cite{dinur:privacy} and later Dwork et al. \cite{dwork07:limits} showed fundamental limits to the amount of information that can be released even under very weak privacy definitions (information-theoretically and computationally \cite{Dwork09STOCOnTheComplexity}). These attacks generally work by removing noise that was added in the sanitization process \cite{karguptaDWS03:randomPerturbation,huangDC04:deriving,liu08:noiseattacks}. Ganta et al. \cite{composition08ranjit} demonstrated a composition attack where independent anonymized data releases can  be combined to breach privacy; thus a desirable property of privacy definitions is to have privacy guarantees degrade gracefully in the presence of multiple independent releases of sanitized data. The minimality attack \cite{wong:minimality} showed that privacy definitions must account for attackers who know the algorithm used to generate sanitized data; otherwise the attackers may reverse-engineer the algorithm to cause a privacy breach. The de Finetti attack \cite{kifer09attack} shows that privacy definitions based on statistical models are susceptible to attackers who make inferences using different models and use those inferences to undo the anonymization process; thus it is important to consider a wide range of inference attacks. Also, one should consider the possibility that an attacker may be able to manipulate data (e.g. by creating many new accounts in a social network) prior to its release to help break the subsequent anonymization of the data \cite{backstrom07:attackSocialNetwork}. Note also that privacy concerns can also be associated with aggregate information such as trade secrets (and not just rows in a table) \cite{cliftondefin,pufferfish}.

\subsection{Privacy Definitions}
In this section, we review some privacy definitions that will be examined in this paper.

\subsubsection{Syntactic Privacy Definitions}\label{sec:related:syntactic}
A large class of privacy definitions places restrictions on the format of the output of a randomized algorithm. Such privacy definitions are known as \emph{syntactic privacy definitions}. The prototypical syntactic privacy definition is $k$-anonymity \cite{samarati01:microdata,sweeney02:kAnon}.
In the $k$-anonymity model, a data curator first designates a set of attributes to be the \emph{quasi-identifier}. An algorithm $\mech$ then satisfies $k$-anonymity if its input is a table $T$ and its output is another table $T^*$ that is $k$-\emph{anonymous} -- for every tuple in $T^*$, there are $k-1$ other tuples that have the same value for the quasi-identifier attributes \cite{samarati01:microdata,sweeney02:kAnon}.
Algorithms satisfying $k$-anonymity typically work by generalizing (coarsening) attribute values. For example, if the data contains an attribute representing the age of a patient, the algorithm could generalize this attribute into age ranges of size $10$ (e.g., $[0-9], [10-19]$, etc.) or ranges of size $20$, etc. Quasi-identifier attributes are repeatedly generalized until a table $T^*$ satisfying $k$-anonymity is produced.
The rationale behind $k$-anonymity is that quasi-identifier attributes may be recorded in publicly available datasets. Linking those datasets to the original table $T$ may allow individual records to be identified, but linking to the $k$-anonymous table $T^*$ will not result in unique matches.
\eat{
\subsubsection{Syntactic Privacy Definitions}\label{sec:related:syntactic}
A large class of privacy definitions places restrictions on the format of the output of a randomized algorithm. Such privacy definitions are known as \emph{syntactic privacy definition}. The prototypical syntactic privacy definition is $k$-anonymity \cite{samarati01:microdata,sweeney02:kAnon}.

In the $k$-anonymity model, a data curator first designates a set of attributes to be the \emph{quasi-identifier}. An algorithm $\mech$ then satisfies $k$-anonymity if its input is a table $T$ and its output is another table $T^*$ that is $k$-\emph{anonymous} -- for every tuple in $T^*$, there are $k-1$ other tuples that have the same value for the quasi-identifier attributes \cite{samarati01:microdata,sweeney02:kAnon}. For example, consider the input table in Figure \ref{fig:kana} where the attributes  \{``nationality'', ``age'', ``zip code''\} have been designated as the quasi-identifier. Given the input table from Figure \ref{fig:kana}, an algorithm $\mech$ that satisfies $3$-anonymity could output the $3$-anonymous table in Figure \ref{fig:kanb}. Algorithms satisfying $k$-anonymity typically work by generalizing (coarsening) attribute values. For example, the age attribute may be generalized into an age range of size $10$ (e.g., $[0-9], [10-19]$, etc.) or ranges of size $20$. Quasi-identifier attributes are repeatedly generalized a table $T^*$ satisfying $k$-anonymity is produced.

The rationale behind $k$-anonymity is that quasi-identifier attributes may be recorded in publicly available datasets. Linking those datasets to the original table $T$ may allow individual records to be identified, but linking to the $k$-anonymous table $T^*$ will not result in unique matches.

\begin{figure}[t]
\small{
\centering
\subfigure[Original Table]{\label{fig:kana}
\begin{tabular}{|c|c|c|c|}\hline
Zip Code & Age & Nationality & Disease\\\hline
13053 & 25 & Indian & Cold\\\hline
13068 & 39 & Russian & Stroke\\\hline
13053 & 27 & American & Flu\\\hline
14850 & 43 & American & Cancer\\\hline
14850 & 57 & Russian & Cancer\\\hline
14853 & 40 & Indian & Cancer\\\hline
\end{tabular}
}
\subfigure[$3$-Anonymous Table]{\label{fig:kanb}
\begin{tabular}{|c|c|c|c|}\hline
Zip Code & Age & Nationality & Disease\\\hline
130** & $<40$ & * & Cold\\\hline
130** & $<40$ & * & Stroke\\\hline
130** & $<40$ & * & Flu\\\hline\hline
1485* & $\geq 40$ & * & Cancer\\\hline
1485* & $\geq 40$ & * & Cancer\\\hline
1485* & $\geq 40$ & * & Cancer\\\hline
\end{tabular}
}
\caption{Example of $k$-Anonymity}\label{fig:kan}
}
\end{figure}
Many variants of $k$-anonymity exist to handle different applications (such as social networks \cite{wuSNSurvey}) or to address vulnerabilities such as the need to protect sensitive attributes \cite{ashwin06:ldiversity,li:tclose}. For further details, see \cite{nowsurvey,fungSurvey}. We will examine these approaches in Section \ref{sec:cnf:ex:syntactic}.

}

\subsubsection{Randomized Response}
Randomized response is a technique developed by Warner \cite{warner65:randomizedResponse} to deal with privacy issues when answering sensitive questions in a face-to-face survey. There are many variations of randomized response. One of the most popular is the following: a respondent answers truthfully with probability $p$ and lies with probability $(1-p)$, thus ensuring that the interviewer is not certain about the respondent's true answer. 
Thus the scenario where we can apply randomized response is the following:
 the input table $T$ contains 1 binary attribute and $k$ tuples. We can apply randomized response to $T$ by applying the following procedure to each tuple: flip the binary attribute with probability $1-p$. The perturbed table, which we call $T^*$, is then released. Note that randomized response is a privacy definition that consists of exactly one algorithm: the algorithm that flips each bit independently with probability $1-p$. We use our framework to extract semantic guarantees for randomized response in Section \ref{sec:applications:rr}.

\subsubsection{PRAM and FRAPP}\label{sec:related:frapp}
PRAM \cite{gouweleeuwKWW98:PRAM} and FRAPP \cite{shipraH05:frapp} are generalizations of randomized response to tables where tuples can have more than one attribute and the attributes need not be binary. PRAM can be thought of as a set of algorithms that independently perturb tuples, while FRAPP is an extension of PRAM that adds formally specified privacy restrictions to these perturbations.

Let $\tdom$ be the domain of all tuples. Each algorithm $\mech_Q$ satisfying PRAM is associated with a transition matrix $Q$ of transition probabilities, where the entry $Q_{b,a}$ is the probability $P(a\rightarrow b)$ that the algorithm changes a tuple with value $a\in\tdom$  to the value $b\in \tdom$. Given a dataset $D=\set{t_1,\dots,t_n}$, the algorithm $\mech_Q$ assigns a new value to the tuple $t_1$ according to the transition probability matrix $Q$, then it independently assigns a new value to the tuple $t_2$, etc. It is important to note that the matrix representation of $\mech_Q$ (as discussed in Section \ref{sec:overview:matrix}) \emph{is not the same} as the transition matrix $Q$. As we will discuss in Section \ref{sec:applications:frapp}, the relationship between the two is that the matrix representation of $\mech_Q$ is equal to $\bigoplus_n Q$, where $\bigoplus$ is the Kronecker product.

FRAPP, with privacy parameter $\gamma$, imposes a restriction on these algorithms. This restriction, known as $\gamma$-amplification \cite{evfimievski:limiting:breaches}, requires that the transition matrices $Q$ satisfy the constraints $\frac{Q_{b,a}}{Q_{c,a}}\leq \gamma$ for all $a,b,c\in\tdom$. This condition can also be phrased as $\frac{P(b\rightarrow a)}{P(c\rightarrow a)}\leq \gamma$. 

\subsubsection{Differential Privacy}
Differential privacy \cite{diffprivacy,dwork06Calibrating} is defined as follows:
\begin{definition} \label{def:diffpriv}A randomized algorithm $\mech$ satisfies \emph{$\epsilon$-differential privacy} if for all pairs of databases $T_1,T_2$ that differ only in the value of one tuple and for all sets $S$, $P(\mech(T_1)\in S)\leq e^\epsilon P(\mech(T_2)\in S)$.
\end{definition}

Differential privacy guarantees that the sanitized data that is output has little dependence on the value of any individual's tuple (for small values of $\epsilon$). It is known to be a weaker privacy definition than randomized response. Using our framework, we show in Section \ref{sec:applications:rrdiffp} that the difference between the two is that randomized response provides additional protection for the parity of every subset of the data.

\section{Consistent Normal Form and the Row Cone}\label{sec:cnf}
In this section, we formally define the \emph{consistent normal form} $\cnf(\priv)$ and $\rowcone(\priv)$ of a privacy definition $\priv$ and derive some of their important properties. These properties will later be used in Section \ref{sec:applications} to extract novel semantic guarantees for randomized response, FRAPP/PRAM, and for several algorithms (including the geometric mechanism \cite{universallyUtilityMaximizingPrivacyMechanisms}) that add integer random noise to their inputs. 


\subsection{The Consistent Normal Form}\label{ref:cnf:cnf}
Recall that we treat any privacy definition $\priv$ as the set of algorithms with the same input domain. For example, we view $k$-anonymity as the set of all algorithms that produce $k$-anonymous tables \cite{samarati01:microdata}. As noted in \cite{privaxioms:journal}, such a set can often have inconsistencies.
For example, consider an algorithm $\mech$ that first transforms its input into a $k$-anonymous table and then builds a statistical model from the result and outputs the parameters of that model. Technically, this algorithm $\mech$ does not satisfy $k$-anonymity because ``model parameters'' are not a ``$k$-anonymous table.'' 
However,  it would be strange if the data curator decided that releasing a $k$-anonymous table was acceptable but releasing a model built solely from that table (without any side information) was not acceptable. 
The motivation for the consistent normal form is that it makes sense to enlarge the set $\priv$ by adding $\mech$ into this set. 

It turns out that privacy axioms can help us identify the algorithms that should be added. For this purpose, we will use the following two axioms from \cite{privaxioms:journal}.

\begin{axiom}[Post-processing \cite{privaxioms:journal}]\label{ax:post}
Let $\priv$ be a privacy definition (set of algorithms). Let $\mech\in \priv$ and let $\randalg$ be any algorithm whose domain contains the range of $\mech$ and whose random bits are independent of the random bits of $\mech$. Then the composed algorithm $\randalg\circ\mech$  (which first runs $\mech$ and then runs $\randalg$ on the result) should also belong to $\priv$.\footnote{Note that if $\mech_1$ and $\mech_2$ are algorithms with the same range and domain such that $P(\mech_1(D_i)=\omega)=P(\mech_2(D_i)=\omega)$ for all $D_i\in\inp$ and $\omega\in\range(\mech_1)$, then we consider $\mech_1$ and $\mech_2$ to be equivalent.}
\end{axiom}

Note that Axiom \ref{ax:post} prevents algorithm $\randalg$ from using side information since its only input is $\mech(D)$.

\begin{axiom}[Convexity \cite{privaxioms:journal}]\label{ax:conv}
Let $\priv$ be a privacy definition (set of algorithms). Let $\mech_1\in \priv$ and $\mech_2\in \priv$ be two algorithms satisfying this privacy definition. Define the algorithm $\choice^p_{\mech_1,\mech_2}$ to be the algorithm that runs $\mech_1$ with probability $p$ and $\mech_2$ with probability $1-p$. Then $\choice^p_{\mech_1,\mech_2}$ should belong to $\priv$.
\end{axiom}

The justification in \cite{privaxioms:journal} for the convexity axiom (Axiom \ref{ax:conv}) is the following.  If both $\mech_1$ and $\mech_2$ belong to $\priv$, then both are trusted to produce sanitized data from the input data. That is, the outputs of $\mech_1$ and $\mech_2$ leave some amount of uncertainty about the input data. If the data curator randomly chooses between $\mech_1$ and $\mech_2$, the sensitive input data is protected by two layers of uncertainty: the original uncertainty added by either $\mech_1$ or $\mech_2$ and the uncertainty about which algorithm was used. Further discussion can be found in \cite{privaxioms:journal}.

Using these two axioms, we define the \emph{consistent normal form} as follows:\footnote{Note that this is a more general and useful idea than the observation in \cite{privaxioms:journal} that 2 specific variants of differential privacy do not satisfy the axioms but do imply a third variant that does satisfy the axioms.}
\begin{definition}\emph{($\cnf$).}\label{def:cnf}
Given a privacy definition $\priv$, its consistent normal form, denoted by $\cnf(\priv)$, is the smallest set of algorithms that contains $\priv$ and satisfies Axioms \ref{ax:post} and \ref{ax:conv}. 
\end{definition}

Essentially, the consistent normal form uses Axioms \ref{ax:post} and \ref{ax:conv} to turn implicit assumptions about which algorithms we trust into explicit statements -- 
 if we are prepared to trust any $\mech\in\priv$ then by Axioms \ref{ax:post} and \ref{ax:conv} we should also trust any $\mech\in\cnf(\priv)$. The set $\cnf(\priv)$ is also the largest set of algorithms we should trust if we are prepared to accept $\priv$ as a privacy definition. 

The following theorem provides a useful characterization of $\cnf(\priv)$ that will help us analyze privacy definitions in Section \ref{sec:applications}.  
\begin{theorem}\label{thm:closure}
Given a privacy definition $\priv$, its consistent normal form $\cnf(\priv)$ is equivalent to the following. 
\begin{enumerate}
\item Define $\priv^{(1)}$ to be the set of all (deterministic and randomized algorithms) of the form $\randalg\circ\mech$, where $\mech\in\priv$, $\range(\mech)\subseteq\domain(\randalg)$, and the random bits of $\randalg$ and $\mech$ are independent of each other. 
\item For any positive integer $n$, finite sequence $\mech_1,\dots,\mech_n$ and probability vector $\vec{p}=(p_1,\dots,p_n)$, use the notation $\choice^{\vec p}(\mech_1,\dots,\mech_n)$ to represent the algorithm that runs $\mech_i$ with probability $p_i$. 
Define $\priv^{(2)}$ to be the set of all algorithms of the form $\choice^{\vec{p}}(\mech_1,\dots,\mech_n)$ where $n$ is a positive integer, $\mech_1,\dots,\mech_n\in\priv^{(1)}$, and $\vec{p}$ is a probability vector.
\item Set $\cnf(\priv)=\priv^{(2)}$.
\end{enumerate}
\end{theorem}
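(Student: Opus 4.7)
The plan is to prove $\cnf(\priv)=\priv^{(2)}$ by showing two things: (i) $\priv^{(2)}$ contains $\priv$ and is closed under both Axiom \ref{ax:post} (post-processing) and Axiom \ref{ax:conv} (convexity), which forces $\cnf(\priv)\subseteq\priv^{(2)}$ by minimality; and (ii) any set $\mathcal{S}\supseteq\priv$ closed under both axioms must contain $\priv^{(2)}$, which gives $\priv^{(2)}\subseteq\cnf(\priv)$. The inclusion $\priv\subseteq\priv^{(1)}\subseteq\priv^{(2)}$ is immediate: take $\randalg$ to be the identity to get $\priv\subseteq\priv^{(1)}$, and take $n=1$ to get $\priv^{(1)}\subseteq\priv^{(2)}$.

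For (i), closure of $\priv^{(2)}$ under Axiom \ref{ax:conv} is easy: a convex combination of two mixtures in $\priv^{(2)}$ is itself a mixture of finitely many elements of $\priv^{(1)}$ with reweighted probabilities, hence in $\priv^{(2)}$. The less routine part is closure under post-processing. Given $\mech=\choice^{\vec q}(\mech_1,\dots,\mech_n)\in\priv^{(2)}$ (with each $\mech_i\in\priv^{(1)}$) and any algorithm $\randalg$ with $\range(\mech)\subseteq\domain(\randalg)$ and independent random bits, I will verify using the equivalence notion from the footnote of Axiom \ref{ax:post} that $\randalg\circ\mech$ and $\choice^{\vec q}(\randalg\circ\mech_1,\dots,\randalg\circ\mech_n)$ induce the same distribution on outputs for every input $D$, by conditioning on which branch of $\choice^{\vec q}$ is chosen and using independence of the random bits of $\randalg$ from those of the $\mech_i$'s. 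Then, since each $\mech_i=\randalg_i\circ\mech_i'$ with $\mech_i'\in\priv$, associativity of composition gives $\randalg\circ\mech_i=(\randalg\circ\randalg_i)\circ\mech_i'\in\priv^{(1)}$, so $\randalg\circ\mech\in\priv^{(2)}$.

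For (ii), let $\mathcal{S}\supseteq\priv$ be any set closed under Axioms \ref{ax:post} and \ref{ax:conv}. Post-processing closure immediately yields $\priv^{(1)}\subseteq\mathcal{S}$. To get $\priv^{(2)}\subseteq\mathcal{S}$, I need to lift the binary mixture in Axiom \ref{ax:conv} to an arbitrary $n$-fold mixture; this is a standard induction, writing $\choice^{\vec p}(\mech_1,\dots,\mech_n)$ as the equivalent algorithm $\choice^{p_1}(\mech_1,\,\choice^{\vec p'}(\mech_2,\dots,\mech_n))$ where $\vec p'$ is the renormalized tail of $\vec p$, and invoking the inductive hypothesis. (The degenerate cases where some $p_i=0$ or $p_1=1$ are handled by dropping that branch.)

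The main obstacle is the bookkeeping in step (i): one must be careful that $\randalg\circ\mech$ is genuinely the algorithm obtained by first sampling which $\mech_i$ to run (with probability $q_i$) and then post-processing, rather than some more subtle object. This hinges on the independence-of-random-bits clause in Axiom \ref{ax:post} and on the equivalence convention for algorithms with identical input/output distributions; once these are pinned down, the argument reduces to rearranging probabilities via the law of total probability.
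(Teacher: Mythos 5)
Your proposal is correct and follows essentially the same route as the paper's proof: verify that $\priv^{(2)}$ contains $\priv$ and is closed under both axioms (using the identity $\randalg\circ\choice^{\vec q}(\mech_1,\dots,\mech_n)=\choice^{\vec q}(\randalg\circ\mech_1,\dots,\randalg\circ\mech_n)$ together with associativity of composition for post-processing closure), then argue minimality by showing any axiom-closed superset of $\priv$ must contain $\priv^{(1)}$ and hence $\priv^{(2)}$. You are somewhat more explicit than the paper about the induction needed to lift the binary mixture of Axiom \ref{ax:conv} to $n$-fold mixtures, but the argument is the same.
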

\begin{proof}
See Appendix \ref{app:close}.
\end{proof}

\begin{corollary}\label{cor:one}
If $\priv=\set{\mech}$ consists of just one algorithm, $\cnf(\priv)$ is the set of all algorithms of the form $\randalg\circ\mech$, where $\range(\mech)\subseteq\domain(\randalg)$ and the random bits in $\randalg$ and $\mech$ are independent of each other.
\end{corollary}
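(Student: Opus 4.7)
The plan is to invoke Theorem~\ref{thm:closure} directly and show that when $\priv=\{\mech\}$ the second step of the construction collapses into the first. By Theorem~\ref{thm:closure}, $\cnf(\priv)=\priv^{(2)}$, and since $\priv$ has only one element, $\priv^{(1)}$ is already precisely the set of all algorithms of the form $\randalg\circ\mech$ with $\range(\mech)\subseteq\domain(\randalg)$ and independent random bits. So the corollary reduces to the claim that $\priv^{(2)}=\priv^{(1)}$ in this special case, i.e., that closing $\priv^{(1)}$ under finite probabilistic mixtures produces nothing new.

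First I would show $\priv^{(1)}\subseteq\priv^{(2)}$, which is immediate by taking the trivial mixture $n=1$, $p_1=1$. The substantive direction is $\priv^{(2)}\subseteq\priv^{(1)}$. Given an arbitrary element $\choice^{\vec p}(\randalg_1\circ\mech,\dots,\randalg_n\circ\mech)$ of $\priv^{(2)}$, I would construct a single post-processor $\randalg^*$ whose domain contains $\range(\mech)$, defined operationally as follows: on input $x$, $\randalg^*$ samples an index $i\in\{1,\dots,n\}$ according to $\vec p$ using a fresh source of randomness independent of $\mech$, and then outputs $\randalg_i(x)$ (using $\randalg_i$'s own independent random bits). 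I would then verify the equality $\randalg^*\circ\mech=\choice^{\vec p}(\randalg_1\circ\mech,\dots,\randalg_n\circ\mech)$ at the level of output distributions: for every input $D$ and every output $\omega$,
\[
P\!\left((\randalg^*\circ\mech)(D)=\omega\right)=\sum_{i=1}^n p_i\, P\!\left((\randalg_i\circ\mech)(D)=\omega\right),
\]
which is exactly the distribution produced by $\choice^{\vec p}(\randalg_1\circ\mech,\dots,\randalg_n\circ\mech)$. By the footnote in Axiom~\ref{ax:post}, equality of output distributions means equality of algorithms for our purposes, so $\choice^{\vec p}(\randalg_1\circ\mech,\dots,\randalg_n\circ\mech)\in\priv^{(1)}$.

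The mildly delicate bookkeeping, which I expect to be the only real obstacle, is checking the independence and domain conditions: the coin used to choose $i$ must be independent of $\mech$'s random bits (it is, by construction), and each $\randalg_i$ must have domain containing $\range(\mech)$ (this holds because each $\randalg_i\circ\mech$ was already a valid element of $\priv^{(1)}$). Once this is verified, combining the two inclusions gives $\priv^{(2)}=\priv^{(1)}$, and Theorem~\ref{thm:closure} yields $\cnf(\priv)=\priv^{(1)}$, which is the statement of the corollary.
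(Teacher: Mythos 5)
Your proposal is correct and follows essentially the same route as the paper: the post-processor $\randalg^*$ you construct is exactly $\choice^{\vec p}(\randalg_1,\dots,\randalg_n)$, and the distributional identity you verify is precisely the identity $\choice^{\vec{p}}(\randalg_1\circ\mech,\dots,\randalg_n\circ\mech)=\left(\choice^{\vec{p}}(\randalg_1,\dots,\randalg_n)\right)\circ\mech$ that the paper's proof invokes to show the closure process terminates after the first step. Your version merely spells out the verification in more detail.
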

\begin{proof}
See Appendix \ref{app:corone}.
\end{proof}

\subsection{The Row Cone}\label{subsec:cnf:rowcone}
Having motivated the row cone in Section \ref{sec:overview:rowcone}, we now formally define it and derive its basic properties.

\begin{definition}[Row Cone]
Let $\inp=\set{D_1,D_2,\dots}$ be the set of possible input datasets and let $\priv$ be a privacy definition. The \emph{row cone} of $\priv$, denoted by $\rowcone(\priv)$, is defined as the set of vectors:
{\small
\begin{eqnarray*}
\Bigg\{\Big(c*P[\mech(D_1)=\omega],~c*P[\mech(D_2)=\omega],\dots\Big) ~:~ c\geq 0,~ \mech\in\cnf(\priv),~ \omega\in\range(\mech)\Bigg\}
\end{eqnarray*}
}
\end{definition}
Recalling the matrix representation of algorithms (as discussed in Section \ref{sec:overview:matrix} and Figure \ref{fig:matrix}), we see that a vector belongs to the row cone if and only if it is proportional to some row of the matrix representation of some trusted algorithm $\mech\in\cnf(\priv)$.

Given a $\mech\in\cnf(\priv)$ and $\omega\in\range(\mech)$, the attacker uses the vector $(P[\mech(D_1)=\omega],~P[\mech(D_2)=\omega],\dots)\in\rowcone(\priv)$ to convert the prior distribution $P(\data=D_i)$ to the posterior $P(\data=D_i~|~\mech(\data)=\omega)$. Scaling this likelihood vector by $c>0$ does not change the posterior distribution, but it does make it easier to work with the row cone.

Constraints satisfied by $\rowcone(\priv)$ are therefore constraints shared by all of the likelihood vectors $(P[\mech(D_1)=\omega],~P[\mech(D_2)=\omega],\dots)\in\rowcone(\priv)$ and therefore they constrain the ways an attacker's beliefs can change no matter what trusted algorithm $\mech\in\cnf(\priv)$ is used and what sanitized output $\omega\in\range(\mech)$ is produced.

%
%

The row cone has an important geometric property:
\begin{theorem}\label{thm:cone}
$\rowcone(\priv)$ is a convex cone.
\end{theorem}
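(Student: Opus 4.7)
The plan is to verify the two defining properties of a convex cone: closure under nonnegative scalar multiplication, and closure under addition (from which convexity, i.e. closure under convex combinations, follows automatically).

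Closure under nonnegative scaling is immediate: if $v = c\cdot(P[\mech(D_1)=\omega], P[\mech(D_2)=\omega],\dots)$ lies in $\rowcone(\priv)$ and $\alpha\geq 0$, then $\alpha v$ has the same form with the scalar $\alpha c\geq 0$, so the definition is satisfied. The real content is closure under addition, which I would establish as follows.

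Take two arbitrary elements
\[
v_1 = c_1\bigl(P[\mech_1(D_i)=\omega_1]\bigr)_i,\qquad v_2 = c_2\bigl(P[\mech_2(D_i)=\omega_2]\bigr)_i
\]
of $\rowcone(\priv)$, with $\mech_1,\mech_2\in\cnf(\priv)$. The trivial cases $c_1=0$ or $c_2=0$ reduce to the other vector, so assume $c_1,c_2>0$. The plan is to build a single algorithm in $\cnf(\priv)$ together with a single output $\omega^*$ whose likelihood vector is proportional to $v_1+v_2$. First I would ``unify'' the two distinguished outputs via deterministic post-processing: fix a symbol $\omega^*$ (and an auxiliary $\omega^\dagger$), and let $\randalg_i$ be the deterministic algorithm that maps $\omega_i\mapsto\omega^*$ and every other element of $\range(\mech_i)$ to $\omega^\dagger$. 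Setting $\mech_i' = \randalg_i\circ\mech_i$, Theorem \ref{thm:closure} (the $\priv^{(1)}$ step) gives $\mech_1',\mech_2'\in\cnf(\priv)$, and by construction $P[\mech_i'(D_j)=\omega^*]=P[\mech_i(D_j)=\omega_i]$.

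Next, choose the mixing weight $p = c_1/(c_1+c_2)\in(0,1)$ and form $\mech = \choice^{(p,1-p)}(\mech_1',\mech_2')$. By the second step of Theorem \ref{thm:closure}, $\mech\in\cnf(\priv)$, and
\[
P[\mech(D_j)=\omega^*] \;=\; p\,P[\mech_1(D_j)=\omega_1] + (1-p)\,P[\mech_2(D_j)=\omega_2].
\]
Multiplying the resulting likelihood vector by the scalar $c_1+c_2\geq 0$ (allowed by the definition of $\rowcone(\priv)$) yields exactly $v_1+v_2$, proving closure under addition.

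The main obstacle — and essentially the only nonroutine point — is that $\mech_1$ and $\mech_2$ need not share a range, so one cannot mix them directly and point to a common $\omega$. The post-processing step that collapses $\omega_1$ and $\omega_2$ to a single symbol $\omega^*$ is the trick that lets Axiom \ref{ax:conv} (equivalently, the $\choice$ construction in Theorem \ref{thm:closure}) combine the two likelihood vectors into a single row of a single algorithm in $\cnf(\priv)$. Everything else is bookkeeping about scalars and the fact that all algorithms in $\cnf(\priv)$ share the input domain $\inp$.
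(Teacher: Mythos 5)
Your proposal is correct and follows essentially the same route as the paper's proof: the key step in both is to post-process $\mech_1$ and $\mech_2$ so that the two distinguished outputs collapse to a common symbol, and then to form a probabilistic mixture with weights proportional to the scaling constants $c_1,c_2$ before rescaling. The only cosmetic difference is that you verify closure under addition while the paper verifies closure under convex combinations directly; given closure under nonnegative scaling these are equivalent.
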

\begin{proof}
See Appendix \ref{app:cone}.
\end{proof}
The fact that the row cone is a convex cone means that it satisfies an associated set of linear constraints (from which we derive semantic privacy guarantees). 
For technical reasons, the treatment of these constraints differs slightly depending on whether the row cone is finite dimensional (which occurs if the number of possible datasets is finite) or infinite dimensional (if the set of possible datasets is countably infinite). We discuss this next.

\subsubsection{Finite dimensional row cones.}
A closed convex set in finite dimensions is expressible as the solution set to a system of linear inequalities \cite{Boyd:convex}. When the row cone is \emph{closed} then the linear inequalities have the form: 
\begin{eqnarray*}
A_{1,1} P[\mech(D_1)=\omega] + \dots + A_{1,n} P[\mech(D_n)=\omega] &\geq& 0\\
A_{2,1} P[\mech(D_1)=\omega] + \dots + A_{2,n} P[\mech(D_n)=\omega] &\geq& 0\\
\phantom{A_{2,1}}\vdots\phantom{(\mech(D_1)=\omega) + \dots + A_{2,n}}\vdots\phantom{ P(\mech(D_n)=\omega)} &\vdots&\vdots
\end{eqnarray*}
 (with possibly some equalities of the form $B_1 P[\mech(D_1)=\omega] + \dots + B_n P[\mech(D_n)=\omega]  = 0$ thrown in). When the row cone is not closed, it is still well-approximated by such linear inequalities: their solution set contains the row cone, and the row cone contains the solution set when the '$\geq$' in the constraints is replaced with '$>$'.
\subsubsection{Infinite dimensional row cones.}
When the domain of the data is countably infinite\footnote{We need not consider uncountably infinite domains since digital computers can only process finite bit strings, of which there are countably many.}, vectors in the row cone have infinite length since there is one component for each possible dataset. The vectors in the row cone belong to the vector space $\ell_\infty$, the set of vectors whose components are bounded. Linear constraints in this vector space can have the form: 
\begin{eqnarray}
A_1 P[\mech(D_1)=\omega] + A_2 P[\mech(D_2)=\omega] + \dots \geq 0\label{eqn:zfdc}\\
(\text{where }\sum_i |A_i|<\infty)\nonumber
\end{eqnarray}
but, if one accepts the Axiom of Choice, linear constraints are much more complicated and are generally defined via finitely additive measures \cite{analysishandbook}. On the other hand, in constructive mathematics\footnote{More precisely, mathematics based on Zermelo-Fraenkel set theory plus the Axiom of Dependent Choice \cite{analysishandbook}}, such more complicated linear constraints cannot be proven to exist (\cite{analysishandbook}, Sections 14.77, 23.10, and 27.45, and \cite{Lauwers10fa}). Therefore we  only consider the types of linear constraints shown in Equation \ref{eqn:zfdc}. 

\subsubsection{Interpretation of linear constraints.}
Starting with a linear inequality of the form  $A_1 P(\mech(D_1)=\omega) +  A_2 P(\mech(D_2)=\omega) + \dots \geq 0$, we can separate out the positive coefficients, say $A_{i_1}, A_{i_2},\dots$, from the negative coefficients, say $A_{i^\prime_1}, A_{i^\prime_2},\dots$, to rewrite it in the form:
\begin{eqnarray*}
A_{i_1} P(\mech(D_{i_1})=\omega) + A_{i_2} P(\mech(D_{i_2})=\omega) + \dots \geq |A_{j^\prime_1}|~ P(\mech(D_{i^\prime_1})=\omega) + |A_{i^\prime_2} |~P(\mech(D_{i^\prime_2})=\omega) + \dots 
\end{eqnarray*}
where all of the coefficients are now positive. We can view each $A_{i_j}$ as a possible value for the prior probability $P(\data=D_{i_j})$ (or a value proportional to a prior probability). Setting $S_1=\set{D_{i_1},D_{i_2},\dots}$ and $S_2=\set{D_{i^\prime_1}, D_{i^\prime_2},\dots}$. This allows us to interpret the linear constraints as statements such as $\alpha P(\data\in S_1,\mech(\data)=\omega)\geq  P(\data\in S_2,\mech(\data)=\omega)$. Further algebraic manipulations (and a use of constants independent of $\mech$) result in statements such as:
\begin{eqnarray}
 \alpha &\geq& \frac{P(\data\in S_2~|~\mech(\data)=\omega)}{P(\data\in S_1~|~\mech(\data)=\omega)}\label{sem:1}\\
\alpha^\prime &\geq& \frac{P(\data\in S_2~|~\mech(\data)=\omega)}{P(\data\in S_1~|~\mech(\data)=\omega)} \Big/\frac{P(\data\in S_2)}{P(\data\in S_1)}\label{sem:2}
\end{eqnarray}
Equation \ref{sem:1} means that if an attacker uses a certain class of prior distributions then after seeing the sanitized data, the probability of some set $S_2$ is no more than $\alpha$ times the probability of some set $S_1$. Equation \ref{sem:2} means that if an attacker uses a certain class of priors, then the relative odds of $S_2$ vs. $S_1$ can increase by at most $\alpha^\prime$ after seeing the sanitized data\footnote{In fact, that idea has led to the creation of a  large class of privacy definitions \cite{pufferfish} as a followup to this framework; the linear constraints that characterize privacy definitions in \cite{pufferfish} are precisely the constraints of what we here call the row cone, hence all the difficult parts of the framework have been bypassed in \cite{pufferfish}.}.

Of particular importance are the sets  $S_1$ and $S_2$ of possible input datasets, whose relative probabilities are constrained by the privacy definition. In an ideal world they would correspond to something we are trying to protect (for example, $S_1$ could be the set of databases in which Bob has cancer and $S_2$ could be the set of databases in which Bob is healthy). If a privacy definition is not properly designed, $S_1$ and $S_2$ could correspond to concepts that may not need protection for certain applications (for example, $S_1$ could be the set of databases with even parity and $S_2$ could be the set of databases with odd parity).
In any case, it is important to examine existing privacy definitions and even specific algorithms to see which sets they end up protecting.


\eat{
\subsection{Simple Examples from Folklore}\label{sec:cnf:ex}
In this section, as a warmup, we relate $\cnf(\priv)$ and $\rowcone(\priv)$ to known semantic guarantees from folklore.
\subsubsection{Differential Privacy}\label{sec:cnf:ex:diffp}
Let $\diffpriv$ denote the set of algorithms satisfying $\epsilon$-differential privacy (Definition \ref{def:diffpriv}).  It is easy to see that  $\diffpriv$  satisfies Axioms \ref{ax:post} and \ref{ax:conv} and so it is already in consistent normal form: $\diffpriv=\cnf(\diffpriv)$. 

Furthermore, $\rowcone(\diffpriv)$ can be easily extracted. The vector $\vec{x}=(x_1,x_2,\dots)\in\rowcone(\diffpriv)$ if and only if $x_i\leq e^\epsilon x_j$ whenever $D_i$ and $D_j$ differ in the value of one tuple. Alternatively, with $c>0$, the vector $$(c*P[\mech(D_1)=\omega],~~ c*P[\mech(D_2)=\omega],~~\dots)$$ belongs to $\rowcone(\diffpriv)$ if and only if the linear inequality $cP(\mech(D_i)=\omega)\leq e^\epsilon cP(\mech(D_j)=\omega)$ is satisfied for all pairs of datasets $D_i,D_j$ that differ in the value of  one tuple. 

Here is how these linear inequalities translate into semantic guarantees. A simple, well-known computation shows:
\begin{eqnarray}
&  P(\mech(D_i)=\omega)\leq e^\epsilon P(\mech(D_j)=\omega) \nonumber \\
\Leftrightarrow &  \frac{P(\data=D_i) P(\mech(D_i)=\omega)}{P(\data=D_j) P(\mech(D_j)=\omega) } \leq e^\epsilon  \frac{P(\data=D_i) }{P(\data=D_j) }  \label{eqn:bayesdiffp}
\end{eqnarray}
According to folklore, Equation \ref{eqn:bayesdiffp} is interpreted in terms of prior odds and posterior odds: if the attacker believes that table $D_i$ (e.g., the table where Bob has cancer) is $\alpha$ times as likely as $D_j$ (e.g., the table where Bob does not have cancer and all else is the same), then after seeing the sanitized output $\omega$, the attacker will believe that $D_i$ is only at most $e^\epsilon\alpha$ times as likely as $D_j$. We emphasize that these semantics of differential privacy are well known; we included this example because it helps illustrate the concepts of $\cnf(\priv)$ and $\rowcone(\priv)$, whose definition and use are contributions of this paper.

\subsubsection{Syntactic Methods}\label{sec:cnf:ex:syntactic}
Syntactic privacy definitions are those that place restrictions on the format of the output that an algorithm is allowed to produce. As discussed in Section \ref{sec:related:syntactic}, $k$-anonymity \cite{samarati01:microdata} is a prototype of such privacy definitions. The original version of $k$-anonymity did not place any restrictions on the types of generalizations (coarsening) that can be performed on the input data. In the folklore, it is well-known that a $k$-anonymous algorithm can encode its entire input as a $k$-anonymous table. As a result, its consistent normal form is easy to compute.

\begin{theorem}\label{thm:nok}Given a fixed schema with a quasi-identifier that contains an integer-valued attribute,  the consistent normal form of $k$-anonymity consists of every algorithm whose input domain contains tables with this schema. The row cone consists of all vectors.
\end{theorem}
The essence of Theorem \ref{thm:nok} is that without additional restrictions, $k$-anonymity cannot prevent a malicious anonymization algorithm from uniquely encoding its input into the format of a $k$-anonymous table that can be efficiently decoded to retrieve the input. Since the privacy definition cannot prevent such behavior, no worst-case semantic guarantees exist. 

With restrictions on how the data is coarsened and on how the anonymization algorithms behave \cite{xiaotransparent,cormodeminimize}, it is possible to exclude such malicious algorithms whose outputs uniquely determine their inputs. However, such restrictions do not necessarily produce privacy definitions that prevent \emph{side-channel attacks} in which an algorithm uses the output format to encode some sensitive information about the input. Examples of side-channel attacks include: the minimality attack \cite{wong:minimality,fang08:hiding} in which algorithms are forced to minimize a utility metric and end up accidentally leaking sensitive information\footnote{The restrictions studied by \cite{xiaotransparent,cormodeminimize} were designed to thwart such attacks.}; an algorithm that outputs the table in  Figure \ref{fig:kanb} (see Section \ref{sec:related:syntactic}) if the input is the table from Figure \ref{fig:kana} and suppresses all attributes otherwise; an algorithm that suppresses the Age attribute only if Bob does not have cancer (hence unsuppressed Age values imply Bob has cancer).

For these reasons, we believe that when new syntactic privacy definitions are proposed, they should be accompanied by their row cones so that deficiencies such as possibilities of side-channel attacks can be evaluated.

\subsubsection{Partitioning in Lieu of Syntactic Restrictions}\label{sec:cnf:ex:partition}
Partitioning mechanisms such as \cite{minimaldefense} are alternatives to syntactic methods. Given a partitioning $\partition$ of the input domain $\inp$, let $\mech_{\partition}$  be the algorithm that, on input $D$,  returns the id of the partition containing $D$. Setting $\priv=\set{\mech_{\partition}}$, then $\cnf(\priv)\equiv\cnf(\set{\mech_{\partition}})$, the set of algorithms we should trust, is the set of algorithms that satisfy:

\begin{definition}[$\partition$-Partition Privacy] Given a partitioning $\partition$ of the input space $\inp$, a mechanism $\mech$ satisfies $\partition$-partition privacy if $P[\mech(D_i)=\omega]=P[\mech(D_j)=\omega]$ whenever datasets $D_i$ and $D_j$ belong to the same partition in $\partition$.
\end{definition}

As with differential privacy, the row cone can be easily read off of the definition: $\vec{x}=(x_1,x_2,\dots)\in\rowcone(\set{\mech_{\partition}})$ if and only if $x_i=x_j$ whenever $D_i$ and $D_j$ are in the same partition. Alternatively, for $c>0$, the vector $(cP[\mech(D_1)=\omega],~cP[\mech(D_2)=\omega],~\dots)\in\rowcone(\set{\mech_{\partition}})$ if and only if $cP(\mech(D_i)=p) = cP(\mech(D_j)=\omega)$ for $D_i$ and $D_j$ in the same partition. The Bayesian guarantees are obvious: if $D_i$ was believed to be $\alpha$ times as likely as $D_j$ before seeing the sanitized output, then it is still $\alpha$ times as likely after seeing the sanitized output as long as $D_i$ and $D_j$ are in the same partition.


One may weaken the privacy definition by allowing a choice between different partitionings $\partition_1,\partition_2,\dots,\partition_k$. The trusted set of algorithms would become $\cnf(\set{\mech_{\partition_1},\mech_{\partition_2},\dots,\mech_{\partition_k}})$. The semantic guarantees then heavily depend on the different ways these partitions intersect each other.

}


\section{Applications}\label{sec:applications}
   In this section, we present the main technical contributions of this paper -- applications of our framework for the extraction of novel semantic guarantees provided by randomized response, FRAPP/PRAM, and several algorithms (including a generalization of the geometric mechanism \cite{universallyUtilityMaximizingPrivacyMechanisms}) that add integer-valued noise to their inputs. We show randomized response and FRAPP offer particularly strong protections on different notions of parity of the input data. Since such protections are often unnecessary, we show, in Section \ref{sec:applications:relax}, how to manipulate the row cone to relax privacy definitions.


We will make use of the following theorem which shows how to derive $\cnf(\priv)$ and $\rowcone(\priv)$ for a large class of privacy definitions that are based on a single algorithm.

\begin{theorem}\label{thm:invcnfinf}
Let $\inp$ be a finite or countably infinite set of possible datasets. Let $\mech^*$ be an algorithm with $\domain(\mech^*)= \inp$. Let $M^*$ be the matrix representation of $\mech^*$ (Definition \ref{def:matrix}). If $(M^*)^{-1}$ exists and the $L_1$ norm of each column of $(M^*)^{-1}$ is bounded by a constant $C$ then 
\begin{list}{\labelitemi}{\leftmargin=1em}
\itemsep 1pt
\parskip 4pt
\item[(1)] A bounded row vector $\vec{x}\in\rowcone(\set{\mech^*})$ if and only if $\vec{x}\cdot m\geq 0$ for every column $m$ of $(M^*)^{-1}$.
\item[(2)] An algorithm $\mech$, with matrix representation $M$, belongs to $\cnf(\set{\mech^*})$ if and only if the matrix $M(M^*)^{-1}$ contains no negative entries.
\item[(3)] An algorithm $\mech$, with matrix representation $M$, belongs to $\cnf(\set{\mech^*})$ if and only if every row of $M$ belongs to $\rowcone(\set{\mech^*})$.
\end{list}
\end{theorem}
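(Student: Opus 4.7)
The plan is to reduce everything to the characterization of $\cnf(\set{\mech^*})$ given by Corollary~\ref{cor:one}: every element of $\cnf(\set{\mech^*})$ has the form $\randalg\circ\mech^*$ for some $\randalg$ whose domain contains $\range(\mech^*)$ and whose random bits are independent of $\mech^*$. Writing $A$ for the matrix representation of such an $\randalg$ (indexed so that its columns are indexed by $\range(\mech^*)$), a direct calculation of $P(\randalg(\mech^*(D_j))=\eta_i)$ shows that the matrix representation of $\randalg\circ\mech^*$ is exactly $AM^*$, and the admissible matrices $A$ are precisely the column-stochastic non-negative matrices. So the theorem reduces to the linear-algebraic question: when does $M=AM^*$ for some column-stochastic $A$, and what do the rows of such $AM^*$ look like?

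I would prove part~(2) first. The uniform bound $C$ on the $L_1$ norms of the columns of $(M^*)^{-1}$ together with the entrywise bound $M_{i,k}\leq 1$ makes the product $M(M^*)^{-1}$ well-defined by absolute convergence of each entry, and it also makes $((M(M^*)^{-1})M^*)_{i,j}$ equal to $M_{i,j}$ by Fubini. Hence $M=AM^*$ forces $A=M(M^*)^{-1}$, and conversely any non-negative $M(M^*)^{-1}$ yields a candidate $A$. The non-trivial piece is checking the column-stochastic property: applying $\vec{1}^T$ on the left of $\vec{1}^T M^*=\vec{1}^T$ and right-multiplying by $(M^*)^{-1}$ gives $\vec{1}^T=\vec{1}^T(M^*)^{-1}$, after which $\vec{1}^T M(M^*)^{-1}=\vec{1}^T(M^*)^{-1}=\vec{1}^T$. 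So $A=M(M^*)^{-1}$ is column-stochastic exactly when it is non-negative, proving (2).

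For part~(1), the forward direction is immediate from (2): if $\vec{x}=c\vec{r}$ where $\vec{r}$ is a row of $M=AM^*$, then $\vec{x}(M^*)^{-1}=c\vec{a}$ where $\vec{a}$ is the corresponding row of $A$, which is non-negative; dotting with any column $m$ of $(M^*)^{-1}$ just picks out an entry of $\vec{x}(M^*)^{-1}$. The reverse direction is the one real obstacle: given a bounded $\vec{x}$ with $\vec{a}:=\vec{x}(M^*)^{-1}\geq 0$, I must exhibit an $\mech\in\cnf(\set{\mech^*})$ one of whose rows is proportional to $\vec{x}$. The hypothesis on $(M^*)^{-1}$ is exactly what I need here: it forces $\vec{a}$ to be uniformly bounded, say by $K$. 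I then build $\randalg$ with output alphabet $\{\eta^*,\eta_1,\eta_2,\dots\}$ indexed by $\range(\mech^*)\cup\set{\eta^*}$, setting $P(\randalg(\omega_k)=\eta^*)=c\,a_k$ for a small constant $c\in(0,1/K]$ and $P(\randalg(\omega_k)=\eta_k)=1-c\,a_k$, and $0$ on all other $\eta_j$ with $j\neq k$. This produces a bona fide column-stochastic $A$; its $\eta^*$-row is $c\vec{a}$, so the $\eta^*$-row of $AM^*$ is $c\vec{a}M^*=c\vec{x}$, placing $\vec{x}$ in $\rowcone(\set{\mech^*})$.

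Part~(3) is then a one-line consequence: every row of $M$ being in $\rowcone(\set{\mech^*})$ is equivalent, by~(1), to every row of $M(M^*)^{-1}$ being non-negative, which by~(2) is equivalent to $\mech\in\cnf(\set{\mech^*})$. The main technical hurdle throughout is the countably-infinite setting: the bound $C$ is used both to make sense of the infinite matrix products and their associativity, and to guarantee that the auxiliary stochastic completion in the reverse direction of~(1) is actually constructible (the entries of $\vec{a}$ must be uniformly bounded so that a single positive scalar $c$ keeps $c\vec{a}$ sub-stochastic simultaneously in every column).
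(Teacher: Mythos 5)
Your proof is correct and follows essentially the same route as the paper's: reduce to the $AM^*$ characterization via Corollary~\ref{cor:one}, use the $L_1$ bound $C$ (H\"older) to control $\vec{x}(M^*)^{-1}$ and justify the infinite matrix products, complete the nonnegative row $c\vec{a}$ to a column-stochastic matrix, and verify column sums via $\vec{1}^T(M^*)^{-1}=\vec{1}^T$. The only cosmetic difference is your choice of stochastic completion (a distinguished $\eta^*$-row plus near-diagonal remainder, versus the paper's two-row matrix with rows $c\vec{y}$ and $\vec{1}-c\vec{y}$), which changes nothing of substance.
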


\begin{proof}
See Appendix \ref{app:invcnfinf}.
\end{proof}

Note that one of our applications,  namely the study of FRAPP/PRAM, does not satisfy the hypothesis of this theorem as it is not based on a single algorithm.
Nevertheless, this theorem still turns out to be useful for analyzing FRAPP/PRAM.


       \subsection{Randomized Response}\label{sec:applications:rr}
           In this section, we apply our framework to extract Bayesian semantic guarantees provided by randomized response.
Recall that randomized response applies to tables with $k$ tuples and a single binary attribute. Thus each database can be represented as a bit string of length $k$. We formally define the domain of datasets and the randomized response algorithm as follows.

\begin{definition}[Domain of randomized reponse]\label{def:domrr}
Let the input domain $\inp=\set{D_1,\dots,D_{2^k}}$ be the set of all bit strings of length $k$. The bit strings are ordered in reverse lexicographic order. Thus $D_1$ is the string whose bits are all $1$ and $D_{2^k}$ is the string whose bits are all $0$.
\end{definition}

\begin{definition}[Randomized response algorithm]\label{def:rralg}
Given a privacy parameter $p\in[0,1]$, let $\rr{p}$ be the algorithm that, on input $D\in\inp$, independently flips each bit of $D$ with probability $1-p$.
\end{definition}

For example, when $k=2$ then $|\inp|=4$ and the matrix representation of $\rr{p}$ is
{\small
\begin{eqnarray*}
\bordermatrix{
 & \red{D_1=11} & \red{D_2=10} & \red{D_3=01} & \red{D_4=00} \cr
\blue{\omega_1=11} & p^2 & p(1-p) & p(1-p) & (1-p)^2 \cr
\blue{\omega_2=10} & p(1-p) & p^2 & (1-p)^2 & p(1-p) \cr
\blue{\omega_3=01} & p(1-p) & (1-p)^2 & p^2 &  p(1-p)\cr
\blue{\omega_4=00} & (1-p)^2 & p(1-p) & p(1-p) &  p^2\cr
}
\end{eqnarray*}
}

Note that randomized response, as a privacy definition, is equal to $\set{\rr{p}}$. The next lemma says that without loss of generality, we may assume that $p> 1/2$.

\begin{lemma}\label{lem:phalf}
Given a privacy parameter $p$, define $q=\max(p,1-p)$. Then
\begin{list}{\labelitemi}{\leftmargin=2em}
\itemsep 0pt
\parskip 2pt
\item $\cnf(\set{\rr{p}})=\cnf(\set{\rr{q}})$.
\item  If $p=1/2$ then $\cnf(\set{\rr{p}})$ consists of the set of algorithms whose outputs are statistically independent of their inputs (i.e. those algorithms $\mech$ where $P[\mech(D_i)=\omega]=P[\mech(D_j)=\omega]$ for all $D_i,D_j\in\inp$ and $\omega\in\range(\mech)$), and therefore attackers learn nothing from those outputs.
\end{list}
\end{lemma}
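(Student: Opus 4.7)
\textbf{Part 1.} The plan is to observe that each of $\rr{p}$ and $\rr{1-p}$ can be obtained from the other by post-processing, so by Axiom~\ref{ax:post} each lies in the other's consistent normal form. Specifically, let $F$ denote the deterministic algorithm that flips every bit of its input. A direct calculation shows that $F\circ\rr{p}$ flips each bit of its input with probability $p$ (since a bit flipped once more, deterministically, is unchanged iff it was already flipped). In the equivalence sense of the footnote to Axiom~\ref{ax:post}, this means $F\circ\rr{p}=\rr{1-p}$, so $\rr{1-p}\in\cnf(\set{\rr{p}})$, and by symmetry $\rr{p}\in\cnf(\set{\rr{1-p}})$. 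Monotonicity of $\cnf(\cdot)$ (immediate from Theorem~\ref{thm:closure}) together with idempotence then gives $\cnf(\set{\rr{p}})=\cnf(\set{\rr{1-p}})$. In case $p\geq 1/2$ there is nothing to do, and in case $p<1/2$ we have $q=1-p$, which finishes this part.

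\textbf{Part 2.} The plan is to apply Corollary~\ref{cor:one} directly. When $p=1/2$, the matrix representation of $\rr{1/2}$ has every entry equal to $(1/2)^k$: for any input $D$, each of the $2^k$ possible outputs is obtained by independently flipping each bit with probability $1/2$, giving a uniform distribution on $\set{0,1}^k$ regardless of $D$. Hence for any post-processor $\randalg$ whose domain contains $\set{0,1}^k$,
\[
P\bigl[(\randalg\circ\rr{1/2})(D)=\omega\bigr]=\sum_{\omega'\in\set{0,1}^k}(1/2)^k\,P[\randalg(\omega')=\omega],
\]
which is a function of $\omega$ alone. By Corollary~\ref{cor:one} every member of $\cnf(\set{\rr{1/2}})$ has this form, so every member produces outputs statistically independent of its input.

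For the converse, suppose $\mech$ is any algorithm with domain $\inp$ whose output distribution is the same on every input; write this common distribution as $g(\omega)$. Define $\randalg$ to be the algorithm that ignores its input and independently samples from $g$. Then $\randalg\circ\rr{1/2}$ is defined on $\inp$, uses independent random bits from $\rr{1/2}$, and has output distribution $g(\omega)$ on every input, so it is equivalent to $\mech$ in the sense of the footnote to Axiom~\ref{ax:post}. Corollary~\ref{cor:one} thus places $\mech$ in $\cnf(\set{\rr{1/2}})$.

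\textbf{Main obstacle.} The only subtlety is the converse direction of Part~2: one must exhibit a legitimate post-processing realization for every input-independent algorithm. The construction above (a constant, input-ignoring sampler) handles this uniformly, so the argument reduces to a careful bookkeeping of equivalence under the matrix representation, and no substantive difficulty remains.
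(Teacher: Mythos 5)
Your proposal is correct and follows essentially the same route as the paper: Part 1 uses the deterministic bit-flip post-processor (which the paper writes as $\rr{0}$) to show $\rr{p}$ and $\rr{1-p}$ each arise from the other by post-processing, and Part 2 uses Corollary \ref{cor:one} in both directions, with the converse realized by composing an input-ignoring sampler with $\rr{1/2}$ (the paper uses $\mech\circ\rr{1/2}$ itself, an equivalent construction).
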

\begin{proof}
See Appendix \ref{app:phalf}.
\end{proof}

Therefore, in the remainder of this section, we assume $p>1/2$ without loss of generality.
Now we derive the consistent normal form and row cone of randomized response.
\begin{theorem}[$\cnf$ and $\rowcone$]\label{thm:rrcnf}
Given input space $\inp=\set{D_1,\dots,D_{2^k}}$ of bit strings of length $k$ and a privacy parameter $p> 1/2$,
\begin{list}{\labelitemi}{\leftmargin=1em}
\itemsep 4pt
\parskip 2pt
\item A vector $\vec{x}=(x_1,\dots,x_{2^k})\in\rowcone(\set{\rr{p}})$ if and only if for every bit string $s$ of length $k$,
$$\sum\limits_{i=1}^{2^k}p^{\hamming(s, D_i)}(p-1)^{k-\hamming(s, D_i)}x_i \geq 0$$
where $\hamming(s, D_i)$ is Hamming distance between $s$ and $D_i$.
\item An algorithm $\mech$ with matrix representation $M$ belongs to $\cnf(\set{\rr{p}})$ if and only if every row of $M$ belongs to $\rowcone(\set{\rr{p}})$.
\end{list}
\end{theorem}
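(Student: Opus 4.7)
\medskip
\noindent\textbf{Proof Proposal.}
The plan is to apply Theorem~\ref{thm:invcnfinf} with $\mech^{*}=\rr{p}$. Both conclusions of Theorem~\ref{thm:rrcnf} will then follow immediately from parts (1) and (3) of Theorem~\ref{thm:invcnfinf}, provided we can exhibit $(M_{rr(p)})^{-1}$ explicitly and verify that its columns have uniformly bounded $L_1$-norm. So the real work is an explicit computation of the inverse.

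The key structural observation is that, because $\rr{p}$ flips each of the $k$ bits independently with probability $1-p$, its matrix representation factors as a Kronecker power
\[
M_{rr(p)} \;=\; A^{\otimes k}, \qquad A=\begin{pmatrix}p & 1-p\\ 1-p & p\end{pmatrix},
\]
with rows and columns of $A$ indexed by the bit values $1,0$ (matching the reverse-lex order of Definition~\ref{def:domrr}). One checks this against the explicit matrix shown just after Definition~\ref{def:rralg}. Since $\det A = 2p-1\neq 0$ under our assumption $p>1/2$, $A$ is invertible with
\[
A^{-1} \;=\; \frac{1}{2p-1}\begin{pmatrix}p & p-1 \\ p-1 & p\end{pmatrix},
\]
and by the standard Kronecker-product identity $(A^{\otimes k})^{-1}=(A^{-1})^{\otimes k}$, the inverse exists and its entries factor bitwise:
\[
(M_{rr(p)})^{-1}(D_i,s) \;=\; \frac{1}{(2p-1)^k}\,p^{\,k-\hamming(s,D_i)}\,(p-1)^{\hamming(s,D_i)}.
\]

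Next I verify the $L_1$ hypothesis of Theorem~\ref{thm:invcnfinf}. Each column of $A^{-1}$ has $L_1$-norm $\tfrac{p+(1-p)}{2p-1}=\tfrac{1}{2p-1}$, and for Kronecker products the column $L_1$-norms multiply, so every column of $(M_{rr(p)})^{-1}$ has $L_1$-norm $1/(2p-1)^k$, a finite constant independent of the column. Thus Theorem~\ref{thm:invcnfinf} applies. Part (1) of that theorem says $\vec x\in\rowcone(\set{\rr{p}})$ iff $\vec x\cdot m\geq 0$ for every column $m$ of $(M_{rr(p)})^{-1}$; multiplying through by the positive scalar $(2p-1)^k$, this is exactly the inequality
$\sum_i p^{\,k-\hamming(s,D_i)}(p-1)^{\hamming(s,D_i)}x_i\geq 0$ ranging over all output strings $s$. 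To recover the form stated in the theorem, note that as $s$ ranges over all $k$-bit strings so does its complement $\bar s$, and $\hamming(\bar s,D_i)=k-\hamming(s,D_i)$, so the system of inequalities is equivalent to the one in the statement. Part (3) of Theorem~\ref{thm:invcnfinf} immediately gives the second bullet: $\mech\in\cnf(\set{\rr{p}})$ iff every row of its matrix $M$ lies in $\rowcone(\set{\rr{p}})$.

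The only mildly tricky step is the $L_1$ bound — but the Kronecker factorization reduces it to a two-line calculation on the $2\times 2$ block. Once that is in hand the result is essentially an application of Theorem~\ref{thm:invcnfinf}, with the Hamming-distance formula emerging as the combinatorial shadow of the bitwise Kronecker factorization of $A^{-1}$.
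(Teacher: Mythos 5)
Your proposal is correct and follows essentially the same route as the paper's proof: factor $\mrr{p}$ as a $k$-fold Kronecker power of the $2\times 2$ bit-flip matrix, invert it blockwise via $(A^{\otimes k})^{-1}=(A^{-1})^{\otimes k}$, and invoke parts (1) and (3) of Theorem~\ref{thm:invcnfinf}. Your explicit verification of the column $L_1$ bound and your remark that complementing $s$ swaps $\hamming$ with $k-\hamming$ (so your inequalities and the stated ones describe the same system of $2^k$ constraints) are both sound and, if anything, tidier than the paper's bookkeeping of the exponents.
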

\begin{proof}
See Appendix \ref{app:rrcnf}.
\end{proof}

We illustrate this theorem with our running example of tables with $k=2$ tuples.
\begin{example}\label{ex:crr2}\emph{($\cnf$ of randomized response, $k=2$).} Let $p>1/2$. With 2 tuples and one binary attribute, the domain $\inp=\set{11, 10, 01, 00}$. An algorithm $\mech$ with matrix representation $M$ belongs to the $\cnf$ of randomized response (with privacy parameter $p$) if for every vector $\vec{x}=(x_{11}, x_{10}, x_{01}, x_{00})$ that is a row of $M$, the following four constraints hold:
{\small
\begin{eqnarray}
p^2 x_{00} + (1-p)^2 x_{11} \geq p(1-p) x_{01} + p(1-p) x_{10}\label{eqn:2crr1}\\
(1-p)^2 x_{00} + p^2 x_{11} \geq p(1-p) x_{01} + p(1-p) x_{10}\label{eqn:2crr2}\\
p^2 x_{01} + (1-p)^2 x_{10} \geq p(1-p) x_{00} + p(1-p) x_{11}\label{eqn:2crr3}\\
(1-p)^2 x_{01} + p^2 x_{10} \geq p(1-p) x_{00} + p(1-p) x_{11}\label{eqn:2crr4}
\end{eqnarray}
}
\end{example}

We use Example \ref{ex:crr2} to explain the intuition behind the process of extracting Bayesian semantic guarantees from the row cone of randomized response, as given by the constraints in Equations \ref{eqn:2crr1}, \ref{eqn:2crr2}, \ref{eqn:2crr3}, and \ref{eqn:2crr4}.  Let us consider the following three attackers.

\vspace{0.5em}
\noindent\textbf{Attacker 1.}
This attacker has the prior beliefs that $P(\data=11)=p^2$, $P(\data=00)=(1-p)^2$ and $P(\data=01)=P(\data=10)=p(1-p)$, so that each bit is independent and equals $1$ with probability $p$ (this $p$ is the same as the privacy parameter $p$ in randomized response). Let us consider the effect of the constraint in Equation \ref{eqn:2crr1} on the attacker's inference. This constraint says that for all $\mech$ in the $\cnf$ of randomized response and for all $\omega\in\range(\mech)$,
{\small
\begin{eqnarray}
p^2 P[\mech(11)=\omega] + (1-p)^2P[\mech(00)=\omega] ~\geq~ p(1-p)P[\mech(01)=\omega] + p(1-p)P[\mech(10)=\omega]\label{eqn:crrex}
\end{eqnarray} 
}
Note that the coefficients in the linear constraints have the same values as the prior probabilities of the possible input datasets.
Substituting those prior beliefs into Equation \ref{eqn:crrex}, we get the constraint that for all $\omega\in\range(\mech)$:
{\small
\begin{eqnarray*}
P(\data=11) P[\mech(11)=\omega] + P(\data=00)P[\mech(00)=\omega] ~\geq~ P(\data=01)P[\mech(01)=\omega] + P(\data=10)P[\mech(10)=\omega]
\end{eqnarray*} 
}
which in turn is equal to the constraint on the attacker's belief about the joint distribution of the input and output of $\mech$:
{\small
\begin{eqnarray*}
P[\parity(\data)=0\wedge\mech(\data)=\omega] ~\geq~ P[\parity(\data)=1\wedge\mech(\data)=\omega]
\end{eqnarray*}
}
Dividing both sides by $P(\mech(\data)=\omega)$ (where $\data$ is a random variable), we get the following constraints that $\mech$ imposes on the attacker's posterior distribution:
{\small
\begin{eqnarray*}
P[\parity(\data)=0~|~\mech(\data)=\omega] ~\geq~ P[\parity(\data)=1~|~\mech(\data)=\omega]
\end{eqnarray*}
}
Thus $\mech$ guarantees that if an attacker  believes that bits in the database are generated independently with probability $p$, then after seeing the sanitized output, the attacker will believe that the true input is more likely to have even parity. Also, note that the attacker's \emph{prior} belief about even parity  (which is $p^2+(1-p)^2$) is greater than the attacker's prior belief about odd parity (which is $2p(1-p)$). Therefore $\mech$ guarantees that the attacker will not change his mind about  which parity, even or odd, is more likely.

\vspace{0.5em}
\noindent\textbf{Attacker 2.}
Now consider a different attacker who believes that the first bit in the true database is $1$ with probability $1-p$ and the second bit is $1$ with probability $p$ (both bits are still independent). Then, by similar calculations, Equation \ref{eqn:2crr3}, implies that for this attacker
{\small
\begin{eqnarray*}
P[\parity(\data)=1~|~\mech(\data)=\omega] ~\geq~ P[\parity(\data)=0~|~\mech(\data)=\omega]
\end{eqnarray*}
}
Thus, after seeing any sanitized output, the attacker will believe that the true input was more likely to have \emph{odd} parity than \emph{even} parity. This attacker's prior belief about odd parity  (which is $p^2+(1-p)^2$) is greater than this attacker's prior belief about even parity (which is $2p(1-p)$). Thus again, any $\mech$ in the $\cnf$ of randomized response will ensure that the attacker will not change his mind about the which parity is more likely.

\vspace{0.5em}
\noindent\textbf{Attacker 3.} This attacker believes that the first bit is $1$ with probability $1/2$ and believes the second bit is $1$ with probability $p$ (the bits are independent of each other). In this case, the attacker's prior beliefs are that odd parity and even parity are \emph{equally likely}. It is easy to see that now the output of $\mech$ can make the attacker change his mind about which parity is more likely (for example, consider what happens when $\rr{p}$ outputs $01$ or $00$). This is true because the attacker was so unsure about parity that even the slightest amount of evidence can change his beliefs about which parity is (slightly) more likely. However, the attacker will not change his mind about the parity of the second bit, for which he has greater confidence. This result is a consequence of Theorem \ref{thm:rrsemantics} below, which formally presents the semantic guarantees of randomized response.

\vspace{0.5em}
The difference between Attacker 3 and Attackers 1, 2 is that Attacker 3 expressed the weakest prior preference between even and odd parity (i.e. $1/2$ vs. $1/2$). Attackers 1 and 2 had stronger prior beliefs about which parity is more likely and as a result randomized response guarantees that they will not change their minds about which parity is more likely. 

The following theorem generalizes these observations to show that randomized response protects the parity of any set of bits whose prior probabilities are $\geq p$ or $\leq 1-p$ (where $p$ is the privacy parameter). It also shows that the only algorithms that have this property are the ones that belong to the trusted set $\cnf(\set{\rr{p}})$. Also note that, by Theorem \ref{thm:rrcnf},  an algorithm $\mech$ with matrix representation $M$ belongs to $\cnf(\set{\rr{p}})$ if and only if every row of $M$ belongs to $\rowcone(\set{\rr{p}})$.
Thus the following theorem completely characterizes the privacy guarantees provided by randomized response.\footnote{All other guarantees are a consequence of them.}

\begin{theorem}\label{thm:rrsemantics}
Let $p$ be a privacy parameter and let $\inp={D_1,\dots, D_{2^k}}$. Let $\mech$ be an algorithm that has a matrix representation whose every row  belongs to the row cone of randomized response. If the attacker believes that the bits in the data are independent and bit $i$ is equal to $1$ with probability $q_i$, then $\mech$ protects the parity of any subset of bits that have prior probability $\geq p$ or $\leq 1-p$. That is, for any subset $\set{\ell_1,\dots,\ell_m}$ of bits of the input data such that  $q_{\ell_j}\geq p~\vee~q_{\ell_j} \leq 1-p$ for $j=1,\dots, m$, the following holds:
\begin{list}{\labelitemi}{\leftmargin=0.5em}
\itemsep 4pt
\parskip 2pt
\item If $P(\parity(J)=0) \geq P(\parity(J)=1)$ then $P(\parity(J)=0~|~\mech(\data)) \geq P(\parity(J)=1~|~\mech(\data))$ 
\item If $P(\parity(J)=1) \geq P(\parity(J)=0)$ then $P(\parity(J)=1~|~\mech(\data)) \geq P(\parity(J)=0~|~\mech(\data))$ 
\end{list}
Furthermore, an algorithm $\mech$ can only provide these guarantees if every row of its matrix representation belongs to $\rowcone(\set{\rr{p}})$.
\end{theorem}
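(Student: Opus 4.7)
The plan is to reduce the theorem to a bit-by-bit tensor product argument. By Theorem \ref{thm:rrcnf}, the row cone is cut out by one linear constraint per bit string $s \in \{0,1\}^k$, namely $m_s \cdot \vec{x} \geq 0$ where $m_{s,i} = p^{\hamming(s,D_i)}(p-1)^{k-\hamming(s,D_i)}$, and both $m_s$ and the ``target'' vector encoding the posterior parity inequality have tensor-product structure across the $k$ bits. The key observation, which reduces everything to a tiny two-dimensional calculation, is that if a desired linear inequality $\vec{g}\cdot\vec{x}\geq 0$ admits a bit-wise non-negative decomposition into the factors of $m_{s_1}^{\text{bit}},\dots,m_{s_k}^{\text{bit}}$, then $\vec{g}$ is a non-negative combination of the $m_s$ and hence is implied by the row cone.

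First I would rewrite the desired conclusion $\mathrm{sign}(P[\parity(J)=0\mid\omega]-P[\parity(J)=1\mid\omega]) = \mathrm{sign}\bigl(\prod_{j\in J}(1-2q_{\ell_j})\bigr)$ as a single linear inequality $\vec{g}\cdot\vec{x}\geq 0$ with $x_i=P[\mech(D_i)=\omega]$. Independence of the bits makes $g_i$ factor as $\prod_j g_{i,j}$, with bit-contribution $(1-q_j,\,q_j)$ for $j\notin J$ and $(1-2q_j)\cdot(1-q_j,\,-q_j)$ for $j\in J$ (the prefactor $(1-2q_j)$ is inserted so that the overall sign of $\vec{g}$ matches the prior parity gap). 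The row-cone constraint vectors similarly factor, with the two bit-options $(-(1-p),\,p)$ and $(p,\,-(1-p))$ for $s_j=0$ and $s_j=1$ respectively.

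Next I would solve the $2\times 2$ system to express each bit-contribution as a non-negative combination of these two options. For $j\notin J$, a direct computation gives strictly positive coefficients summing to $1/(2p-1)$, valid for any $q_j\in[0,1]$ with $p>1/2$. For $j\in J$ the case split is essential: when $q_{\ell_j}\leq 1-p$ one uses the identity on $(1-q_j,-q_j)$ and obtains coefficients $(1-q_j-p)/(2p-1)$ and $(p-q_j)/(2p-1)$, both non-negative by the hypothesis; when $q_{\ell_j}\geq p$ one absorbs a minus sign into the prefactor, writing $(1-2q_j)(1-q_j,-q_j)=(2q_j-1)(-(1-q_j),q_j)$, and the resulting coefficients $(q_j+p-1)/(2p-1)$ and $(q_j-p)/(2p-1)$ are again non-negative. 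Taking tensor products of these bit-wise decompositions yields $\vec{g}=\sum_s\lambda_s m_s$ with $\lambda_s\geq 0$, so $\vec{g}\cdot\vec{x}\geq 0$ for every row $\vec{x}$ of the matrix of $\mech$.

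For the ``furthermore'' direction, I would simply observe that the row-cone constraint $m_s\cdot\vec{x}\geq 0$ is itself a special case of the parity-protection statement, corresponding to $J=\{1,\dots,k\}$ and the product prior $q_j=p$ if $s_j=0$ and $q_j=1-p$ if $s_j=1$; these priors trivially meet the hypothesis, so requiring the parity guarantees for all valid priors and all $J$ already forces every $m_s$ constraint, i.e., puts every row of $M$ into $\rowcone(\set{\rr{p}})$. The main obstacle is purely bookkeeping: tracking how the sign $\prod_{j\in J}(1-2q_{\ell_j})$ propagates through the bit-wise decomposition, and verifying in each of the two sub-cases for $j\in J$ that \emph{both} coefficients of the $2\times 2$ decomposition are non-negative. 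Once those sign manipulations are pinned down, the tensor product step and the converse are essentially automatic.
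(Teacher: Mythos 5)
Your proof is correct, but it runs in the ``dual'' direction to the paper's argument, and the comparison is worth recording. The paper proves the forward implication by writing every row $\vec{x}$ of $M$ as a \emph{nonnegative combination of the generators} of the cone (the rows of $\mrr{p}$) and then checking, via the tensor factorization, that the target functional is nonnegative on each generator --- the per-bit scalars $(1-2q_j)[1-p-q_j]$ and $(1-2q_j)[p-q_j]$ play the role of your $2\times 2$ coefficients. You instead express the target functional $\vec{g}$ as a \emph{nonnegative combination of the facet normals} $m_s$ (the columns of $(\mrr{p})^{-1}$), so that $\vec{g}\cdot\vec{x}\geq 0$ follows directly from the hypothesis $m_s\cdot\vec{x}\geq 0$; your three $2\times2$ solves (coefficients $\tfrac{p}{2p-1}-q_j$ and $\tfrac{1-p}{2p-1}+q_j$ for $j\notin J$, and $\tfrac{p-q_j}{2p-1},\tfrac{1-q_j-p}{2p-1}$ resp.\ $\tfrac{q_j-p}{2p-1},\tfrac{q_j+p-1}{2p-1}$ for the two sub-cases of $j\in J$) all check out. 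A genuine advantage of your route is that arbitrary $J$ is handled uniformly inside the tensor decomposition (the $j\notin J$ factor $(1-q_j,q_j)$ decomposes with nonnegative coefficients for \emph{every} $q_j\in[0,1]$), which eliminates the paper's Step 4 reduction to ``extreme'' attackers with $q_j\in\{0,1\}$ and the ensuing convex-combination argument. Your converse is essentially identical to the paper's Step 3: the priors $q_j\in\{p,1-p\}$ with $J=\{1,\dots,k\}$ make $\vec{g}$ proportional to $m_s$, so the guarantees force every defining inequality of the row cone. The only small omissions are the reduction to $p>1/2$ via Lemma \ref{lem:phalf} (you assume it implicitly) and the routine remark that conditioning on $\omega$ is legitimate because outputs with zero marginal probability under these priors correspond to all-zero rows; neither affects correctness.
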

\begin{proof}
See Appendix \ref{app:rrsemantics}.
\end{proof}

In many cases, protecting the parity of an entire dataset is not  necessary  in privacy preserving applications (in fact, some people find it odd).\footnote{In this setting, we are normally interested only in the parity of individual bits since each bit corresponds to the value of one individual's record.}
Using the row cone, it is possible to relax a privacy definition to get rid of such unnecessary protections. We discuss this idea in Section \ref{sec:applications:relax}.

\subsubsection{The relationship between randomized response and differential privacy.}\label{sec:applications:rrdiffp}
When setting $\epsilon=\log\frac{p}{1-p}$ then it is well known that randomized response satisfies $\epsilon$-differential privacy. Also, for this parameter setting, differential privacy provides the same protection as randomized response for any given bit in the dataset -- a bit corresponds to the record of one individual and differential privacy would allow a bit's value to be retained with probability at most $e^\epsilon/(1+e^\epsilon)=p$ (and therefore flipped with probability $1-p$). However, Theorem \ref{thm:rrsemantics} shows that randomized response goes beyond the protection afforded by differential privacy by requiring stronger protection of the parity of larger sets of bits as well.

Note that Kasiviswanathan et al. \cite{smithlearn} proved  a learning-theoretic separation result between randomized response and differential privacy which roughly states that randomized response cannot be used to efficiently learn a problem called MASKED-PARITY. That concept of parity involves solving a linear system of equations in a $d$-dimensional vector space over the integers modulo 2. While very different from the notion of parity that we study, one direction of future work is to determine if our result about the semantic guarantees of randomized response can lead to a new proof of the result by  Kasiviswanathan et al. \cite{smithlearn}.
 
        \subsection{FRAPP and PRAM}\label{sec:applications:frapp}
           In some cases, it may be difficult to derive the row cone of a privacy definition $\priv$. In these cases, it helps to have some notion of an approximation to a row cone from which semantic guarantees can still be extracted. One might wonder whether the Hausdorff distance \cite{hitchhiker} or some other measure of distance between sets might be a meaningful measure of the quality of an approximation. Unfortunately it is not at all clear what such a distance measure means in terms of semantic guarantees; finding a meaningful quantitative measure is an interesting open problem.  

Thus we take the following approach. If we cannot derive  $\rowcone(\priv)$, our goal becomes to find a strictly larger convex cone $r^\prime$ that contains $\rowcone(\priv)$. The reason is that any linear inequality satisfied by $r^\prime$ is also satisfied by  $\rowcone(\priv)$; the semantic interpretation of the linear inequality is then a guarantee provided by $\priv$. \textbf{Thus the approximation may lose some semantics but never generates incorrect semantics.} This idea leads to the following definition.
\begin{definition}[Approximation cone]\label{def:approxcone}
Given a privacy definition $\priv$, an \emph{approximation cone} of $\priv$ is a closed convex cone $r^\prime$ such that $\rowcone(\priv)\subseteq r^\prime$.
\end{definition}
In this section, we apply this approximation idea to FRAPP \cite{shipraH05:frapp}, which is a privacy definition based on the perturbation technique PRAM \cite{gouweleeuwKWW98:PRAM}.
Recall from Section \ref{sec:related:frapp} that the types of algorithms considered by FRAPP are algorithm $\mech_Q$ that have a transition matrix $Q$ where the $(a,b)$ entry, denoted by $P_Q(b\rightarrow a)$, is the probability that a tuple with value $b$ gets changed to $a$. The algorithm $\mech_Q$ modifies each tuple independently using this transition matrix. 

\begin{definition}[Domain of FRAPP]
Define $\tdom=\set{a_1,a_2,\dots, a_N}$ to be the domain of tuples. Choose an arbitrary ordering for these values. Define the data domain to be $\inp=\set{D_1,D_2,\dots}$ where each $D_i$ is a sequence of $k$ tuples from $\tdom$ and  the list $D_1, D_2, \dots$ is in lexicographic order.
\end{definition}

\begin{definition}[$\gamma$-FRAPP \cite{shipraH05:frapp}]
Given a privacy parameter $\gamma\geq 1$, $\frapp$ is the privacy definition containing all algorithms $\mech_Q$ that use transition matrices $Q$ with the $\gamma$-amplification property \cite{evfimievski:limiting:breaches}: for all tuple values $a,b,c\in\tdom$, $\frac{P_Q(b\rightarrow a)}{P_Q(c\rightarrow a)}\leq \gamma$.
\end{definition}

We now construct an approximation cone for $\frapp$. If $\mech_Q$ is an algorithm in $\frapp$ with transition matrix $Q$, then it is easy to see that the matrix representation of $\mech_Q$, denoted by $M_Q$, is:
$$M_Q=\bigotimes\limits_{i=1}^k Q$$
(where $k$ is the number of tuples in databases from $\inp$ and $\bigotimes$ is the Kronecker product).

Let $e_j$ be the column vector of length $N$ that has a $1$ in position $j$ and $0$ in all other positions. Write $p=\frac{\gamma}{1+\gamma}$ (so that $\gamma=\frac{p}{1-p}$). The constraints imposed on $Q$ by $\frapp$ can then be written as:
$$\forall i,j\in\set{1,\dots, N}~:~ Q (p e_i - (1-p)e_j)\succeq \vec{0}$$

where $\vec{0}$ is the vector containing only $0$ components and $\vec{a}\succeq \vec{b}$ means that $\vec{a}-\vec{b}$ has no negative components.
Therefore every vector $\vec{x}$ that is the row vector of $M_Q$, the matrix representation of $\mech_Q$, must satisfy the constraints:
\begin{eqnarray}
\forall i_1,\dots, i_k,j_1,\dots,j_k\in\set{1,\dots,N} ~:~  M_Q\left( \bigotimes\limits_{\ell=1}^k (p e_{i_\ell} - (1-p)e_{j_\ell})\right)\succeq \vec{0}\label{eqn:frapprowvector}
\end{eqnarray}

Using these constraints we can define the Kronecker approximation cone for FRAPP.
\begin{definition}\emph{(Kronecker approximation cone $\tilde{K}_p$).}
Given a privacy parameter $\gamma$, let $p=\frac{\gamma}{\gamma+1}$.
Define the \emph{Kronecker approximation cone}, denoted by $\tilde{K}_p$ to be the set of vectors $\vec{x}$ that satisfy the linear constraints in Equation \ref{eqn:frapprowvector} (where $e_{j_\ell}$ is the $j_\ell^\text{th}$ column vector of the $N\times N$ identity matrix).
\end{definition}
\begin{lemma}\label{lem:frappapprox}
Let $p=\frac{\gamma}{\gamma+1}$. Then $\tilde{K}_p$ is an approximation cone for $\frapp$.
\end{lemma}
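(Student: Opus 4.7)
The plan is to verify the two defining properties of an approximation cone: first that $\tilde{K}_p$ is a closed convex cone, and then that $\rowcone(\frapp) \subseteq \tilde{K}_p$.

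Part one is immediate: $\tilde{K}_p$ is the intersection of a family of homogeneous non-strict linear inequalities on $\vec{x}$, one per tuple of indices $(i_1,\dots,i_k,j_1,\dots,j_k)$, so it is automatically a closed convex cone.

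For the containment, the key step is to show that every row of the matrix representation $M_Q$ of any single algorithm $\mech_Q \in \frapp$ lies in $\tilde{K}_p$. As already observed just before the lemma, the $\gamma$-amplification property on $Q$ is equivalent to $Q(p e_i - (1-p) e_j) \succeq \vec{0}$ for every pair of standard basis vectors $e_i, e_j$. I would then invoke the mixed-product Kronecker identity to compute
\[
M_Q\Bigl(\bigotimes_{\ell=1}^{k}\bigl(p e_{i_\ell}-(1-p) e_{j_\ell}\bigr)\Bigr) \;=\; \bigotimes_{\ell=1}^{k} Q\bigl(p e_{i_\ell}-(1-p) e_{j_\ell}\bigr),
\]
which is a Kronecker product of entrywise non-negative vectors and therefore is itself entrywise non-negative. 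Each component of this vector is exactly the inner product of some row of $M_Q$ with the tested direction $\bigotimes_\ell(p e_{i_\ell}-(1-p) e_{j_\ell})$, so every row of $M_Q$ satisfies every defining inequality of $\tilde{K}_p$.

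To finish, I would lift this from $\frapp$ to its consistent normal form using Theorem~\ref{thm:closure}: every $\mech \in \cnf(\frapp)$ has the form $\choice^{\vec p}(\randalg_1 \circ \mech_{Q_1}, \dots, \randalg_n \circ \mech_{Q_n})$, so its matrix representation is a non-negative combination of matrices of the form $A_i M_{Q_i}$, where $A_i$ is the (non-negative, stochastic) matrix representation of $\randalg_i$. Each row of $A_i M_{Q_i}$ is a non-negative combination of rows of $M_{Q_i}$, and the $\choice$-mixture then takes a further non-negative combination across $i$. Since $\tilde{K}_p$ is a convex cone it is closed under such combinations and under scaling by any $c \geq 0$, which places all of $\rowcone(\frapp)$ inside $\tilde{K}_p$. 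The main obstacle is really only the Kronecker identity together with the observation that entrywise non-negativity of each one-tuple factor $Q(p e_{i_\ell}-(1-p) e_{j_\ell})$ transfers to their Kronecker product; everything else is bookkeeping via Theorem~\ref{thm:closure}.
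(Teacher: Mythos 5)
Your proof is correct and follows essentially the same route as the paper's: verify the constraints on each row of $M_Q=\bigotimes_{\ell=1}^k Q$ via the Kronecker mixed-product identity applied to the one-tuple inequalities $Q(pe_i-(1-p)e_j)\succeq\vec{0}$, then lift to $\cnf(\frapp)$ through Theorem~\ref{thm:closure} by noting that post-processing and $\choice$-mixtures only produce nonnegative combinations of rows, which a convex cone absorbs. The only difference is cosmetic: you spell out the mixed-product step that the paper attributes to ``properties of the Kronecker product.''
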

\begin{proof}
See Appendix \ref{app:frappapprox}.
\end{proof}

The connection between the approximation cone $\tilde{K}_p$ of FRAPP and $\rowcone(\rr{p})$, the row cone of randomized response, is clear once we rephrase the linear constraints that define $\rowcone(\rr{p})$ in Theorem \ref{thm:rrcnf} as follows:

\begin{eqnarray*}
\vec{x}\in \rowcone(\rr{p}) \Leftrightarrow ~ \forall ~ i_1,\dots, i_k,j_1,\dots,j_k\in\set{1,2} ~:~ \vec{x}\cdot \left( \bigotimes\limits_{\ell=1}^k (p e^\prime_{i_\ell} - (1-p)e^\prime_{j_\ell})\right)\geq 0
\end{eqnarray*}
where $e_{j_\ell}$ is the $j_\ell^\text{th}$ column vector of the $2\times 2$ identity matrix.

Thus we can use Theorem \ref{thm:rrsemantics}, which gave a semantic interpretation for randomized response to derive some of the semantic guarantees provided by FRAPP.

These guarantees are as follows. Suppose Bob is an attacker who satisfies the following conditions. 
\begin{list}{\labelitemi}{\leftmargin=0.5em}
\itemsep 4pt
\parskip 2pt
\item Bob believes that the tuples in the true dataset are independent, 
\item Bob has ruled out all but two values for the tuple of each individual. That is, for each $i$, Bob knows that the value of tuple $t_i$ is either some value $a_i\in\tdom$ or $b_i\in\tdom$.
\item For each tuple $t_i$, Bob believes that $t_i=a_i$ with probability $q_i$ and $t_i=b_i$ with probability $1-q_i$.
\end{list}
then for any subset $J$ of the tuples such that $t_i\in J$ only if $q_i\geq p=\frac{\gamma}{1+\gamma}$, then if Bob believes $P(\parity(J)=1)\geq P(\parity(J)=0)$ then after seeing output $\omega$, Bob believes $P(\parity(J)=1~|~\omega)\geq P(\parity(J)=0~|~\omega)$, and if Bob believes $P(\parity(J)=0)\geq P(\parity(J)=1)$ then $P(\parity(J)=0~|~\omega)\geq P(\parity(J)=1~|~\omega)$. Here parity can be defined arbitrarily by either treating $a_i$ or $b_i$ as a $1$ bit.

In the case of FRAPP, we also see that one of its guarantees is the protection of parity. This seems to be a general property of privacy definitions that are based on algorithms that operate on individual tuples independently.

       \subsection{Additive Noise} \label{sec:applications:addnoise}
      In this section, we analyze a different class of algorithms -- those that add noise to their inputs. In the cases we study, the input domain is $\inp=\set{\dots, -2, -1, 0, 1, 2, \dots}$ and the algorithm being analyzed adds an integer-valued random variable to its input. In the first case that we study (Section \ref{sec:applications:negbin}), the algorithm adds a random variable of the form $Z=X-Y$ where $X$ and $Y$ have the negative binomial distribution; this includes the geometric mechanism \cite{universallyUtilityMaximizingPrivacyMechanisms} as a special case. In the second case (Section \ref{sec:applications:skellam}), the algorithm adds a random variable from a Skellam distribution \cite{skellamdist}, which has the form $Z=X-Y$ where $X$ and $Y$ have Poisson distributions.

\subsubsection{Differenced Negative Binomial Mechanism}\label{sec:applications:negbin}
The Geometric$(p)$ distribution is a probability distribution over nonnegative integers $k$ with mass function $p^k(1-p)$. The negative binomial distribution, NB$(p,r)$,  is a probability distribution over nonnegative integers $k$ with mass function ${k+r-1\choose k} p^k(1-p)^r$. It is well-known (and easy to show) that an NB$(p,r)$ random variable has the same distribution as the sum of $r$ independent Geometric$(p)$ random variables.
In order to get a distribution over the entire set of integers, we can use the difference of two independent NB$(p,r)$ random variables. This leads to the following noise addition algorithm:
\begin{definition}\label{def:negbinmech}\emph{(Differenced Negative Binomial Mechanism $\mech_{DNB(p,r)}$).}
Define  $\mech_{DNB(p,r)}$ to be the algorithm that adds $X-Y$ to its input, where $X$ and $Y$ are two independent random variables having the negative binomial distribution with parameters $p$ and $r$. We call $\mech_{DNB(p,r)}$ the \emph{differenced negative binomial mechanism}.
\end{definition}
The relationship to the geometric mechanism \cite{universallyUtilityMaximizingPrivacyMechanisms}, which adds a random integer $k$ with distribution $\frac{1-p}{1+p}p^{|k|}$, is captured in the following lemma:
\begin{lemma}\label{lem:mechgeo}
$\linebreak[0]\mech_{DNB(p,1)}$, the differenced negative binomial mechanism with $r=1$, is the geometric mechanism.
\end{lemma}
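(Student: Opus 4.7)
The plan is to verify the claim by a direct convolution computation. First I would observe that the negative binomial distribution NB$(p,1)$ coincides with the geometric distribution, since its mass function ${k+0\choose k}p^k(1-p) = p^k(1-p)$ matches that of Geometric$(p)$. Hence $\mech_{DNB(p,1)}$ adds $Z = X - Y$ where $X$ and $Y$ are independent Geometric$(p)$ random variables, and the goal is to show $P(Z=k) = \frac{1-p}{1+p}\,p^{|k|}$ for every integer $k$.

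Next I would compute $P(Z=k)$ for $k \geq 0$ by conditioning on $Y$:
\begin{eqnarray*}
P(Z=k) &=& \sum_{j=0}^{\infty} P(X = k+j)\,P(Y=j) \\
&=& \sum_{j=0}^{\infty} p^{k+j}(1-p)\cdot p^j(1-p) \\
&=& p^k (1-p)^2 \sum_{j=0}^{\infty} p^{2j}.
\end{eqnarray*}
Summing the geometric series gives $\sum_{j=0}^{\infty} p^{2j} = \frac{1}{1-p^2} = \frac{1}{(1-p)(1+p)}$, so $P(Z=k) = \frac{(1-p)\,p^k}{1+p}$ for $k \geq 0$. For $k < 0$, the distributions of $X$ and $Y$ are identical, so $Z$ and $-Z$ have the same law, which immediately yields $P(Z = k) = P(Z = -k) = \frac{(1-p)\,p^{|k|}}{1+p}$. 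Combining the two cases matches the mass function of the noise used by the geometric mechanism, completing the argument.

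There is no real obstacle here; the only thing to be careful about is (i) correctly recognizing that the NB$(p,1)$ parametrization used in Definition \ref{def:negbinmech} collapses to Geometric$(p)$ as defined just above it, and (ii) keeping track of the two factors of $(1-p)$ so that the final normalization $\frac{1-p}{1+p}$ comes out correctly rather than, say, $\frac{(1-p)^2}{1+p}$ or $\frac{1}{1+p}$.
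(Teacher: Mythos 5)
Your proposal is correct and follows essentially the same route as the paper: both reduce to showing that the difference of two independent Geometric$(p)$ variables has mass function $\frac{1-p}{1+p}p^{|k|}$ via the same convolution sum and geometric series. The only cosmetic difference is that you dispatch the $k<0$ case by the symmetry of $Z$ and $-Z$, whereas the paper writes out the second sum explicitly before merging the cases; the computation is identical.
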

\begin{proof}
See Appendix \ref{app:mechgeo}.
\end{proof}

The following theorem gives us the row cone of the differenced negative binomial mechanism.
\begin{theorem}\label{thm:dnbrowcone}
 A  bounded row vector $\vec{x}=(\dots, \linebreak[0]x_{-2}, \linebreak[0]x_{-1}, \linebreak[0]x_0, x_1, x_2, \dots)$ belongs to $\rowcone(\set{\mech_{DNB(p,r)}})$ if for all integers $k$,
$$\forall k:~~~\sum\limits_{j=-r}^r (-1)^j f_B\left(j;\frac{p}{1+p},r\right) x_{k+j} \geq 0$$
where $p$ and $r$ are the parameters of the differenced negative binomial distribution and $f_B(\cdot;p/(1+p),r)$ is the probability mass function of the difference of two independent binomial (not negative binomial) distributions whose parameters are $p/(1+p)$ (success probability) and $r$ (number of trials).
\end{theorem}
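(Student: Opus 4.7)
The plan is to invoke Theorem~\ref{thm:invcnfinf}(1), which reduces the characterization of $\rowcone(\set{\mech^*})$ to computing the columns of $(M^*)^{-1}$, provided those columns have uniformly bounded $L_1$ norm. Since $\mech_{DNB(p,r)}$ adds an independent integer-valued noise $Z = X-Y$ with $X,Y \sim \text{NB}(p,r)$, its matrix representation is the doubly-infinite Toeplitz matrix $M^*_{i,j} = P(Z=i-j)$, and left-multiplication by $M^*$ acts as convolution by the mass function of $Z$. Consequently, inverting $M^*$ is equivalent to finding the convolution inverse $h$ of that mass function.

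I would identify $h$ using two-sided generating functions. From $E[z^X] = \bigl((1-p)/(1-pz)\bigr)^r$, the generating function
\[
G_Z(z) = E[z^{X-Y}] = \frac{(1-p)^{2r}}{[(1-pz)(1-p/z)]^r}
\]
is well defined on the annulus $p<|z|<1/p$, and its reciprocal is $1/G_Z(z) = [(1-pz)(1-p/z)]^r/(1-p)^{2r}$. Setting $q=p/(1+p)$ so that $1-q=1/(1+p)$, the elementary identity $(1-pz)(1-p/z) = (1+p)^2(1-q-qz)(1-q-q/z)$ holds, and $[(1-q-qz)(1-q-q/z)]^r = E[(-z)^{X'-Y'}]$ for independent $\text{Bin}(r,q)$ variables $X',Y'$. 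Expanding this expectation gives $\sum_j (-1)^j f_B(j;q,r) z^j$, hence
\[
h(j) = \frac{(1+p)^{2r}}{(1-p)^{2r}}\,(-1)^j f_B\!\bigl(j;\,p/(1+p),\,r\bigr),\qquad j=-r,\dots,r,
\]
and $h(j)=0$ for $|j|>r$.

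Because $h$ has finite support of size $2r+1$, the products $M^*(M^*)^{-1}$ and $(M^*)^{-1}M^*$ each reduce entrywise to finite sums, so the candidate inverse really is the inverse. The $L_1$ norm of every column of $(M^*)^{-1}$ equals $\sum_j |h(j)| = (1+p)^{2r}/(1-p)^{2r}$, a constant independent of the column, so the hypotheses of Theorem~\ref{thm:invcnfinf} are met. Applying part~(1), a bounded row vector $\vec{x}$ lies in $\rowcone(\set{\mech_{DNB(p,r)}})$ iff $\sum_j x_j\, h(j-k) \ge 0$ for every integer $k$; reindexing by $\ell = j-k$ and dividing out the positive constant $(1+p)^{2r}/(1-p)^{2r}$ yields exactly $\sum_{\ell=-r}^{r} (-1)^\ell f_B\!\bigl(\ell;\,p/(1+p),\,r\bigr)\, x_{k+\ell} \ge 0$, as claimed.

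The main obstacle is the generating-function identification: one must recognize that the Laurent polynomial $[(1-pz)(1-p/z)]^r/(1-p)^{2r}$ reparametrizes, under $q=p/(1+p)$, into the generating function of a signed binomial-difference distribution. Once this algebraic fact is spotted, everything else---the Toeplitz/convolution viewpoint, the fact that $(M^*)^{-1}$ literally exists because $h$ is finitely supported, and the bounded-$L_1$ check required by Theorem~\ref{thm:invcnfinf}---is routine.
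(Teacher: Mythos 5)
Your proposal is correct and follows essentially the same route as the paper: both represent $\mech_{DNB(p,r)}$ as a Toeplitz convolution matrix, identify its inverse as the signed difference-of-two-binomials kernel $h(j)=\frac{(1+p)^{2r}}{(1-p)^{2r}}(-1)^jf_B\left(j;\frac{p}{1+p},r\right)$ via a transform argument, check the uniformly bounded column $L_1$ norms, and conclude with Theorem \ref{thm:invcnfinf}(1). The only cosmetic difference is that you work with the two-sided generating function on the annulus $p<|z|<1/p$ and derive $h$ directly from $1/G_Z(z)$ via the reparametrization $q=p/(1+p)$, whereas the paper works with the Fourier series transform (i.e. $z=e^{it}$) and verifies that the product $\widehat{g_{NB}}(t)\,\widehat{h}(t)$ equals the constant $(1-p)^{2r}/(1+p)^{2r}$.
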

\begin{proof}
See Appendix \ref{app:dnbrowcone}.
\end{proof}

To interpret Theorem \ref{thm:dnbrowcone} note that  (1)  the coefficients of the linear inequality are given by the distribution of the difference of two binomials, (2) the coefficients alternate in signs, and (3) for each integer $k$, the corresponding linear inequality has the coefficients shifted over by $k$ spots.

One interpretation of Theorem \ref{thm:dnbrowcone}, therefore, is that if an attacker has managed to rule out all possible inputs except $k-r, k-r+1,\dots, k+r-1, k+r$ and has a prior on these inputs that corresponds to the difference of two binomials (centered at $k$) then after seeing the sanitized output of $\mech_{DNB(p,r)}$, the attacker will believe that the set of possible inputs  $\set{\dots, k-3, k-1, k+1,\dots}$ is not more likely than $\set{\dots, k-4, k-2, k, k+2, \dots}$. Again we see a notion of protection of parity but for a smaller set of possible inputs, and note that \emph{initially} this looks like a one-sided guarantee -- the posterior probability of odd offsets from $k$ does not increase beyond the posterior probability of the even offsets from $k$.

However, what is surprising to us is that this kind of guarantee has many strong implications. To illustrate this point, consider $\mech_{DNB(p,1)}$ which is equivalent to the geometric mechanism. The linear inequalities in Theorem \ref{thm:dnbrowcone} then simplify (after some simple manipulations) to $-x_{k-1} + (p+1/p)x_k - x_{k+1}\geq 0$ which means that a mechanism must satisfy for all $k$, $-P[\mech(k-1)=\omega] +(p+1/p)P[\mech(k)=\omega]-P[\mech(k+1)=\omega]\geq 0$. Using these inequalities in the following telescoping sum, we see that they imply the familiar $\epsilon$-differential privacy constraints with $\epsilon=-\log p$ (so $e^\epsilon=1/p)$).
\begin{eqnarray*}
&& p^{-1} P[\mech(k)=\omega]-P[\mech(k-1)=\omega] \\
&=&\sum\limits_{j=0}^\infty p^j\left(-P[\mech(k-1+j)=\omega] +(p+1/p)P[\mech(k+j)=\omega]-P[\mech(k+1+j)=\omega]\right)\geq 0\\
&& p^{-1} P[\mech(k)=\omega]-P[\mech(k+1)=\omega] \\
&=&\sum\limits_{j=0}^\infty p^j\left(-P[\mech(k-1-j)=\omega] +(p+1/p)P[\mech(k-j)=\omega]-P[\mech(k+1-j)=\omega]\right)\geq 0
\end{eqnarray*}

The take-home message, we believe, from this example is that protections on parity, even one-sided protections can be very powerful (for example, we saw how the one-sided protections in Theorem \ref{thm:dnbrowcone} can imply the two-sided protections in differential privacy). Thus an interesting direction for future work is to develop methods for analyzing how different guarantees relate to each other; for example, if we protect a fact $X$, then what else do we end up protecting?

\subsubsection{Skellam Noise}\label{sec:applications:skellam}
In the previous section, we saw how (differenced) negative binomial noise was related to protections against attackers with (differenced) binomial priors, thus exhibiting a dual relationship between the binomial and negative binomial distributions. In this section, we study noise distributed according to the Skellam distribution \cite{skellamdist}, which turns out to be its own dual.

The Poisson$(\lambda)$ distribution is a probability distribution over nonnegative integers $k$ with distribution $e^{-\lambda}\frac{\lambda^k}{k!}$. A random variable $Z$ has the Skellam$(\lambda_1,\lambda_2)$ distribution if it is equal to the difference $X-Y$ of two independent random variables $X$ and $Y$ having the Poisson$(\lambda_1)$ and Poisson$(\lambda_2)$ distributions, respectively\cite{skellamdist}.

\begin{theorem}\label{thm:skellam}
Let the input domain $\inp=\set{\dots, -2, -1, 0,\linebreak[0] 1, 2, \dots}$  be the set of integers. Let $\mech_{\text{skell($\lambda_1,\lambda_2$)}}$ be the algorithm that adds to its input a random integer $k$ with the Skellam$(\lambda_1,\lambda_2)$ distribution and let $f_Z(\cdot; \lambda_1,\lambda_2)$ be the probability mass function of the Skellam$(\lambda_1,\lambda_2)$ distribution. A  bounded row vector $\vec{x}=(\dots, x_{-2}, x_{-1}, x_0, x_1, x_2, \dots)$ belongs to $\rowcone(\set{\mech_{\text{skell($\lambda_1,\lambda_2$)}}})$ if for all integers $k$,
\begin{eqnarray*}
\sum\limits_{j=-\infty}^\infty (-1)^j f_Z(j;\lambda_1,\lambda_2)x_{k+j}\geq 0
\end{eqnarray*}
\end{theorem}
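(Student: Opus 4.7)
The plan is to apply Theorem \ref{thm:invcnfinf}(1) after constructing the inverse of the matrix representation $M^*$ of $\mech_{\text{skell}(\lambda_1,\lambda_2)}$ explicitly. Since the noise is additive, $M^*$ is a bi-infinite Toeplitz matrix with entry $M^*_{ij} = P(\mech^*(j)=i) = f_Z(i-j;\lambda_1,\lambda_2)$, and hence acts on sequences by convolution with $f_Z$. So it suffices to find the convolution inverse $g$ of $f_Z$, show that the corresponding Toeplitz matrix $(M^*)^{-1}_{ij} = g(i-j)$ is a two-sided inverse, and verify that its column $L_1$ norms are bounded by a constant.

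To construct $g$, I would use generating functions. The Laurent series (on the unit circle $|z|=1$) associated to the Skellam pmf is
\[
F(z) \;=\; \sum_{k\in\mathbb{Z}} f_Z(k;\lambda_1,\lambda_2)\, z^k \;=\; e^{-(\lambda_1+\lambda_2)}\exp\!\bigl(\lambda_1 z + \lambda_2/z\bigr).
\]
Its multiplicative inverse is $G(z) = e^{(\lambda_1+\lambda_2)}\exp(-\lambda_1 z - \lambda_2/z)$. Expanding $\exp(-\lambda_1 z)\exp(-\lambda_2/z)$ as an absolutely convergent double series and grouping by $k=m-n$, the factor $(-1)^m(-1)^n=(-1)^k(-1)^{2n}$ yields
\[
[z^k]\,G(z) \;=\; e^{(\lambda_1+\lambda_2)}(-1)^k \sum_{n\ge \max(0,-k)}\frac{\lambda_1^{n+k}\lambda_2^n}{(n+k)!\,n!} \;=\; e^{2(\lambda_1+\lambda_2)}(-1)^k f_Z(k;\lambda_1,\lambda_2).
\]
So I would define $g(k)=e^{2(\lambda_1+\lambda_2)}(-1)^k f_Z(k;\lambda_1,\lambda_2)$. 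Verifying $f_Z * g = \delta_0$ reduces to the Cauchy-product identity $F(z)G(z)\equiv 1$, which holds coefficientwise because both series are absolutely summable: $\sum_k f_Z(k)=1$ and $\sum_k |g(k)| = e^{2(\lambda_1+\lambda_2)}\sum_k f_Z(k) = e^{2(\lambda_1+\lambda_2)}$.

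That last computation also shows the $L_1$ norm of every column of $(M^*)^{-1}$ equals the constant $C = e^{2(\lambda_1+\lambda_2)}<\infty$, so the hypothesis of Theorem \ref{thm:invcnfinf} is satisfied. Applying part (1), a bounded row vector $\vec{x}$ lies in $\rowcone(\set{\mech_{\text{skell}(\lambda_1,\lambda_2)}})$ iff $\vec{x}\cdot m^{(k)} \geq 0$ for every column $m^{(k)}$ of $(M^*)^{-1}$. Since $m^{(k)}_i = g(i-k)$, this inequality is $\sum_i x_i\, g(i-k)\geq 0$; substituting $j=i-k$ and dividing by the positive constant $e^{2(\lambda_1+\lambda_2)}$ gives the claimed
$\sum_{j\in\mathbb{Z}}(-1)^j f_Z(j;\lambda_1,\lambda_2)\,x_{k+j}\geq 0$ for every $k$.

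The main obstacle I anticipate is the bookkeeping around bi-infinite Toeplitz matrices: establishing that the formally constructed $g$ really gives a two-sided inverse of $M^*$ on bounded sequences, and that all sum interchanges are legitimate. Once we pin down that $\sum_k |g(k)|<\infty$ and $\sum_k f_Z(k)=1$, absolute convergence of the Cauchy product handles this, and the remainder is a direct reinterpretation of the columnwise constraint from Theorem \ref{thm:invcnfinf}(1), exactly parallel to the negative-binomial case treated in Theorem \ref{thm:dnbrowcone} (whose proof relies on a similar generating-function identity between differenced negative binomial and differenced binomial).
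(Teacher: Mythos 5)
Your proposal is correct, and its skeleton is the same as the paper's: identify the matrix representation of $\mech_{\text{skell}(\lambda_1,\lambda_2)}$ as the bi-infinite Toeplitz matrix of $f_Z$, exhibit its inverse as the Toeplitz matrix of $g(k)=e^{2(\lambda_1+\lambda_2)}(-1)^k f_Z(k;\lambda_1,\lambda_2)$, check the bounded column $L_1$ norm ($=e^{2(\lambda_1+\lambda_2)}$ in both arguments), and read off the constraints from Theorem \ref{thm:invcnfinf}(1). Where you diverge is in how the key identity $f_Z\star\bigl((-1)^{\cdot}f_Z(\cdot)\bigr)=e^{-2(\lambda_1+\lambda_2)}\delta_0$ is verified. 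The paper factors $f_Z=f_X\star f_Y$ into its Poisson and negated-Poisson components, sets $g_Z=g_X\star g_Y$ with $g_X(k)=(-1)^kf_X(k)$, and computes $(g_X\star f_X)$ and $(g_Y\star f_Y)$ by hand via the binomial theorem, then invokes commutativity and associativity of convolution. You instead work with the Laurent series $F(z)=e^{-(\lambda_1+\lambda_2)}\exp(\lambda_1 z+\lambda_2/z)$, whose reciprocal is manifestly $e^{\lambda_1+\lambda_2}\exp(-\lambda_1 z-\lambda_2/z)$, and extract $[z^k]G(z)=e^{2(\lambda_1+\lambda_2)}(-1)^kf_Z(k)$ directly; absolute summability of both coefficient sequences justifies the coefficientwise product and the interchange of sums. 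Your computation of $[z^k]G(z)$ checks out against the Skellam mass function $f_Z(k)=e^{-(\lambda_1+\lambda_2)}\sum_n\lambda_1^{n+k}\lambda_2^n/((n+k)!\,n!)$. The generating-function route is shorter and is in fact the same transform technique the paper itself uses for the differenced negative binomial case (Theorem \ref{thm:dnbrowcone}) and advertises in the ``Other distributions'' subsection; the paper's elementary convolution computation for the Skellam case buys nothing extra beyond avoiding Fourier/Laurent machinery. The one point worth making explicit in a final write-up is the two-sided-inverse bookkeeping you flag yourself: since both $f_Z$ and $g$ are absolutely summable and the row vectors $\vec{x}$ are bounded, all the relevant double sums converge absolutely, so $M^{(f)}M^{(g)}=M^{(g)}M^{(f)}=I$ holds entrywise and the application of Theorem \ref{thm:invcnfinf} is legitimate.
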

\begin{proof}
See Appendix \ref{app:skellam}.
\end{proof}

As before, we see that Skellam noise protects parity if the attacker uses a Skellam prior that is shifted\footnote{i.e. the prior has the distribution of $Z+k$ where $k$ is a constant and $Z$ has the Skellam distribution.} by $k$ so that the posterior probability of the set $\set{\dots,k-3, k-1, k+1, k+3,\dots}$ cannot be higher than that of the set $\set{\dots,k-2,k,k+2,\dots}$.

\subsubsection{Other distributions.}
When the input domain is the set of integers there is a general technique for deriving the row cone corresponding to an algorithm that adds integer-valued noise to its inputs. If the noise distribution has probability mass function $f$, then the matrix representation of the noise-addition algorithm is a matrix $M$ (with rows and columns indexed by integers) whose $(i,j)$ entry is $f(i-j)$. One can take the Fourier series transform (characteristic function) $\widehat{f}(t)=\sum_{\ell=-\infty}^\infty f(\ell)e^{i\ell t}$. Let $g$ be the inverse transform of $1/\widehat{f}(t)$, if it exists. Then the inverse of the matrix $M$ is a matrix whose $(i,j)$ entries are $g(i-j)$. In combination with Theorem \ref{thm:invcnfinf}, this allows one to derive the linear constraints defining the row cone. We used this approach to derive the results of Sections \ref{sec:applications:negbin} and \ref{sec:applications:skellam} and the proof of Theorem \ref{thm:dnbrowcone} provides a formal justification for this technique.

        \subsection{Relaxing Privacy Definitions}\label{sec:applications:relax}
            As we saw in Section \ref{sec:applications:rr}, a privacy definition $\priv$ may end up protecting more than we want.
In such cases, we can manipulate the $\rowcone(\priv)$ to relax it. This will give us a new row cone $R$
and will allow us to create a privacy definition $\priv^\prime$ of the form: $\mech\in\priv^\prime$ if and only if
every row of the matrix representation of $\mech$ belongs to $R$. 

To relax $\rowcone(\priv)$, we will replace the linear constraints that define it with weaker linear constraints. An appropriate
tool is Fourier-Motzkin elimination \cite{combopt}, which will produce a new set of linear constraints which are implied
by the old constraints. The new constraints will have fewer variables per constraint. 

We illustrate this technique by continuing Example \ref{ex:crr2} (randomized response on databases with $k=2$ tuples). Rewriting
equations \ref{eqn:2crr1} and \ref{eqn:2crr4} to isolate $x_{01}$ and setting $\alpha=p/(1-p)$, we get
\begin{eqnarray*}
& \alpha x_{00}+ x_{11}/\alpha-x_{10} \geq x_{01} ~\geq~ \alpha x_{00} + \alpha x_{11} - \alpha^2 x_{10}\\
& \Rightarrow x_{11}\leq \alpha x_{10}
\end{eqnarray*}
Recalling that $x_{11}$ is shorthand for $P(\mech(11)=\omega)$ and $x_{10}$ is shorthand for $P(\mech(10)=\omega)$ we see that
Fourier-Motzkin elimination on the original constraints yielded one of the constraints of $(\ln\frac{p}{1-p})$-differential privacy.
Applying Fourier-Motzkin elimination on the other equations in Example \ref{ex:crr2} yields the rest of the differential privacy
constraints. Thus we see that differential privacy is a natural relaxation of randomized response.

\section{Conclusions}\label{sec:conclusions}
We view privacy as a type of theory of information where the goal is to study how different algorithms filter out certain pieces of information.
To this end we proposed the first (to the best of our knowledge) framework for extracting semantic guarantees from privacy definitions.
The framework depends on the concepts of consistent normal form $\cnf(\priv)$ and $\rowcone(\priv)$. The consistent normal form corresponds to an explicit set of trusted algorithms and the row cone corresponds to the type of information that is always protected by an output of an algorithm belonging to a given privacy definition. The usefulness of these concepts comes from their geometric nature and relations to linear algebra and convex geometry.

There are many important directions for future work. These include extracting semantic guarantees that fail with a small probability, such as various probabilistic relaxations of differential privacy (e.g., \cite{ashwin08:map,chaudhuriM06:sampling}). In contrast, the row cone is only useful for finding guarantees that always hold. It is also important to study formal ways of relaxing/strengthening privacy definitions and exploring the relationships between different types of semantic guarantees.
%

\bibliographystyle{abbrv}
\bibliography{shortref}
\clearpage
{
\normalsize
\appendix
\eat{
\section{Protection of Parity: Differential Privacy vs. Randomized Response}\label{app:parity}
\input{appparity}
}

\section{Proof of Theorem \lowercase{\ref{thm:closure}}}\label{app:close}
\begin{theorem}\emph{(Restatement and proof of Theorem \ref{thm:closure}).}
Given a privacy definition $\priv$, its consistent normal form $\cnf(\priv)$ is equivalent to the following.
\begin{enumerate}
\item Define $\priv^{(1)}$ to be the set of all (deterministic and randomized algorithms) of the form $\randalg\circ\mech$, where $\mech\in\priv$, $\range(\mech)\subseteq\domain(\randalg)$, and the random bits of $\randalg$ and $\mech$ are independent of each other. 
\item For any positive integer $n$, finite sequence $\mech_1,\dots,\mech_n$ and probability vector $\vec{p}=(p_1,\dots,p_n)$, use the notation $\choice^{\vec p}(\mech_1,\dots,\mech_n)$ to represent the algorithm that runs $\mech_i$ with probability $p_i$. 
Define $\priv^{(2)}$ to be the set of all algorithms of the form $\choice^{\vec{p}}(\mech_1,\dots,\mech_n)$ where $n$ is a positive integer, $\mech_1,\dots,\mech_n\in\priv^{(1)}$, and $\vec{p}$ is a probability vector.
\item Set $\cnf(\priv)=\priv^{(2)}$.
\end{enumerate}
\end{theorem}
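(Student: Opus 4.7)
The proof is a set equality, and the plan is to prove the two inclusions $\priv^{(2)} \subseteq \cnf(\priv)$ and $\cnf(\priv) \subseteq \priv^{(2)}$ separately. For the first inclusion I would use that $\cnf(\priv)$ contains $\priv$ and is closed under Axioms \ref{ax:post} and \ref{ax:conv}; for the second, I would use that $\cnf(\priv)$ is by definition the \emph{smallest} such set, so it suffices to check that $\priv^{(2)}$ itself contains $\priv$ and already satisfies both axioms.

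For $\priv^{(2)} \subseteq \cnf(\priv)$, start from $\priv \subseteq \cnf(\priv)$. Applying Axiom \ref{ax:post} to each $\mech \in \priv \subseteq \cnf(\priv)$ together with any compatible post-processor $\randalg$ yields $\priv^{(1)} \subseteq \cnf(\priv)$. Then a finite induction on $n$ using Axiom \ref{ax:conv} shows every $n$-ary mixture of elements of $\priv^{(1)}$ lies in $\cnf(\priv)$: write
\[
\choice^{(p_1,\dots,p_n)}(\mech_1,\dots,\mech_n)
\]
as a two-way mixture (with probability $p_1$) of $\mech_1$ and the rescaled $(n{-}1)$-ary mixture of $\mech_2,\dots,\mech_n$ with weights $p_i/(1-p_1)$, and apply the inductive hypothesis. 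The base case $n=1$ is trivial.

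For $\cnf(\priv) \subseteq \priv^{(2)}$, by the minimality of $\cnf(\priv)$ it is enough to verify that $\priv^{(2)}$ (i) contains $\priv$, (ii) is closed under post-processing, and (iii) is closed under convex mixtures. Point (i) follows by taking $\randalg$ to be the identity and $n=1$. For (iii), given $\mech_A = \choice^{\vec p}(\mech_{A,1},\dots,\mech_{A,m})$ and $\mech_B = \choice^{\vec q}(\mech_{B,1},\dots,\mech_{B,\ell})$ in $\priv^{(2)}$, the binary mixture $\choice^{r}_{\mech_A,\mech_B}$ flattens into a single $(m+\ell)$-ary mixture with weight vector $(rp_1,\dots,rp_m,(1-r)q_1,\dots,(1-r)q_\ell)$, whose components all lie in $\priv^{(1)}$. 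For (ii), given $\mech = \choice^{\vec p}(\mech_1,\dots,\mech_n) \in \priv^{(2)}$ with $\mech_i = \randalg_i \circ \mech_i^*$ and $\mech_i^* \in \priv$, one must verify the distributional identity
\[
\randalg \circ \mech \;\equiv\; \choice^{\vec p}(\randalg \circ \mech_1,\dots,\randalg \circ \mech_n),
\]
after which each component equals $(\randalg \circ \randalg_i) \circ \mech_i^* \in \priv^{(1)}$, placing the result in $\priv^{(2)}$.

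The main obstacle I expect is justifying the displayed identity in point (ii). The axioms' independence hypotheses on random bits, combined with the footnote in Axiom \ref{ax:post} identifying algorithms whose output distributions agree on every input, reduce this to the elementary conditioning identity $P[\randalg(\mech(D))=\omega] = \sum_i p_i\, P[\randalg(\mech_i(D))=\omega]$ obtained by conditioning on which sub-algorithm $\mech_i$ is selected by $\choice^{\vec p}$. Once this identity is in hand, the remainder of the argument is routine bookkeeping about flattening and composing algorithms.
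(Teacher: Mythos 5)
Your proposal is correct and follows essentially the same route as the paper's proof: both establish that $\priv^{(2)}$ contains $\priv$ and satisfies Axioms \ref{ax:post} and \ref{ax:conv} (using the distributional identity $\randalg\circ\choice^{\vec p}(\mech_1,\dots,\mech_n)=\choice^{\vec p}(\randalg\circ\mech_1,\dots,\randalg\circ\mech_n)$ for post-processing closure), and both use minimality to conclude that any set containing $\priv$ and closed under the axioms must contain $\priv^{(2)}$. If anything, you are slightly more explicit than the paper about the induction reducing $n$-ary mixtures to the binary mixtures of Axiom \ref{ax:conv} and about flattening nested mixtures, steps the paper subsumes under ``by construction.''
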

\begin{proof}
We need to show that $\priv^{(2)}$ satisfies Axioms \ref{ax:post} and \ref{ax:conv} consistent and that any other privacy definition that satisfies both axioms and contains $\priv$ must also contain $\priv^{(2)}$. 

By construction, $\priv^{(2)}$ satisfies Axiom \ref{ax:conv} (convexity). To show that $\priv^{(2)}$ satisfies Axiom \ref{ax:post} (post-processing), choose any $\mech \in \priv^{(2)}$ and a postprocessing algorithm $\randalg$. By construction of $\priv^{(2)}$, there exists an integer $m$, a sequence of algorithms $\mech^{(1)}_1,\dots,\mech^{(1)}_m$ with each $\mech^{(1)}_i\in \priv^{(1)}$, and a probability vector $\vec{p}=(p_1,\dots,p_m)$ such that $\mech=\choice^p(\mech^{(1)}_1,\dots,\mech^{(1)}_m)$. It is easy to check that $\randalg\circ\mech=\choice^p(\randalg\circ\mech^{(1)}_1,\dots,\randalg\circ\mech^{(1)}_m)$. By construction of $\priv^{(1)}$, $\randalg\circ\mech^{(1)}_i\in\priv^{(1)}$ because $\mech^{(1)}_i\in\priv^{(1)}$. Therefore, by construction of $\priv^{(2)}$, $\randalg\circ\mech\in\priv^{(2)}$ and so $\priv^{(2)}$ satisfies Axiom \ref{ax:post} (post-processing). 

Now let $\priv^\prime$ be some privacy definition containing $\priv$ and satisfying both axioms. By Axiom \ref{ax:post} (post-processing), $\priv^{(1)}\subseteq\priv^\prime$. By Axiom \ref{ax:conv} (convexity) it follows that $\priv^{(2)}\subseteq\priv^\prime$. Therefore $\cnf(\priv)=\priv^{(2)}\subseteq \priv^\prime$.
\end{proof}


\section{Proof of Corollary \lowercase{\ref{cor:one}}}\label{app:corone}

\begin{corollary}\emph{(Restatement of Corollary \ref{cor:one}).}\\
If $\priv=\set{\mech}$ consists of just one algorithm, $\cnf(\priv)$ is the set of all algorithms of the form $\randalg\circ\mech$, where $\range(\mech)\subseteq\domain(\randalg)$ and the random bits in $\randalg$ and $\mech$ are independent of each other.
\end{corollary}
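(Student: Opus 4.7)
The plan is to invoke Theorem \ref{thm:closure} and specialize it to the singleton case $\priv = \set{\mech}$. By that theorem, $\cnf(\priv) = \priv^{(2)}$, where $\priv^{(1)}$ is the set of all post-processings $\randalg \circ \mech$ and $\priv^{(2)}$ is the set of all finite probabilistic mixtures of elements of $\priv^{(1)}$. The corollary amounts to the claim that in the singleton case $\priv^{(2)} = \priv^{(1)}$, i.e.\ that mixing post-processings of a single algorithm $\mech$ yields nothing new beyond a single post-processing of $\mech$.

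First I would dispatch the easy inclusion $\priv^{(1)} \subseteq \priv^{(2)}$ by taking $n=1$ in the definition of $\priv^{(2)}$. For the reverse inclusion $\priv^{(2)} \subseteq \priv^{(1)}$, given an arbitrary element of $\priv^{(2)}$ of the form $\choice^{\vec{p}}(\randalg_1 \circ \mech, \dots, \randalg_n \circ \mech)$, I would explicitly construct a single post-processing algorithm $\randalg$ such that $\choice^{\vec{p}}(\randalg_1 \circ \mech, \dots, \randalg_n \circ \mech)$ is equivalent to $\randalg \circ \mech$. The construction is: on input $\omega$, $\randalg$ first samples an index $i \in \set{1,\dots,n}$ with probability $p_i$, using fresh random bits independent of $\mech$, and then outputs $\randalg_i(\omega)$. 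The domain of $\randalg$ contains $\range(\mech) \subseteq \bigcap_i \domain(\randalg_i)$, and its random bits (those used for sampling $i$, together with those used by the $\randalg_i$'s) are by construction independent of those of $\mech$.

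To verify equivalence in the sense of the footnote following Axiom \ref{ax:post}, I would compute, for any input $D$ and any possible output $z$,
\begin{align*}
P\bigl((\randalg \circ \mech)(D) = z\bigr)
&= \sum_{\omega} P(\mech(D)=\omega)\, P(\randalg(\omega)=z) \\
&= \sum_\omega P(\mech(D)=\omega) \sum_{i=1}^n p_i\, P(\randalg_i(\omega)=z) \\
&= \sum_{i=1}^n p_i \, P\bigl((\randalg_i \circ \mech)(D) = z\bigr) \\
&= P\bigl(\choice^{\vec{p}}(\randalg_1\circ\mech,\dots,\randalg_n\circ\mech)(D)=z\bigr),
\end{align*}
where I use independence of $\mech$'s random bits from those of $\randalg$ (and hence from the index $i$ and the $\randalg_j$'s). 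This shows $\choice^{\vec{p}}(\randalg_1 \circ \mech, \dots, \randalg_n \circ \mech) \in \priv^{(1)}$, completing the inclusion $\priv^{(2)} \subseteq \priv^{(1)}$ and hence the corollary.

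I do not expect a serious obstacle here; the only point to be careful about is the bookkeeping of independent random bits when one collapses the two-stage ``first mix, then post-process'' construction into a single post-processing step, and confirming that the combined $\randalg$ still satisfies the hypotheses of Axiom \ref{ax:post} (independent random bits and $\range(\mech) \subseteq \domain(\randalg)$).
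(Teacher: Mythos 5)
Your proof is correct and takes essentially the same route as the paper: the paper's argument rests on the single identity $\choice^{\vec{p}}(\randalg_1\circ\mech,\dots,\randalg_n\circ\mech)=\left(\choice^{\vec{p}}(\randalg_1,\dots,\randalg_n)\right)\circ\mech$, and your explicitly constructed $\randalg$ (sample $i$ with probability $p_i$, then run $\randalg_i$) is exactly $\choice^{\vec{p}}(\randalg_1,\dots,\randalg_n)$, with the distributional computation verifying that identity. You simply spell out the details the paper leaves implicit.
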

\begin{proof}
We use the notation defined in Theorem \ref{thm:closure}.
The corollary follows easily from process described in Theorem \ref{thm:closure} and the fact that  $$\choice^{\vec{p}}(\randalg_1\circ\mech,\dots,\randalg_n\circ\mech)=\left(\choice^{\vec{p}}(\randalg_1,\dots,\randalg_n)\right)\circ\mech$$ so that the process of computing $\cnf(\priv)$ has stopped after the first step.
\end{proof}

\section{Proof of Theorem \lowercase{\ref{thm:cone}}}\label{app:cone}
\begin{theorem}\emph{(Restatement and proof of Theorem \ref{thm:cone}).}
$\rowcone(\priv)$ is a convex cone.
\end{theorem}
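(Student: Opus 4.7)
The plan is to verify the two defining properties of a convex cone: closure under multiplication by nonnegative scalars, and closure under addition (these together imply convexity).

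Closure under nonnegative scaling is immediate from the definition of $\rowcone(\priv)$. If $\vec{x} = c \cdot (P[\mech(D_1)=\omega], P[\mech(D_2)=\omega], \dots)$ with $c \geq 0$, $\mech \in \cnf(\priv)$ and $\omega \in \range(\mech)$, then for any $\lambda \geq 0$ the vector $\lambda \vec{x}$ has the same form with $c$ replaced by $\lambda c \geq 0$, so $\lambda \vec{x} \in \rowcone(\priv)$.

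The substantive step is closure under addition. Suppose $\vec{x}_i = c_i \cdot (P[\mech_i(D_1)=\omega_i], P[\mech_i(D_2)=\omega_i], \dots)$ for $i=1,2$, with $\mech_i \in \cnf(\priv)$ and $\omega_i \in \range(\mech_i)$. I would first reduce to the case of disjoint output labels: define $\mech_i'$ to be the post-processing of $\mech_i$ that tags each output with the symbol $i$, so $\mech_i' = \randalg_i \circ \mech_i$ where $\randalg_i(\cdot) = (i, \cdot)$. By Axiom \ref{ax:post}, $\mech_i' \in \cnf(\priv)$, and the row of $\mech_i'$ corresponding to the tagged output $(i, \omega_i)$ is identical to the row of $\mech_i$ corresponding to $\omega_i$. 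Thus I may assume the two outputs are distinct symbols, call them $\omega_1^\ast$ and $\omega_2^\ast$. Let $c = c_1 + c_2$; the case $c = 0$ is trivial, so assume $c > 0$ and set $p = c_1 / c \in [0,1]$. By Axiom \ref{ax:conv}, the algorithm $\mech^\ast = \choice^p_{\mech_1', \mech_2'}$ lies in $\cnf(\priv)$, and since the outputs are disjoint,
\begin{align*}
P[\mech^\ast(D_j) = \omega_1^\ast] &= p \, P[\mech_1(D_j) = \omega_1], \\
P[\mech^\ast(D_j) = \omega_2^\ast] &= (1-p)\, P[\mech_2(D_j) = \omega_2].
\end{align*}
Now apply a further post-processing $\randalg$ that collapses both $\omega_1^\ast$ and $\omega_2^\ast$ to a single output $\omega^\sharp$. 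By Axiom \ref{ax:post}, $\randalg \circ \mech^\ast \in \cnf(\priv)$, and the row corresponding to $\omega^\sharp$ has $j$th entry equal to $p \, P[\mech_1(D_j) = \omega_1] + (1-p)\, P[\mech_2(D_j) = \omega_2]$. Scaling this row by $c \geq 0$ yields exactly $\vec{x}_1 + \vec{x}_2$, placing it in $\rowcone(\priv)$.

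I do not expect any hard step here; the argument is essentially a direct application of Axioms \ref{ax:post} and \ref{ax:conv}, which is natural since $\cnf(\priv)$ was defined as the closure under precisely these operations (Theorem \ref{thm:closure}). The only mild subtlety is the relabeling trick to ensure that the two outputs whose row-vectors we wish to combine are distinguishable in the mixed algorithm; tagging via post-processing handles this uniformly regardless of whether $\mech_1$ and $\mech_2$ happen to share output symbols or even be the same algorithm.
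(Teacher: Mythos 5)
Your proof is correct and follows essentially the same route as the paper's: both arguments reduce to relabeling outputs via the post-processing axiom so that the two rows of interest can be merged, mixing the two algorithms with the convexity axiom using weights proportional to $c_1$ and $c_2$, and reading off the resulting row. The only cosmetic difference is that you establish closure under addition (which, combined with closure under nonnegative scaling, gives convexity), whereas the paper verifies the convex combination $t\vec{x}+(1-t)\vec{y}$ directly with mixing weights $\frac{tc_1}{tc_1+(1-t)c_2}$ and $\frac{(1-t)c_2}{tc_1+(1-t)c_2}$.
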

\begin{proof}
Choose any $\vec{v}=(v_1,v_2,\dots)\in\rowcone(\priv)$. Then by definition $c\vec{v}\in\rowcone(\priv)$ for any $c\geq 0$. This takes care of the cone property so that we only need to show that $\rowcone(\priv)$ is a convex set. 

Choose any vectors $\vec{x}=(x_1,x_2,\dots)\in\rowcone(\priv)$, $\vec{y}=(y_1,y_2,\dots)\in\rowcone(\priv)$, and number $t$ such that $0\leq t\leq 1$. We show that $t\vec{x}+(1-t)\vec{y}\in\rowcone(\priv)$. If either $\vec{x}=\vec{0}$ or $\vec{y}=\vec{0}$ then we are done by the cone property we just proved. Otherwise, by definition of row cone, there exist constants $c_1,c_2>0$, algorithms $\mech_1,\mech_2\in\cnf(\priv)$, and sanitized outputs $\omega_1\in\range(\mech_1)$, $\omega_2\in \range(\mech_2)$ such that $\vec{x}/c_1$ is a row of the matrix representation of $\mech_1$ and $\vec{y}/c_2$ is a row of the matrix representation of $\mech_2$:
\begin{eqnarray*}
\vec{x}&=&\Big(c_1P[\mech_1(D_1)=\omega_1],~c_1P[\mech_1(D_2)=\omega_1],\dots\Big)\\
\vec{y}&=&\Big(c_2P[\mech_2(D_1)=\omega_2],~c_2P[\mech_2(D_2)=\omega_2],\dots\Big)
\end{eqnarray*}
Let $\randalg_1$ be the algorithm that outputs $\omega$ if its input is $\omega_1$ and $\omega^\prime$ otherwise. Similarly, let $\randalg_2$ be the algorithm that outputs $\omega$ if its input is $\omega_2$ and $\omega^\prime$ otherwise. Define $\mech_1^\prime\equiv\randalg_1\circ\mech_1$ and $\mech_2^\prime\equiv\randalg_2\circ\mech_2$. Then by Theorem \ref{thm:closure} (and the post-processing Axiom \ref{ax:post}), $\mech_1^\prime,\mech_2^\prime\in\cnf(\priv)$ and 
\begin{eqnarray*}
\vec{x}&=&\Big(c_1P[\mech^\prime_1(D_1)=\omega],~c_1P[\mech^\prime_1(D_2)=\omega],\dots\Big)\\
\vec{y}&=&\Big(c_2P[\mech^\prime_2(D_1)=\omega],~c_2P[\mech^\prime_2(D_2)=\omega],\dots\Big)
\end{eqnarray*}
Now consider the algorithm $\mech^*$ which runs $\mech_1^\prime$ with probability $\frac{tc_1}{tc_1+(1-t)c_2}$ and runs $\mech_2^\prime$ with probability $\frac{(1-t)c_2}{tc_1 + (1-t)c_2}$. By Theorem \ref{thm:closure}, $\mech^*\in\cnf(\priv)$. Then for all $i=1,2,\dots$,
\begin{eqnarray*}
P(\mech^*(D_i)=\omega) &=&\frac{tc_1P(\mech_1^\prime(D_i)=\omega)+(1-t)c_2P(\mech_2^\prime(D_i)=\omega)}{tc_1+(1-t)c_2}\\
&=& \frac{tx_i + (1-t)y_i}{tc_1+(1-t)c_2}
\end{eqnarray*}
Thus the vector $\frac{t\vec{x} + (1-t)\vec{y}}{tc_1+(1-t)c_2}$ is the row vector corresponding to $\omega$ of the matrix representation of $\mech^*$ and is therefore in $\rowcone(\priv)$. Multiplying by the nonnegative constant $tc_1+(1-t)c_2$, we get that $t\vec{x} + (1-t)\vec{y}\in\rowcone(\priv)$ and so $\rowcone(\priv)$ is convex.
\end{proof}


\eat{

\section{Proof of Theorem \lowercase{\ref{thm:nok}}}\label{app:nok}
\begin{theorem}\emph{(Restatement and proof of Theorem \ref{thm:nok}).} Given a fixed schema with a quasi-identifier that contains an integer-valued attribute,  the consistent normal form of $k$-anonymity consists of every algorithm whose input domain contains tables with this schema. The row cone consists of all vectors.
\end{theorem}
\begin{proof}
Without loss of generality, assume Age is the integer-valued quasi-identifier attribute. Consider the algorithm $\mech_1$ that suppresses all attributes except for Age. It then sorts the tuples by Age (breaking ties arbitrarily). Using this sorted order, it puts the first $k$ tuples into the first group, the second $k$-tuples into the second group, etc. For each group $i$, the age is coarsened into an age range $[a_i,b_i]$. The $a_i$ and $b_i$ are decimal numbers (e.g. $3.552$)  that encode all of the tuples in the group $i$. They have the following format.  The number $a_i$ is equal to the minimum age in group $i$ minus $0.4$. The number $b_i$ has the form $\alpha_i+\beta_i$, where $\alpha_i$ is the maximum age in group $i$ plus $0.5$. $\beta_i$ has the form $0.0\gamma_i$, where $\gamma_i$ is a prefix-free encoding of the tuples in group $i$. 

Clearly this algorithm $\mech_1$ satisfies $k$-anonymity and each input table is transformed into a unique ``anonymized'' table. Clearly there also  exists a postprocessing algorithm $\randalg_2$ which can decode the ``anonymized'' table to recover the original table. By Axiom \ref{ax:post} (post-processing), the algorithm that first runs $\mech_1$ and then $\randalg_2$ (to recover the original table) belongs to the consistent normal form of $k$-anonymity. Since that algorithm is the identity, then by Axiom \ref{ax:post} (post-processing) all algorithms are in the consistent normal form of $k$-anonymity.

It easily follows that the row cone consists of all vectors.
\end{proof}

}


\section{Proof of Theorem \lowercase{\ref{thm:invcnfinf}}}\label{app:invcnfinf}

\begin{theorem}\emph{(Restatement and proof of Theorem \ref{thm:invcnfinf}).}
Let $\inp$ be a finite or countably infinite set of possible datasets. Let $\mech^*$ be an algorithm with $\domain(\mech^*)= \inp$. Let $M^*$ be the matrix representation of $\mech^*$ (Definition \ref{def:matrix}). If $(M^*)^{-1}$ exists and the $L_1$ norm of each column of $(M^*)^{-1}$ is bounded by a constant $C$ then 
\begin{list}{\labelitemi}{\leftmargin=1em}
\itemsep 1pt
\parskip 4pt
\item[(1)] A bounded row vector $\vec{x}\in\rowcone(\set{\mech^*})$ if and only if $\vec{x}\cdot m\geq 0$ for every column $m$ of $(M^*)^{-1}$.
\item[(2)] An algorithm $\mech$, with matrix representation $M$, belongs to $\cnf(\set{\mech^*})$ if and only if the matrix $M(M^*)^{-1}$ contains no negative entries.
\item[(3)] An algorithm $\mech$, with matrix representation $M$, belongs to $\cnf(\set{\mech^*})$ if and only if every row of $M$ belongs to $\rowcone(\set{\mech^*})$.
\end{list}
\end{theorem}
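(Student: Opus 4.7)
The plan is to exploit Corollary \ref{cor:one}, which characterizes $\cnf(\set{\mech^*})$ as exactly the set of algorithms of the form $\randalg\circ\mech^*$. In matrix terms, the matrix representation of such a composition is $R\cdot M^*$ where $R$ is a \emph{stochastic matrix} (non-negative entries, each column summing to $1$, corresponding to a valid conditional probability distribution over outputs of $\randalg$). When $(M^*)^{-1}$ exists, this $R$ must equal $M(M^*)^{-1}$, so everything will reduce to checking when that expression defines a stochastic matrix. The hypothesis that every column of $(M^*)^{-1}$ has $L_1$ norm at most $C$ is precisely what ensures the products $\vec{x}(M^*)^{-1}$ and $M(M^*)^{-1}$ converge absolutely whenever $\vec{x}$ is bounded and $M$ has entries in $[0,1]$, which is automatic for any matrix representation.

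The one algebraic fact I need to pull out first is that \emph{every column of $(M^*)^{-1}$ sums to $1$}. Fixing column $j$ and writing $\vec{r}^{(j)}$ for that column, the identity $M^*\vec{r}^{(j)} = \vec{e}_j$ gives $1 = \sum_i (M^*\vec{r}^{(j)})_i = \sum_i\sum_k M^*_{ik} r^{(j)}_k$; the $L_1$ bound makes this double sum absolutely convergent, so I can swap the order to get $\sum_k r^{(j)}_k \sum_i M^*_{ik} = \sum_k r^{(j)}_k$, using that the columns of $M^*$ themselves sum to $1$. An immediate consequence is that if $M$ is any matrix representation then the columns of $M(M^*)^{-1}$ also sum to $1$, so part (2) reduces to a non-negativity check: $\mech\in\cnf(\set{\mech^*})$ iff $M = RM^*$ for some stochastic $R$, iff $R = M(M^*)^{-1} \geq 0$ entrywise.

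For part (1) I will run a direct forward--backward argument. Forward: if $\vec{x}\in\rowcone(\set{\mech^*})$ then $\vec{x}=c\vec{y}$ for some row $\vec{y}$ of a matrix $M = RM^*$ with $R$ stochastic, and the corresponding row of $R$ is exactly $\vec{y}(M^*)^{-1}$, so each entry $\vec{y}\cdot m \geq 0$ and scaling by $c\geq 0$ preserves this. Backward: given a bounded $\vec{x}$ with $\vec{x}\cdot m\geq 0$ for every column $m$ of $(M^*)^{-1}$, set $\vec{r} = c\,\vec{x}(M^*)^{-1}$ for a small enough $c>0$ that every entry of $\vec{r}$ lies in $[0,1]$ (possible because $|\vec{x}\cdot m|\leq\|\vec{x}\|_\infty\,C$); then the two-row matrix with rows $\vec{r}$ and $\mathbf{1}-\vec{r}$ is a stochastic matrix representing a two-output algorithm $\randalg$, and $c\vec{x}$ is a row of $RM^*$, i.e.\ the matrix representation of $\randalg\circ\mech^*\in\cnf(\set{\mech^*})$. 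Part (3) is then immediate: $M(M^*)^{-1}$ has no negative entries iff $\vec{y}\cdot m\geq 0$ for every row $\vec{y}$ of $M$ and every column $m$ of $(M^*)^{-1}$, which by part (1) is the statement that every row of $M$ lies in $\rowcone(\set{\mech^*})$.

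The main obstacle is purely the infinite-dimensional bookkeeping: verifying that the various matrix products converge, that the associativity and Fubini-style swaps I rely on are legitimate, and that $(M^*)^{-1}$ behaves correctly on infinite sums of rows and columns. All of these issues are controlled by the single hypothesis that the columns of $(M^*)^{-1}$ are uniformly bounded in $L_1$ norm, which is exactly why that assumption appears in the statement of the theorem.
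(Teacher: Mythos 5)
Your proposal is correct and follows essentially the same route as the paper's proof: the same use of Corollary \ref{cor:one} to reduce everything to $M = RM^*$ with $R$ column-stochastic, the same two-row construction (rows $\vec{r}$ and $\mathbf{1}-\vec{r}$, with the H\"older-type bound $|\vec{x}\cdot m|\leq\|\vec{x}\|_\infty C$ justifying the rescaling) for the backward direction of (1), and the same column-sum identity $\vec{1}^T(M^*)^{-1}=\vec{1}^T$ for part (2). The only difference is expository --- you justify the summation interchange explicitly where the paper writes it as a formal matrix identity --- so there is nothing to correct.
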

\begin{proof}
We first prove $(1)$. If $\vec{x}$ is the $0$ vector then this is clearly true. Thus assume $\vec{x}\neq\vec{0}$. If $\vec{x}\in\rowcone(\set{\mech^*})$ then by definition of the row cone and by Corollary \ref{cor:one}, $\vec{x}=\vec{y} M^*$ where $\vec{y}$ is a bounded row vector and has nonnegative components. Then $\vec{x}(M^*)^{-1}=\vec{y}M^*(M^*)^{-1}=\vec{y}$ and so $\vec{x}\cdot m\geq 0$ for every column $m$ of $(M^*)^{-1}$. 

For the other direction,  we must construct an algorithm $\randalg$ with matrix representation $A$ such that for some $c>0$, $c\vec{x}$ is a row of $AM^*$ (by definition of row cone and Corollary \ref{cor:one}). Thus, by hypothesis, suppose $\vec{x}\cdot m\geq 0$ for each column vector $m$ of $(M^*)^{-1}$ and consider the row vector $\vec{y}=\vec{x}(M^*)^{-1}$ which therefore has nonnegative entries. Since $\vec{x}$ is bounded and $||m||_1\leq C$ for each column vector $m$ of $(M^*)^{-1}$ then $|\vec{x}\cdot m|\leq ||\vec{x}||_\infty ||m||_1\leq ||\vec{x}||_\infty C$ (by H\"{o}lder's Inequality \cite{rudin}) so that $\vec{y}$ is bounded. Choose a $c$ so that $c\vec{y}$ is bounded by $1$. Consider the algorithm $\randalg$ that has a matrix representation $A$ with two rows, the first row being $c\vec{y}$ and the second row being $1-c\vec{y}$ ($\randalg$ is an algorithm since $c\vec{y}$ and $1-c\vec{y}$ have nonnegative components and the column sums of $A$ are clearly $1$). $\randalg$ is the desired algorithm since $c\vec{x}$ is a row of $AM^*$.

To prove (2) and (3), note that if an algorithm has matrix representation $M$, then $M(M^*)^{-1}$ contains all the dot products between rows of $M$ and columns of $(M^*)^{-1}$. Therefore, the entries of $M(M^*)^{-1}$ are nonnegative if and only if every row of $M$ is in the $\rowcone(\set{\mech^*})$ (this follows directly from the first part of the theorem). Thus (2) and (3) are equivalent and therefore we only need to prove (2).

To prove (2), first note the trivial direction. If $\mech\in\cnf(\set{\mech^*})$ then by definition every row of $M$ is in the row cone (and so by (1) all entries of $M(M^*)^{-1}$ are nonnegative). For the other direction, let $A=M(M^*)^{-1}$ (which has no negative entries by hypothesis). If we can show that the column sums of $A$ are all $1$ then, since $A$ contains no negative entries, $A$ would be a column stochastic matrix and therefore it would be the matrix representation of some algorithm $\randalg$. From this it would follow that $AM^*=M$ and therefore $\randalg\circ\mech^*=\mech$ (in which case $\mech\in\cnf(\mech^*)$ by Theorem \ref{thm:closure}). 

So all we need to do is to prove that the column sums of $A$ are all $1$. Let $\vec{1}$ be a column vector whose components are all $1$. Then since $M$ is a matrix representation of an algorithm (Definition \ref{def:matrix}), $M$ has column sums equal to $1$, and similarly for $M^*$. Thus:
\begin{eqnarray*}
\vec{1}^T&=&\vec{1}^T M^*(M^*)^{-1}\\
&=&\vec{1}^T(M^*)^{-1}\\
&&\text{and therefore}\\
\vec{1}^T A &=& \vec{1}^TM(M^*)^{-1}\\
&=&\vec{1}^T(M^*)^{-1}\\
&=&\vec{1}^T
\end{eqnarray*}
and so the column sums of $A$ are equal to $1$. This completes the proof of this theorem.

\end{proof}

\eat{ 
\section{Proof of Theorem \lowercase{\ref{thm:invcnf}}}\label{app:invcnf}
Before proving this theorem, we need to introduce some definitions from convex analysis \cite{Boyd:convex} and some intermediate results.

\begin{definition}\label{def:cone}\emph{(Polyhedral Cone).} Let $v_1, ..., v_m \in \mathbb{R}^n$ be row-vectors. The \emph{polyhedral cone} of $v_1, ..., v_m$, denoted by $C(v_1, ...,v_m)$, is the closed convex cone generated by $v_1,\dots, v_n$: $C(v_1,\dots,v_m)=\set{\sum_{i=1}^m c_iv_i~|~ c_i\geq 0}$. If $M$ is an $m\times n$ matrix with rows $v_1,\dots,v_m$ then we define $C(M)\equiv C(v_1,\dots,v_m)$.
\end{definition}

The next lemma says that the row cone of $\set{\mech^*}$ is a specific polyhedral cone that is related to the matrix representation $M^*$ of $\mech^*$.
\begin{lemma}\label{lem:dualrow} Let $\mech^*$ be an algorithm with finite domain $\inp$ and let $M^*$ be its matrix representation. Then $\rowcone(\set{\mech^*})=C(M^*)$. 
\end{lemma}
\begin{proof}
Using 
Corollary \ref{cor:one},
it is easy to see that 
\begin{eqnarray*}
\cnf(\set{\mech^*})&=&\set{\randalg \circ\mech^*~:~\range(\mech^*)\subseteq \domain(\randalg)}
\end{eqnarray*}
It is also easy to see that the matrix representation of $\randalg\circ\mech^*$ is equal to $AM^*$, where $A$ is the matrix representation of $\randalg$ (thus composition of algorithms is equivalent to multiplication of the corresponding matrices). Now, each row of $AM^*$ is equivalent to a nonnegative linear combination (with coefficients $\leq 1$) of rows of $M^*$. Therefore each vector $\vec{x}$ is in $\rowcone(\set{\mech^*})$ if and only if $\vec{x}$ is a nonnegative linear combination of rows of $M^*$, which is the same as saying $\rowcone(\set{\mech^*})=C(M^*)$.
\end{proof}

The next step is to find the linear constraints that determine $C(M^*)$ since these are the same constraints that determine $\rowcone(\set{\mech^*})$. For this, we need the concept of a \emph{dual cone}.

\begin{definition}\label{def:dualcone}\emph{(Dual Cone).} Let $C\subseteq \mathbb{R}^n$ be a cone. The dual cone of $C$, denoted by $C^*$, is defined as $\set{ w \in \mathbb{R}^n | v\cdot w \geq 0,~ \forall v \in C}$.
\end{definition}

We need the following result which applies to polyhedral cones such as $C(M^*)$ (note that $C(M^*)$ is a closed convex cone). The importance of this result is that it shows that $C(M^*)$, which is the equivalent to the row cone that we are interested in, is completely defined by the linear inequalities encapsulated by the dual cone $C^*(M^*)$. In other words, by definition of dual cone, $\vec{v}\in C(M^*)$ if and only if $\vec{v}\cdot\vec{w}\geq 0$ for all $\vec{w}$ in the dual cone $C^*(M^*)$.

\begin{lemma}\cite{Boyd:convex}.
\label{lemma:dualdual}
Let $C$ be a cone. Then $C^{*}$ is a closed convex cone. If $C$ is a closed and convex cone, then $C=C^{**}$ ($C$ is the dual of its dual cone).
\end{lemma}

The dual cone $C^*(M^*)$ contains infinitely many vectors and hence places infinitely many linear constraints that must be satisfied by $\vec{v}$ in order for $v$ to be in $C(M*)$, the row cone we are interested in. The following result allows us to find a finite representative set of linear constraints.

\begin{lemma}\cite{burns:dualconeinverse}.
\label{lemma:dual}
Let $M^*$ be an invertible matrix, then $C^*(M^*)=C(~((M^*)^{-1})^T~)$. In other words, the dual cone is generated by the columns of the inverse of $M^*$ -- every vector in the dual cone is a positive linear combination of the (transpose of the) columns of $(M^*)^{-1}$
\end{lemma}

From Lemmas \ref{lemma:dual} and \ref{lemma:dualdual}, it easily follows that  $\vec{v}\in C(M^*)$ if and only if $\vec{v}\cdot\vec{w}\geq 0$ for all $\vec{w}$ that are column vectors of $(M^*)^{-1}$.

We are now in position to prove Theorem \ref{thm:invcnf}.

\begin{theorem}\emph{(Restatement and proof of Theorem \ref{thm:invcnf}).}
Let $\inp=\set{D_1,\dots,D_n}$ be a finite set of possible datasets. Let $\mech^*$ be an algorithm with $\domain(\mech^*)\subseteq \inp$. Let $M^*$ be the matrix representation of $\mech^*$ (Definition \ref{def:matrix})). If $M^*$ is an invertible matrix then, denoting the $i^\text{th}$ column of $(M^*)^{-1}$ as $m^{(i)}$,
\begin{list}{\labelitemi}{\leftmargin=1em}
\itemsep 1pt
\parskip 4pt
\item A vector $\vec{x}\in\rowcone(\set{\mech^*})$ if and only if $\vec{x}\cdot m^{(i)}\geq 0$ for every column $m^{(i)}$ of $(M^*)^{-1}$.
\item An algorithm $\mech$, with matrix representation $M$, belongs to $\cnf(\set{\mech^*})$ if and only if the matrix $M(M^*)^{-1}$ contains no negative entries.
\item An algorithm $\mech$, with matrix representation $M$, belongs to $\cnf(\set{\mech^*})$ if and only if every row of $M$ belongs to $\rowcone(\set{\mech^*})$.
\end{list}
\end{theorem}
\begin{proof}
The characterization of the row cone follows from Lemmas \ref{lem:dualrow}, \ref{lemma:dual}, \ref{lemma:dualdual}, and the discussion surrounding them. This proves the first part of the theorem.

Note that if an algorithm has matrix representation $M$, then $M(M^*)^{-1}$ contains all the dot products between rows of $M$ and columns of $(M^*)^{-1}$. Therefore, the entries of $M(M^*)^{-1}$ are nonnegative if and only if every row of $M$ is in the $\rowcone(\set{\mech^*})$ (this follows directly from the first part of the theorem). Thus proving the second part of the theorem automatically proves the third part.

To prove the second part of the theorem, let $A=M(M^*)^{-1}$. If we can show that the column sums of $A$ are all $1$ then, since $A$ contains nonnegative entries, $A$ would be a column stochastic matrix and therefore it would be the matrix representation of some algorithm $\randalg$. From this it would follow that $AM^*=M$ and therefore $\randalg\circ\mech^*=\mech$ (in which case $\mech\in\cnf(\mech^*)$ by Theorem \ref{thm:closure}). 

So all we need to do is to prove that the column sums of $A$ are all $1$. Let $\vec{1}$ be a column vector of length $n$ whose components are all $1$. Then since $M$ is a matrix representation of an algorithm (Definition \ref{def:matrix}), $M$ has column sums equal to $1$, and similarly for $M^*$. Thus:
\begin{eqnarray*}
\vec{1}^T&=&\vec{1}^T M^*(M^*)^{-1}\\
&=&\vec{1}^T(M^*)^{-1}\\
&&\text{and therefore}\\
\vec{1}^T A &=& \vec{1}^TM(M^*)^{-1}\\
&=&\vec{1}^T(M^*)^{-1}\\
&=&\vec{1}^T
\end{eqnarray*}
and so the column sums of $A$ are equal to $1$. This completes the proof of this theorem.

\end{proof}

} 


\section{Proof of Lemma \lowercase{\ref{lem:phalf}}}\label{app:phalf}
\begin{lemma}\emph{(Restatement and proof of Lemma \ref{lem:phalf}).}
Given a privacy parameter $p$, define $q=\max(p,1-p)$. Then
\begin{list}{\labelitemi}{\leftmargin=2em}
\itemsep 0pt
\parskip 2pt
\item $\cnf(\set{\rr{p}})=\cnf(\set{\rr{q}})$.
\item  If $p=1/2$ then $\cnf(\set{\rr{p}})$ consists of the set of algorithms whose outputs are statistically independent of their inputs (i.e. those algorithms $\mech$ where $P[\mech(D_i)=\omega]=P[\mech(D_j)=\omega]$ for all $D_i,D_j\in\inp$ and $\omega\in\range(\mech)$), and therefore attackers learn nothing from those outputs.
\end{list}
\end{lemma}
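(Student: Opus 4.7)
The plan is to handle the two claims separately, in each case leaning on Corollary \ref{cor:one} to reduce $\cnf(\set{\rr{p}})$ to the set of all post-processings $\randalg\circ\rr{p}$, so that I never need to invoke convexity directly.

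For claim (1), I may assume $p<1/2$ (otherwise $q=p$ and there is nothing to prove), so $q=1-p$. The key observation is that the deterministic algorithm $\randalg_\text{flip}$ that flips every bit of its input satisfies $\randalg_\text{flip}\circ\rr{p}\equiv\rr{1-p}$ (and, symmetrically, $\randalg_\text{flip}\circ\rr{1-p}\equiv\rr{p}$): a given bit is preserved by $\rr{p}$ with probability $p$ and then flipped by $\randalg_\text{flip}$, and is flipped by $\rr{p}$ with probability $1-p$ and then flipped back, so each bit of the composition is flipped with probability $p$, which matches Definition \ref{def:rralg} of $\rr{1-p}$. By Corollary \ref{cor:one} this gives $\rr{1-p}\in\cnf(\set{\rr{p}})$ and $\rr{p}\in\cnf(\set{\rr{1-p}})$. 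Applying Corollary \ref{cor:one} one more time (together with the equivalence footnote on Axiom \ref{ax:post}) shows that every post-processing of $\rr{1-p}$ is a post-processing of $\rr{p}$ and conversely, giving $\cnf(\set{\rr{p}})=\cnf(\set{\rr{1-p}})=\cnf(\set{\rr{q}})$.

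For claim (2), when $p=1/2$ every bit is independently flipped with probability $1/2$, so $\rr{1/2}(D)$ is uniformly distributed on $\set{0,1}^k$ for every input $D$. Hence for any post-processor $\randalg$,
\[
P[(\randalg\circ\rr{1/2})(D)=\omega]=2^{-k}\sum_{\omega'}P[\randalg(\omega')=\omega],
\]
which does not depend on $D$. By Corollary \ref{cor:one} this proves that every $\mech\in\cnf(\set{\rr{1/2}})$ has output distribution that is independent of its input. For the converse direction, let $\mech$ be an arbitrary algorithm whose output distribution is input-independent and call this common distribution $\mu$ on $\range(\mech)$; define $\randalg_\mu$ to be the algorithm that ignores its input and outputs a fresh sample from $\mu$. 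Then $\randalg_\mu\circ\rr{1/2}$ produces distribution $\mu$ on every input, so (by the equivalence footnote of Axiom \ref{ax:post}) it equals $\mech$ in the sense relevant to $\cnf$. Corollary \ref{cor:one} then yields $\mech\in\cnf(\set{\rr{1/2}})$.

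The only subtlety, and therefore the ``hard part'' to be explicit about, is keeping the equivalence-of-algorithms convention from the footnote on Axiom \ref{ax:post} in mind: both directions of each inclusion are produced only up to equality of induced output distributions rather than pointwise equality of code. Once this convention is invoked, each step is a routine consequence of Corollary \ref{cor:one} and the elementary probability computation for composing a bit-flip with $\rr{p}$.
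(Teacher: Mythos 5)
Your proof is correct and follows essentially the same route as the paper's: the paper also uses the always-flip algorithm (written there as $\rr{0}$) and the identity $\rr{0}\circ\rr{p}=\rr{1-p}$ for the first claim, and the same uniformity/post-processing argument for the second, differing only in that its converse step observes $\mech=\mech\circ\rr{1/2}$ rather than introducing your sampler $\randalg_\mu$. No gaps.
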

\begin{proof}
Consider the algorithm $\rr{0}$ which always flips each bit in its input. It is easy to see that $\rr{0}\circ\rr{p}=\rr{1-p}$ and  $\rr{0}\circ\rr{1-p}=\rr{p}$. From Theorem \ref{thm:closure}, it follows that $\cnf(\set{\rr{p}})=\cnf(\set{\rr{1-p}})$ and therefore  $\cnf(\set{\rr{p}})=\cnf(\set{\rr{q}})$.

Clearly, the output of $\rr{1/2}$ is independent of whatever was the true input table $D\in\inp$. By Theorem \ref{thm:closure}, all algorithms in $\cnf(\set{\rr{1/2}})$ have outputs independent of their inputs. For the other direction, choose any algorithm $\mech$ whose outputs are statistically independent of their inputs. Then it is easy to see that $\mech=\mech\circ\rr{1/2}$; that is, $\mech$ and $\mech\circ\rr{1/2}$ have the same range and $P(\mech(D_i)=\omega)=P([\mech\circ\rr{1/2}](D_i)=0)$ for all $D_i\in\inp$ and $\omega\in\range(\mech)$. Thus $\mech\in\cnf(\set{\rr{1/2}})$. 

Clearly, when the output is statistically independent of the input, an attacker can learn nothing about the input after observing the output.
\end{proof}


\section{Proof of Theorem \lowercase{\ref{thm:rrcnf}}}\label{app:rrcnf}
\begin{theorem}\emph{(Restatement and proof of Theorem \ref{thm:rrcnf}).}
Given input space $\inp=\set{D_1,\dots,D_{2^k}}$ of bit strings of length $k$ and a privacy parameter $p> 1/2$,
\begin{list}{\labelitemi}{\leftmargin=1em}
\itemsep 4pt
\parskip 2pt
\item A vector $\vec{x}=(x_1,\dots,x_{2^k})\in\rowcone(\set{\rr{p}})$ if and only if for every bit string $s$ of length $k$,
$$\sum\limits_{i=1}^{2^k}p^{\hamming(s, D_i)}(p-1)^{k-\hamming(s, D_i)}x_i \geq 0$$
where $\hamming(s, D_i)$ is the Hamming distance between $s$ and $D_i$.
\item An algorithm $\mech$ with matrix representation $M$ belongs to $\cnf(\set{\rr{p}})$ if and only if every row of $M$ belongs to $\rowcone(\set{\rr{p}})$.
\end{list}
\end{theorem}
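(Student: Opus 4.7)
The plan is to apply Theorem~\ref{thm:invcnfinf} with $\mech^* = \rr{p}$. Part~2 of the statement will then be immediate from clause~(3) of that theorem, so the entire task reduces to computing $(M^*)^{-1}$ explicitly and reading off the row-cone constraints from its columns.

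The starting observation is that $\rr{p}$ flips each of the $k$ bits independently with probability $1-p$, so its matrix representation factorizes as the $k$-fold Kronecker product $M^* = Q^{\otimes k}$, where $Q$ is the $2\times 2$ matrix with $p$ on the diagonal and $1-p$ off-diagonal. Since $p > 1/2$ we have $\det Q = p^2 - (1-p)^2 = 2p-1 > 0$, so $Q$ is invertible with
\[
Q^{-1} = \frac{1}{2p-1}\begin{pmatrix} p & p-1 \\ p-1 & p \end{pmatrix},
\]
and hence $(M^*)^{-1} = (Q^{-1})^{\otimes k}$ exists. Because $|\inp| = 2^k < \infty$, each column of $(M^*)^{-1}$ has only finitely many entries, so the $L_1$-bound hypothesis of Theorem~\ref{thm:invcnfinf} is satisfied trivially.

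Indexing the rows of $(M^*)^{-1}$ by datasets $D_i$ and its columns by bit strings $s \in \{0,1\}^k$, the Kronecker structure gives, bit by bit,
\[
(M^*)^{-1}_{D_i,\,s} \;=\; \prod_{\ell=1}^{k}(Q^{-1})_{D_i^{(\ell)},\,s_\ell} \;=\; \frac{1}{(2p-1)^k}\,p^{\,k-\hamming(D_i,s)}\,(p-1)^{\hamming(D_i,s)},
\]
since each factor contributes $p$ at positions where $D_i$ and $s$ agree and $p-1$ where they disagree. By Theorem~\ref{thm:invcnfinf}(1), a bounded vector $\vec{x}$ lies in $\rowcone(\set{\rr{p}})$ iff $\vec{x}\cdot m_s \geq 0$ for every such column $m_s$; clearing the positive scalar $(2p-1)^{-k}$ turns this into $\sum_i p^{\,k-\hamming(D_i,s)}(p-1)^{\hamming(D_i,s)}\,x_i \geq 0$ for every $s$. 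Since $s$ ranges over all of $\{0,1\}^k$ and $\hamming(\bar s, D_i) = k - \hamming(s, D_i)$, the substitution $s \mapsto \bar s$ rewrites these inequalities in the form stated in the theorem, proving part~1. Part~2 is Theorem~\ref{thm:invcnfinf}(3) applied to $\mech^* = \rr{p}$.

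The only real obstacle is bookkeeping: tracking which positions contribute $p$ versus $p-1$ in $Q^{-1}$ (and hence in the Kronecker product), and spotting the harmless bit-complement symmetry that reconciles the derived exponents with those printed in the theorem. A quick sanity check against Example~\ref{ex:crr2} (the $k=2$ case) confirms that the constraints produced this way match Equations~\ref{eqn:2crr1}--\ref{eqn:2crr4} term by term, providing confidence in the computation.
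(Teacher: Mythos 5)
Your proof is correct and follows essentially the same route as the paper's: write $\mrr{p}$ as the $k$-fold Kronecker product of the $2\times 2$ bit-flip matrix, invert it factorwise, and apply Theorem~\ref{thm:invcnfinf} to read off the row-cone constraints and part~2. Your explicit check of the $L_1$ hypothesis and the bit-complement substitution reconciling the exponents with the printed statement are careful touches the paper glosses over, but they do not change the substance of the argument.
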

\begin{proof}
Our strategy is to first derive the matrix representation of $\rr{p}$, which we denote by $\mrr{p}$. Then we find the inverse of $\mrr{p}$ and apply Theorem \ref{thm:invcnfinf}. Accordingly, we break the proof down into 3 steps.

\noindent\textbf{Step 1:} \ul{Derive $\mrr{p}$}. Define $B$ to be the matrix
\begin{eqnarray*}
B&=&
\left(
\begin{matrix}
p & 1-p\\
1-p & p
\end{matrix}
\right)
\end{eqnarray*}
Recall that the Kronecker product $C\oplus D$ of an $m\times n$ matrix $C$ and $m^\prime\times n^\prime$ matrix $D$ is the block matrix 
$\left(
\begin{smallmatrix}
c_{11}D & \dots & c_{1n}D\\ 
\vdots & \ddots & \vdots\\ 
c_{m1}D & \dots &c_{mn}D
\end{smallmatrix}
\right)$ 
of dimension $mm^\prime\times nn^\prime$. An easy induction shows that the matrix representation $\mrr{p}$ is equal to the $k$-fold Kronecker product of $B$ with itself:
$$\mrr{p}=\bigotimes\limits_{i=1}^k B$$

The entry in row $i$ and column $j$ of $\mrr{p}$ is equal to $P[\rr{P}(D_j)=D_i]$ and a direct computation shows that this is equal to 
$$p^{\hamming(D_i,D_j)}(1-p)^{k-\hamming(D_i,D_j)}$$

\noindent\textbf{Step 2:} \ul{Derive $(\mrr{p})^{-1}$}.
It is easy to check that 
\begin{eqnarray*}
B^{-1}&=&
\frac{1}{2p-1}\left(
\begin{matrix}
p & -(1-p)\\
-(1-p) & p
\end{matrix}
\right)
\end{eqnarray*}
and therefore 
$$(\mrr{p})^{-1}=\bigotimes\limits_{i=1}^k B^{-1}$$
A comparison with $\bigotimes\limits_{i=1}^k B^{-1}$ shows that the we can calculate the entry in row $i$ and column $j$ of $(\mrr{p})^{-1}$ by taking the corresponding entry of $\mrr{p}$ and replacing every occurrence of $1-p$ with $-(1-p)=p-1$. Thus the entry in row $i$ and column $j$ of $(\mrr{p})^{-1}$ is equal to
$$\frac{1}{(2p-1)^k}p^{\hamming(D_i,D_j)}(p-1)^{k-\hamming(D_i,D_j)}$$
Therefore each column of $(\mrr{p})^{-1}$ has the form:
\begin{eqnarray*}
\frac{1}{(2p-1)^k}
\begin{bmatrix}
p^{\hamming(s,D_1)}(p-1)^{k-\hamming(s,D_1)}\\
p^{\hamming(s,D_2)}(p-1)^{k-\hamming(s,D_2)}\\
\vdots\\
p^{\hamming(s,D_{2^k})}(p-1)^{k-\hamming(s,D_{2^k})}\\
\end{bmatrix}
\end{eqnarray*}
\noindent\textbf{Step 3:} Now we apply Theorem \ref{thm:invcnfinf} and observe that if $m^{(i)}$ is the $i^\text{th}$ column of $(\mrr{p})^{-1}$, then, since $p>1/2$ and $2p-1>0$, the condition $\vec{x}\cdot m^{(i)}$ is equal to the condition
$$\sum\limits_{j=1}^{2^k}p^{\hamming(s, D_j)}(p-1)^{k-\hamming(s, D_j)}x_j \geq 0$$
where $s=D_i$.
\end{proof}


\section{Proof of Theorem \lowercase{\ref{thm:rrsemantics}}}\label{app:rrsemantics}

\begin{theorem}\emph{(Restatement and proof of Theorem \ref{thm:rrsemantics}).}
Let $p$ be a privacy parameter and let $\inp={D_1,\dots, D_{2^k}}$. Let $\mech$ be an algorithm that has a matrix representation whose every row  belongs to the row cone of randomized response. If the attacker believes that the bits in the data are independent and bit $i$ is equal to $1$ with probability $q_i$, then $\mech$ protects the parity of any subset of bits that have prior probability $\geq p$ or $\leq 1-p$. That is, for any subset $\set{\ell_1,\dots,\ell_m}$ of bits of the input data such that  $q_{\ell_j}\geq p~\vee~g_{\ell_j} \leq 1-p$ for $j=1,\dots, m$, the following holds:
\begin{list}{\labelitemi}{\leftmargin=0.5em}
\itemsep 4pt
\parskip 2pt
\item If $P(\parity(J)=0) \geq P(\parity(J)=1)$ then $P(\parity(J)=0~|~\mech(\data)) \geq P(\parity(J)=1~|~\mech(\data))$ 
\item If $P(\parity(J)=1) \geq P(\parity(J)=0)$ then $P(\parity(J)=1~|~\mech(\data)) \geq P(\parity(J)=0~|~\mech(\data))$ 
\end{list}
Furthermore, an algorithm $\mech$ can only provide these guarantees if every row of its matrix representation belongs to $\rowcone(\set{\rr{p}})$.
\end{theorem}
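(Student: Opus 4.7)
The plan is to reduce the parity-preservation claim to a membership question in the cone $C$ generated (as a positive hull) by the columns of $M_{rr(p)}^{-1}$, and then exploit the Kronecker factorization $M_{rr(p)} = B^{\otimes k}$ (with $B=\bigl(\begin{smallmatrix}p & 1-p\\ 1-p & p\end{smallmatrix}\bigr)$) to verify membership coordinate by coordinate.

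First, I would rephrase the desired guarantee. Abbreviating $x_{D_i}=P[\mech(D_i)=\omega]$ and letting $\pi(D_i)\in\set{0,1}$ denote the parity of the bits of $D_i$ at positions $J$, set $v_{D_i}=(-1)^{\pi(D_i)} P(\data=D_i)$. Then $\vec{x}\cdot v = P(\parity(J)=0,\mech(\data)=\omega)-P(\parity(J)=1,\mech(\data)=\omega)$, so after dividing by $P(\mech(\data)=\omega)>0$ the theorem is equivalent to the inequality $\sign\bigl(P(\parity(J)=0)-P(\parity(J)=1)\bigr)\cdot \vec{x}\cdot v \geq 0$ for every row $\vec{x}$ of $M$. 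By Theorem~\ref{thm:invcnfinf} and standard convex-cone duality, this reduces to showing that $\pm v$ lies in $C$; since $M_{rr(p)}$ is invertible, $v\in C$ iff the vector $M_{rr(p)}\,v$ is entrywise nonnegative.

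Second, I would compute $(M_{rr(p)}\,v)_s$. Both $v$ and each row of $M_{rr(p)}$ factor as tensor products over bit positions, so the sum over $D_i\in\set{0,1}^k$ collapses to $(M_{rr(p)}\,v)_s=\prod_{j=1}^k a_j(s_j)$ for one-coordinate sums $a_j(s_j)=\sum_{d\in\set{0,1}} p^{\mathbf{1}[s_j=d]}(1-p)^{\mathbf{1}[s_j\neq d]}\,(-1)^{\mathbf{1}[j\in J]\cdot d}\, q_j^d (1-q_j)^{1-d}$. For $j\notin J$ both $a_j(0)$ and $a_j(1)$ are convex combinations of $q_j$ and $1-q_j$, hence nonnegative. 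For $j\in J$ they simplify to $a_j(0)=p-q_j$ and $a_j(1)=1-p-q_j$, and the hypothesis $q_j\geq p$ or $q_j\leq 1-p$ (with $p>1/2$) forces both values to share a single sign determined solely by which side of the threshold $q_j$ lies on. Hence $(M_{rr(p)}\,v)_s$ has a sign independent of $s$, namely $(-1)^{|\set{j\in J:\,q_j\geq p}|}$. A direct calculation of $P(\parity(J)=0)-P(\parity(J)=1)=\prod_{j\in J}(1-2q_j)$ produces exactly the same sign, so whichever side of the prior-parity inequality holds produces a nonnegative $M_{rr(p)}(\pm v)$, completing the forward direction.

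For the converse, I would observe that each defining inequality of $\rowcone(\set{\rr{p}})$ from Theorem~\ref{thm:rrcnf} is itself an instance of the claimed guarantee. For each bit string $s$, setting $J=\set{1,\dots,k}$ and taking $q_j=p$ when $s_j=0$, $q_j=1-p$ when $s_j=1$, yields a prior satisfying the hypothesis; under this prior one checks that $p^{\hamming(s,D_i)}(p-1)^{k-\hamming(s,D_i)}=(-1)^{k}(-1)^{\parity(s)}(-1)^{\parity(D_i)} P(\data=D_i)$, which identifies the $s$-th row-cone inequality with the posterior parity statement for this prior, up to an overall sign matching the prior parity gap. Consequently any algorithm that satisfies all the parity guarantees in the theorem satisfies every defining inequality of the row cone, so every row of its matrix representation lies in $\rowcone(\set{\rr{p}})$. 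The main obstacle I anticipate is sign bookkeeping in both directions --- in particular, confirming that $\sign\bigl((M_{rr(p)}\,v)_s\bigr)$ matches $\sign\bigl(\prod_{j\in J}(1-2q_j)\bigr)$ so that the common sign drops out cleanly when passing from joint to conditional probabilities.
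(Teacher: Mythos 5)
Your overall strategy is sound and is essentially the paper's: both arguments exploit the Kronecker factorization $\mrr{p}=B^{\otimes k}$ to reduce the parity guarantee to a per-bit sign check, and both prove the converse by recognizing the defining inequalities of $\rowcone(\set{\rr{p}})$ as parity guarantees for the extreme priors with $q_j\in\set{p,1-p}$. Your forward direction is in fact a little cleaner than the paper's: by observing that the factor $a_j(s_j)$ for $j\notin J$ is a convex combination of $q_j$ and $1-q_j$ (hence nonnegative for \emph{any} $q_j$), you handle arbitrary $J$ in a single computation, whereas the paper first specializes to $J=\set{1,\dots,k}$ and then recovers general $J$ by a separate convexity argument over ``extreme attackers'' with $q_j\in\set{0,1}$ off $J$ (its Step 4). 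Framing the sufficiency as ``$\sigma v$ lies in the dual cone, i.e.\ $\mrr{p}(\sigma v)\succeq 0$'' is equivalent to the paper's check of the inequality on each generating row $\vec m$ of $\mrr{p}$, and your per-bit values $a_j(0)=p-q_j$, $a_j(1)=1-p-q_j$ and the sign match with $\prod_{j\in J}(1-2q_j)$ are all correct (given the standing assumption $p>1/2$, which you should justify via Lemma \ref{lem:phalf} as the paper does).

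There is one concrete slip, in exactly the place you flagged: the prior you assign to the $s$-th row-cone inequality in the converse has the convention reversed. The coefficient of $x_{D_i}$ in that inequality is $\prod_j c_j$ with $c_j=p$ when $s_j=d_j$ and $c_j=p-1$ when $s_j\neq d_j$; with your choice ($q_j=p$ when $s_j=0$) one gets $P(\text{bit }j=d_j)=1-p$ precisely when $|c_j|=p$ and vice versa, so the magnitudes do not match (e.g.\ for $D_i=s$ the coefficient is $p^k$ but your prior gives $P(\data=s)=(1-p)^k$), and no global sign $(-1)^k$ repairs this. The correct choice is $q_j=p$ when $s_j=1$ and $q_j=1-p$ when $s_j=0$; then $c_j=(-1)^{s_j}(-1)^{d_j}q_j^{d_j}(1-q_j)^{1-d_j}$ holds bit by bit, the coefficient vector equals $(-1)^{\parity(s)}(-1)^{\parity(D_i)}P(\data=D_i)$ with no extra $(-1)^k$, and since the prior parity gap $\prod_j(1-2q_j)$ has sign $(-1)^{\parity(s)}$, the parity guarantee for this attacker is exactly the $s$-th inequality. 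With that correction the converse closes and the proof is complete.
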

\begin{proof}
We break this proof up into a series of steps. We first reformulate the statements to make them easier to analyze mathematically, then we specialize to the case where $J=\set{1,\dots,k}$ is the set of all bits in the database. We then show that every $\mech$ whose rows (in the corresponding matrix representation) belong to $\rowcone({\rr{p}})$ has these semantic guarantees. We then show that only those $\mech$ provide these semantic guarantees. Finally we show that those results imply that the theorem holds for all $J$ whose bits have prior probability $\geq p$ or $\leq 1-p$.

\vspace{0.5em}
\noindent\textbf{Step 1:} \ul{Problem reformulation and specialization to the case when $J=\set{1,\dots,k}$}. 
Assume $J=\set{1,\dots,k}$ so that for all bits $j$, either $q_j\geq p$ or $q_j\leq 1-p$.

First, Lemma \ref{lem:phalf} allows us to assume that the privacy parameter \ul{$p>1/2$ without any loss of generality}: the case of $p=1/2$ is trivial since the output provides no information about the input so that parity is preserved; in the case of $p<1/2$, the row cone and $\cnf$ are unchanged if we replace $p$ with $1-p$.

Second, we need a few results about parity. An easy induction shows that:
\begin{eqnarray*}
P(\parity(\data)=1)&=&\frac{1-\prod\limits_{j=1}^k (1-2q_j)}{2}\\
P(\parity(\data)=0)&=&\frac{1+\prod\limits_{j=1}^k(1-2q_j)}{2}
\end{eqnarray*}
in particular, if all of the $q_j\neq 1/2$ then $P(\parity(\data)=1)\neq P(\parity(\data)=0)$ so that one parity has higher prior probability than the other.

When $J$ is the set of all $k$ bits, then for all $q_j$, $q_j\neq 1/2$ and so the parities cannot be equally likely \textit{a priori}, the statement about protection of parity can be rephrased as $P(\parity(\data)=0) - P(\parity(\data)=1)$ and $P(\parity(\data)=0~|~\mech(\data)) - P(\parity(\data)=1~|~\mech(\data))$ have the same sign or the posterior probabilities of parity are the same. Equivalently, 
\begin{eqnarray}
0&\leq&\Big(P[\parity(\data)=0] - P[\parity(\data)=1]\Big) \nonumber \\
&& \times \Big(P[\parity(\data)=0~|~\mech(\data)] - P[\parity(\data)=1~|~\mech(\data)]\Big)\label{eqn:prodparity}
\end{eqnarray}

Now, it is easy to see that
\begin{eqnarray}
 \lefteqn{P(\parity(\data)=0) - P(\parity(\data)=1)}\nonumber\\
&=& \left[\bigotimes\limits_{j=1}^k (-q_{j},~ 1-q_{j})\right]\cdot \left[\bigotimes\limits_{j=1}^k (1, 1)\right]\nonumber\\
&=& \prod\limits_{j=1}^k\Big[(-q_j,~ 1-q_{j})\cdot (1,1)\Big]\label{eqn:priorsub}
\end{eqnarray}
and
\begin{eqnarray}
\lefteqn{\hspace{-2cm}P[\parity(\data)=0~|~\mech(\data)] - P[\parity(\data)=1~|~\mech(\data)]}\nonumber\\
&&=\alpha  \left[\bigotimes\limits_{j=1}^k (-q_{j}, 1-q_{j})\right] \cdot\vec{x}\label{eqn:postsub}
\end{eqnarray}
where $\alpha$ is a positive normalizing constant and $\vec{x}$ is a vector of the matrix
representation of $\mech$. So, by Equations \ref{eqn:prodparity}, \ref{eqn:priorsub}, and \ref{eqn:postsub},
the statement about protecting parity is equivalent to
\begin{eqnarray}
\forall \vec{x}\in\rowcone(\set{\rr{p}}) ~:~ 0 \leq \left(\prod\limits_{j=1}^k\Big[(-q_{j}, ~1-q_{j})\cdot (1,1)\Big]\right) *\left(\left[\bigotimes\limits_{j=1}^k (-q_{j}, ~1-q_{j})\right] \cdot\vec{x}\right)\label{eqn:toprove}
\end{eqnarray}

\vspace{0.5em}
\noindent\textbf{Step 2:} \ul{Show that if for all $j$, $q_j\geq p\vee q_j\leq 1-p$ then the constraints in Equation }\ref{eqn:toprove}\ul{ hold (i.e. the most likely parity \textit{a priori} is the most likely parity \textit{a posteriori})}.

It follows from 
Corollary \ref{cor:one}
 that every $\mech\in\cnf(\set{\rr{p}})$ has the form $\randalg\circ\rr{p}$ and so, by Theorem \ref{thm:invcnfinf},  $\vec{x}$ is a row from the matrix representation of an $\mech\in\cnf(\set{\rr{p}})$ if and only if $\vec{x}\in\rowcone(\set{\rr{p}})$. This means that ever such $\vec{x}$ is a nonnegative
linear combination of rows of the randomized response algorithm $\rr{p}$. Thus it suffices to show that 
\begin{eqnarray}
0&\leq& \left(\prod\limits_{j=1}^k\Big[(-q_{j}, ~1-q_{j})\cdot (1,1)\Big]\right) * \left(\left[\bigotimes\limits_{j=1}^k (-q_{j}, ~1-q_{j})\right] \cdot\vec{m}\right)\label{eqn:toprove2}
\end{eqnarray}
for each vector $\vec{m}$ in $\mrr{p}$ (the matrix representation of $\rr{p}$).
It is easy to check that
\begin{eqnarray*}
\mrr{p}=\bigotimes\limits_{i=1}^k
\begin{pmatrix}
p & 1-p\\
1-p & p
\end{pmatrix}
\end{eqnarray*}
and so every vector $\vec{m}$ that is a row of $\mrr{p}$ has the form $\bigotimes\limits_{i=1}^k v_i$ where $v_i=(p,~1-p)$ or $(1-p, ~p)$. Thus right hand side of Equation \ref{eqn:toprove2} has the form:

\begin{eqnarray}
\prod\limits_{j=1}^k\Big[\Big((-q_{j}, ~1-q_{j})\cdot (1,1)\Big)*\Big((-q_j,~1-q_j)\cdot v_i\Big)\Big]\label{eqn:toprove3}
\end{eqnarray}
where $v_i=(p,~1-p)$ or $(1-p, ~p)$. Each term in this product is either 
$$(1-2q_j)*[(1-p)(1-q_j)-q_jp]=(1-2q_j)[1-p-q_j]$$
or 
$$(1-2q_j)*[p(1-q_j)-q_j(1-p)]=(1-2q_j)[p-q_j]$$ 
Recalling that we had assumed $p>1/2$ without any loss of generality, both of these terms are nonnegative if $q_j\geq p > 1/2$ and they are also nonnegative when $q_i\leq (1-p) < 1/2$. Thus the product in Equation \ref{eqn:toprove3} is nonnegative from which it follows that the conditions in Equation \ref{eqn:toprove2} and \ref{eqn:toprove} are satisfied which implies Equation \ref{eqn:prodparity} is satisfied, which proves half of the theorem when restricted to the special case of $J=\set{1,\dots,k}$.

\vspace{0.5em}
\noindent\textbf{Step 3:} \ul{Show that if $\mech$ is a mechanism that protects parity whenever $q_j\geq p~\vee~ q_j\leq 1-p$ for $i=1,\dots,k$ then every row $\vec{x}$ in its matrix representation belongs to $\rowcone(\set{\rr{p}})$}.

We actually prove a more general statement: if $\mech$ is a mechanism that protects parity whenever $q_j= p~\vee~ q_j= 1-p$ for $i=1,\dots,k$ then every row $\vec{x}$ in its matrix representation belongs to $\rowcone(\set{\rr{p}})$.

Recalling the argument leading up to Equation \ref{eqn:toprove} in Step 2 (where we reformulated the problem into a statement that is more amenable to mathematical manipulation), we need to show that if
\begin{eqnarray}
0\leq \left(\prod\limits_{j=1}^k\Big[(-q_{j}, ~1-q_{j})\cdot (1,1)\Big]\right) *\left(\left[\bigotimes\limits_{j=1}^k (-q_{j}, ~1-q_{j})\right] \cdot\vec{x}\right)\label{eqn:step3toprove}
\end{eqnarray}
whenever $q_j=p$ or $q_j=1-p$ then $\vec{x}\in\rowcone(\set{\rr{p}})$.

Define the function:
\begin{eqnarray*}
\sign(\alpha)=\begin{cases}
-1 & \text{ if } \alpha < 0\\
0 & \text{ if } \alpha = 0\\
1 & \text{ if } \alpha > 0\\
\end{cases}
\end{eqnarray*}

Simplifying Equation \ref{eqn:step3toprove} (by computing the dot product in the first term, looking just at the sign of that dot product, and then combining both terms), our goal is to show that if
\begin{eqnarray}
0&\leq& 
\left(\left[\bigotimes\limits_{j=1}^k (-q_{j}, ~1-q_{j})*\sign(1-2q_j)\right] \cdot\vec{x}\right)\label{eqn:step3toprove2}
\end{eqnarray}
whenever $q_j=p$ or $q_j=1-p$ then $\vec{x}\in\rowcone(\set{\rr{p}})$.

Now, when $q_j=p$ (and recalling that we have assumed $p>1/2$ with no loss of generality in Step 1), then $$(-q_{j}, ~1-q_{j})*\sign(1-2q_j)=(p,~ -(1-p))$$ and when $q_j=1-p$ then $$(-q_j, ~1-q_j)*\sign(1-2q_j)=(-(1-p), ~p)$$

Thus asserting that Equation \ref{eqn:step3toprove2} holds whenever $q_j$ equals $p$ or $1-p$ is the same as asserting that the vector:
\begin{eqnarray}
\vec{x}^T \bigoplus\limits_{i=1}^k \frac{1}{2p-1}
\begin{pmatrix}
p & -(1-p)\\
-(1-p) & p
\end{pmatrix}\label{eqn:step3toprove3}
\end{eqnarray}
has no negative components. However, the randomized response algorithm  $\rr{p}$ has a matrix representation $\mrr{p}$ whose inverse (which we also derived in the proof of Theorem \ref{thm:rrcnf}) is 
\begin{eqnarray*}
(\mrr{p})^{-1}=\bigoplus\limits_{i=1}^k \frac{1}{2p-1}
\begin{pmatrix}
p & -(1-p)\\
-(1-p) & p
\end{pmatrix}
\end{eqnarray*}

Thus the condition that the vector in Equation \ref{eqn:step3toprove3} has no negative entries means that $\vec{x}^T(\mrr{p})^{-1}$ has no negative entries and so the dot product of $\vec{x}$ with any column of $(\mrr{p})^{-1}$ is nonnegative. By Theorem \ref{thm:rrcnf}, this means that $\vec{x}\in\rowcone(\set{\rr{p}})$.

This concludes the proof for the entire theorem specialized to the case where $J=\set{1,\dots,k}$. In the next step, we generalize this to arbitrary $J$.

\vspace{0.5em}
\noindent\textbf{Step 4:} 
Now let $J=\set{\ell_1,\dots,\ell_m}$. First consider an ``extreme'' attacker whose prior beliefs $q_j$ are such that $q_j=0$ or $q_j=1$ whenever $j\notin J$. It follows from the previous steps that such an attacker would not change his mind about the parity of the whole dataset. Since the attacker is completely sure about the values of bits outside of $J$, this means that after seeing a sanitized output $\omega$, the attacker will not change his mind about the parity of the bits in $J$.

Now, note that showing
\begin{list}{\labelitemi}{\leftmargin=0.5em}
\itemsep 4pt
\parskip 2pt
\item If $P(\parity(J)=0) \geq P(\parity(J)=1)$ then\\  $P(\parity(J)=0~|~\mech(\data)=\omega) \geq P(\parity(J)=1~|~\mech(\data)=\omega)$ 
\item If $P(\parity(J)=1) \geq P(\parity(J)=0)$ then\\  $P(\parity(J)=1~|~\mech(\data)=\omega) \geq P(\parity(J)=0~|~\mech(\data)=\omega)$ 
\end{list}
is equivalent to showing
\begin{list}{\labelitemi}{\leftmargin=0.5em}
\itemsep 4pt
\parskip 2pt
\item If $P(\parity(J)=0) \geq P(\parity(J)=1)$ then\\ $P(\parity(J)=0~\wedge~\mech(\data))\geq P(\parity(J)=1~\wedge~\mech(\data))$ 
\item If $P(\parity(J)=1) \geq P(\parity(J)=0)$ then\\ $P(\parity(J)=1~\wedge~\mech(\data)=\omega) \geq P(\parity(J)=0~\wedge~\mech(\data)=\omega)$ 
\end{list}
since we just multiply the equations on both sides of the inequalities by the positive number $P(\mech(\data)=\omega)$.

Now consider an attacker Bob such that $q_j\geq p$ or $q_j\leq 1-p$ whenever $j\in J$ and there are no restrictions on $q_j$ for $j\notin J$. There is a corresponding set of $2^{k-|J|}$ ``extreme'' attackers for whom $P($bit $j=1)=q_j$ for $j\in J$ and $P($bit $j=1)\in\set{0,1}$ otherwise.

Bob's vector of prior probabilities over possible datasets
$$(P[\data=D_1], P[\data=D_2], \dots)$$ 
is a convex combination of the corresponding vectors for the extreme attackers.
and thus Bob's joint distributions:
$$P(\parity(J)=1~\wedge~\mech(\data)=\omega)$$
and
$$P(\parity(J)=0~\wedge~\mech(\data)=\omega)$$
are convex combinations of the corresponding posteriors for the extreme attackers, and the coefficients of this convex combination are the same. 

Note that Bob and all of the extreme attackers have the same prior on the parity of $J$. However, we have shown that the extreme attackers will not change their minds about the parity of $J$. Therefore if they believe $P(\parity(J)=1~\wedge~\mech(\data)=\omega)$ is larger than the corresponding probability for even parity, then Bob will have the same belief. If the extreme attackers believe, after seeing the sanitized output $\omega$, that even parity is more likely, then so will Bob. Thus Bob will not change his belief about the parity of the input dataset.
\end{proof}


\section{Proof of Lemma \lowercase{\ref{lem:frappapprox}}}\label{app:frappapprox}

\begin{lemma}\emph{(Restatement and proof of Lemma \ref{lem:frappapprox}).}
Let $p=\frac{\gamma}{\gamma+1}$. Then $\tilde{K}_p$ is an approximation cone for $\frapp$.
\end{lemma}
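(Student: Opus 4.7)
The plan is to verify the two defining properties of an approximation cone: that $\tilde{K}_p$ is a closed convex cone, and that $\rowcone(\frapp)\subseteq \tilde{K}_p$. The first is immediate from Definition of $\tilde{K}_p$, since it is the intersection of finitely many closed half-spaces of the form $\vec{x}\cdot \vec{u}\geq 0$ (one for each choice of indices $i_1,\dots,i_k,j_1,\dots,j_k$), and such an intersection is automatically a closed convex cone. So the real content is the inclusion.

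For the inclusion, I would first show it for the algorithms in $\frapp$ itself. Fix $\mech_Q\in\frapp$ with transition matrix $Q$. The $\gamma$-amplification condition can be rewritten as $Q(pe_i-(1-p)e_j)\succeq \vec{0}$ for every pair $i,j$, since the $a$-th coordinate is $pP_Q(i\to a)-(1-p)P_Q(j\to a)\geq 0$ iff $P_Q(i\to a)/P_Q(j\to a)\leq p/(1-p)=\gamma$. Using the mixed-product property of the Kronecker product,
\begin{equation*}
M_Q\left(\bigotimes_{\ell=1}^k(pe_{i_\ell}-(1-p)e_{j_\ell})\right) = \bigotimes_{\ell=1}^k Q(pe_{i_\ell}-(1-p)e_{j_\ell}),
\end{equation*}
and since each factor on the right has only nonnegative entries, so does their Kronecker product. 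This says precisely that the constraints of Equation \ref{eqn:frapprowvector} hold for $M=M_Q$; i.e., every row of $M_Q$ lies in $\tilde{K}_p$.

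Next I would lift this to all of $\cnf(\frapp)$ by invoking the characterization of Theorem \ref{thm:closure}. For any $\randalg\circ\mech_Q\in\priv^{(1)}$, the matrix representation is $AM_Q$, whose rows are nonnegative combinations of rows of $M_Q$; since $\tilde{K}_p$ is a convex cone and rows of $M_Q$ lie in it, so do rows of $AM_Q$. For a choice algorithm $\choice^{\vec{p}}(\mech_1',\dots,\mech_n')\in\priv^{(2)}$, each row of its matrix representation is a nonnegative scalar multiple of a row of some $M_{\mech_i'}$, and hence again lies in $\tilde{K}_p$. This covers every $\mech\in\cnf(\frapp)$.

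Finally, by definition, every element of $\rowcone(\frapp)$ has the form $c\vec{x}$ where $c\geq 0$ and $\vec{x}$ is a row of the matrix representation of some $\mech\in\cnf(\frapp)$; by the previous step $\vec{x}\in\tilde{K}_p$, and $\tilde{K}_p$ is closed under nonnegative scaling, so $c\vec{x}\in\tilde{K}_p$. Hence $\rowcone(\frapp)\subseteq\tilde{K}_p$, completing the argument. The only mildly non-routine step is the Kronecker identity above; everything else is bookkeeping with the cone/convexity axioms.
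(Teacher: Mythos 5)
Your proof is correct and follows essentially the same route as the paper's: verify that $\tilde{K}_p$ is a closed convex cone, use the Kronecker mixed-product identity to show rows of $M_Q$ satisfy the constraints, and then lift to all of $\cnf(\frapp)$ via Theorem \ref{thm:closure}. The only slight imprecision is that a row of $\choice^{\vec{p}}(\mech_1',\dots,\mech_n')$ is in general a nonnegative \emph{combination} of rows of the $M_{\mech_i'}$ (not a scalar multiple of a single row), but this is immediately absorbed by the convex-cone property you already established.
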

\begin{proof}
Clearly $\tilde{K}_p$ is a closed convex cone. Thus we just need to prove that $\rowcone(\frapp)\subseteq \tilde{K}_p$.

Choose any $\mech_Q\in\frapp$, with matrix representation $M_Q$.
Clearly 
$$M_Q=\bigotimes\limits_{i=1}^k Q$$
and $Q$ satisfies the constraints
$$\forall i,j\in\set{1,\dots, N}~:~ Q (p e_i - (1-p) e_j)\succeq \vec{0}$$
where  $e_{i}$ is the $i^\text{th}$ column vector of the $N\times N$ identity matrix and $\vec{a}\succeq\vec{b}$ means that $\vec{a}-\vec{b}$ has no negative components. It follows from the properties of the Kronecker product that 
\begin{eqnarray}
\forall i_1,\dots, i_k,j_1,\dots,j_k\in\set{1,\dots,N}~:~ M_Q\left( \bigotimes\limits_{\ell=1}^k (p e_{i_\ell} - (1-p)e_{j_\ell})\right)\succeq \vec{0} \label{eqn:frappconstraints}
\end{eqnarray}
Thus each row of the matrix representation of $\mech_Q$ satisfies a set of linear constraints.

From Theorem \ref{thm:closure}, we see that $\cnf(\frapp)$ can be obtained by first creating all algorithms of the form $\randalg\circ\mech_Q$ (for $\mech_Q\in\frapp$) and then by taking the convex combination of those results (i.e. creating algorithms that randomly choose to run one of the algorithms generated in the previous step). However,  the matrix representation of $\randalg\circ\mech_Q$ is equal to $AM_Q$ (where $A$ is the matrix representation of $\randalg$) and every row in $AM_Q$ is a positive linear combination of rows in $\mech_Q$. Thus every row of the matrix representation of $\randalg\circ\mech_Q$ also satisfies the constraints defining $\tilde{K}_p$. Finally, creating an algorithm $\randalg^*$ that randomly choose to run one algorithm in $\set{\randalg_1\circ\mech_{Q_1},\dots,\randalg_h\circ\mech_{Q_h}}$ means that the rows in the matrix representation of $\randalg^*$ is a convex combination of the rows appearing in the matrix representations of the $\randalg_i\circ\mech_{Q_i}$ and so those rows also satisfy the constraints that define $\tilde{K}_p$. Therefore $\rowcone(\frapp)\subseteq \tilde{K}_p$.
\end{proof}


\section{Proof of Theorem \lowercase{\ref{thm:skellam}}}\label{app:skellam}

\begin{theorem}(\emph{Restatement and proof of Theorem \ref{thm:skellam}})
Let the input domain $\inp=\set{\dots, -2, -1, 0,\linebreak[0] 1, 2, \dots}$  be the set of integers. Let $\mech_{\text{skell($\lambda_1,\lambda_2$)}}$ be the algorithm that adds to its input a random integer $k$ with the Skellam$(\lambda_1,\lambda_2)$ distribution and let $f_Z(\cdot; \lambda_1,\lambda_2)$ be the probability mass function of the Skellam$(\lambda_1,\lambda_2)$ distribution. A  bounded row vector $\vec{x}=(\dots, x_{-2}, x_{-1}, x_0, x_1, x_2, \dots)$ belongs to $\rowcone(\set{\mech_{\text{skell($\lambda_1,\lambda_2$)}}})$ if for all integers $k$,
\begin{eqnarray*}
\sum\limits_{j=-\infty}^\infty (-1)^j f_Z(j;\lambda_1,\lambda_2)x_{k+j}\geq 0
\end{eqnarray*}
\end{theorem}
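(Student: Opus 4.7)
The plan is to apply Theorem \ref{thm:invcnfinf} with $\mech^* = \mech_{\text{skell}(\lambda_1,\lambda_2)}$, so there are three things to establish: (a) identify the matrix representation $M^*$, (b) construct its inverse $N$ and verify $M^*N = NM^* = I$, and (c) check that the $L_1$ norm of each column of $N$ is uniformly bounded. Once these are in hand, part (1) of Theorem \ref{thm:invcnfinf} immediately translates into the claimed linear inequality.

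First, since the Skellam mechanism adds an independent Skellam$(\lambda_1,\lambda_2)$ noise to its input, $M^*$ is the doubly infinite Toeplitz matrix with entries $(M^*)_{i,j} = f_Z(i-j;\lambda_1,\lambda_2)$. Its Fourier symbol is the characteristic function
\begin{equation*}
\widehat{f_Z}(t) \;=\; \exp\!\bigl(\lambda_1(e^{it}-1) + \lambda_2(e^{-it}-1)\bigr),
\end{equation*}
which is bounded away from $0$ on $[-\pi,\pi]$, so $1/\widehat{f_Z}(t) = \exp(\lambda_1+\lambda_2 - \lambda_1 e^{it} - \lambda_2 e^{-it})$ is a bounded smooth function. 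I would compute its Fourier coefficients $g(n) = \tfrac{1}{2\pi}\int_{-\pi}^\pi e^{-int}/\widehat{f_Z}(t)\,dt$ by the substitution $t \mapsto s+\pi$, which flips the signs of both $e^{it}$ and $e^{-it}$ and contributes a factor $(-1)^n$ from $e^{-in\pi}$. After this substitution the integrand becomes $(-1)^n e^{2(\lambda_1+\lambda_2)}$ times the inverse-Fourier integral for $f_Z(n;\lambda_1,\lambda_2)$, giving the closed form
\begin{equation*}
g(n) \;=\; (-1)^n\, e^{2(\lambda_1+\lambda_2)}\, f_Z(n;\lambda_1,\lambda_2).
\end{equation*}

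Next, I define $N$ to be the Toeplitz matrix with entries $N_{i,j} = g(i-j)$ and verify that it is the two-sided inverse of $M^*$. The identity $M^*N = I$ amounts to the convolution $\sum_j f_Z(i-j;\lambda_1,\lambda_2)\, g(j-k) = \delta_{i,k}$, which follows from $\widehat{f_Z}(t)\cdot 1/\widehat{f_Z}(t) = 1$ together with absolute convergence of the sum (this is where the Fourier argument must be made rigorous on $\ell_\infty$, and is the main obstacle). The $L_1$ bound is then easy: each column of $N$ has absolute row-sum
\begin{equation*}
\sum_{i} |g(i-k)| \;=\; e^{2(\lambda_1+\lambda_2)}\sum_{i} f_Z(i;\lambda_1,\lambda_2) \;=\; e^{2(\lambda_1+\lambda_2)},
\end{equation*}
independently of $k$, so the hypotheses of Theorem \ref{thm:invcnfinf} are met.

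Finally, part (1) of Theorem \ref{thm:invcnfinf} says that a bounded $\vec{x}$ lies in $\rowcone(\set{\mech^*})$ iff $\vec{x}\cdot m \geq 0$ for every column $m$ of $N$. The $k$-th column has $i$-th entry $(-1)^{i-k} e^{2(\lambda_1+\lambda_2)} f_Z(i-k;\lambda_1,\lambda_2)$; after reindexing with $j = i-k$ and dividing through by the positive constant $e^{2(\lambda_1+\lambda_2)}$, this is exactly $\sum_j (-1)^j f_Z(j;\lambda_1,\lambda_2)\, x_{k+j} \geq 0$, matching the theorem. The hard part, as noted, is the rigorous justification that the formal Fourier inverse really does invert $M^*$ as a linear map on bounded sequences; everything else is a straightforward computation parallel to the negative-binomial case in Theorem \ref{thm:dnbrowcone}.
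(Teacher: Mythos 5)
Your proposal is correct and follows the same overall skeleton as the paper's proof (identify the Toeplitz matrix $M^{(f)}$ with entries $f_Z(i-j;\lambda_1,\lambda_2)$, exhibit its inverse as the Toeplitz matrix with entries $(-1)^{i-j}e^{2(\lambda_1+\lambda_2)}f_Z(i-j;\lambda_1,\lambda_2)$, check the uniform $L_1$ bound on columns, and invoke Theorem \ref{thm:invcnfinf}(1)), but you verify the inversion by a genuinely different route. You work on the Fourier side: you compute the symbol $\widehat{f_Z}$, observe it is bounded away from zero, and extract the coefficients of $1/\widehat{f_Z}$ via the shift $t\mapsto t+\pi$, which cleanly explains \emph{why} the inverse kernel is an alternating-sign, rescaled copy of $f_Z$ itself (the self-duality the paper remarks on). The paper's appendix proof instead stays entirely on the sequence side: it factors $f_Z=f_X\star f_Y$ into the Poisson and negated-Poisson parts, sets $g_X(k)=(-1)^kf_X(k)$ and $g_Y(k)=(-1)^kf_Y(k)$, and shows by a direct binomial-theorem computation that $g_X\star f_X$ and $g_Y\star f_Y$ are point masses at $0$, whence $g_Z\star f_Z=e^{-2(\lambda_1+\lambda_2)}\delta$. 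Your approach is exactly the general recipe the paper sketches in Section \ref{sec:applications:addnoise} under ``Other distributions'' and uses for Theorem \ref{thm:dnbrowcone}; it generalizes more readily to other noise laws, while the paper's combinatorial computation is self-contained and avoids any appeal to Fourier inversion. One reassurance on the step you flag as the ``main obstacle'': there is no real difficulty there. Both $f_Z$ and your $g$ lie in $\ell_1$ (indeed $\sum_n|g(n)|=e^{2(\lambda_1+\lambda_2)}$), so the convolution $\sum_j f_Z(i-j)g(j-k)$ converges absolutely and equals the $(i-k)$-th Fourier coefficient of $\widehat{f_Z}\cdot(1/\widehat{f_Z})\equiv 1$, i.e.\ $\delta_{i,k}$; and the associativity of $\vec{y}M^*(M^*)^{-1}$ for bounded $\vec{y}$ that Theorem \ref{thm:invcnfinf} relies on is justified by Fubini precisely because the rows of $M^*$ and the columns of $(M^*)^{-1}$ are absolutely summable. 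This is the same level of rigor at which the paper's own proof operates.
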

\begin{proof}
For integers $k$, define the functions
\begin{eqnarray*}
f_X(k)&=&
\begin{cases}
e^{-\lambda_1} \frac{\lambda_1^k}{k!}&\text{ if } k\geq 0\\
0 &\text{ if } k<0
\end{cases}\\
f_Y(k)&=&
\begin{cases}
e^{-\lambda_2} \frac{\lambda_2^{-k}}{(-k)!}&\text{ if } k\leq 0\\
0 &\text{ if } k>0
\end{cases}
\end{eqnarray*}
Note that $f_X$ is the probability mass function for a Poisson$(\lambda_1)$ random variable $X$ while \ul{$f_Y$ is the probability mass function of the \textbf{negative} of a Poisson$(\lambda_2)$ random variable $Y$}. 

With this notation, the Skellam distribution is the distribution of the sum $X+Y$. Therefore its probability mass function satisfies the following relation 
$$f_Z(k; \lambda_1,\lambda_2)=\sum\limits_{j=-\infty}^\infty f_X(k-j)f_Y(j)=(f_X\star f_Y)(k)$$
where $f_X\star f_Y$ is the convolution operation.

Now for each integer $k$ define
\begin{eqnarray*}
g_X(k)&=& (-1)^k f_X(k)\\
g_Y(k)&=&(-1)^k f_Y(k)\\
g_Z(k)&=&(g_X\star g_Y)(k)\\
&=&\sum\limits_{j=-\infty}^\infty g_X(k-j)g_Y(j)\\
&=&\sum\limits_{j=-\infty}^\infty (-1)^{k-j}f_X(k-j)(-1)^jf_Y(j)\\
&=&(-1)^k\sum\limits_{j=-\infty}^\infty f_X(k-j)f_Y(j)\\
&=&(-1)^k f_Z(k;\lambda_1,\lambda_2)
\end{eqnarray*}

We will need the following calculations:
\begin{eqnarray*}
(g_X\star f_X)(k)&=&\sum\limits_{j=-\infty}^\infty g_X(k-j)f_X(j)\\
&=&\sum\limits_{j=-\infty}^\infty (-1)^{k-j}f_X(k-j)f_X(j)\\
&=&\sum\limits_{j=0}^k (-1)^{k-j}f_X(k-j)f_X(j)\\
&&\text{(since $f_X$ is $0$ for negative integers also note the summation is $0$ if $j>k)$}\\
&=&\ind{k\geq 0}e^{-2\lambda_1}\sum\limits_{j=0}^k \frac{(-\lambda_1)^{k-j}}{(k-j)!}\frac{\lambda_1^j}{j!}\\
&=&\ind{k\geq 0}\frac{e^{-2\lambda_1}}{k!}\sum\limits_{j=0}^k {k\choose j}(-\lambda_1)^{k-j}\lambda_1^j\\
&=&\ind{k\geq 0}\frac{e^{-2\lambda_1}}{k!}(\lambda_1-\lambda_1)^k\\
&=&\begin{cases}e^{-2\lambda_1}\text{ if }k=0\\0 \text{ otherwise}\end{cases}
\end{eqnarray*}
Similarly
\begin{eqnarray*}
(g_Y\star f_Y)(k)&=&\sum\limits_{j=-\infty}^\infty g_Y(k-j)f_Y(j)\\
&=&\sum\limits_{j=-\infty}^\infty (-1)^{k-j}f_Y(k-j)f_Y(j)\\
&=&\sum\limits_{j=k}^0 (-1)^{k-j}f_Y(k-j)f_Y(j)\\
&&\text{(since $f_Y$ is $0$ for positive integers also note the summation is $0$ if $k>j)$}\\
&=&\ind{k\leq 0}e^{-2\lambda_2}\sum\limits_{j=k}^0 \frac{(-\lambda_2)^{-(k-j)}}{(-(k-j))!}\frac{\lambda_2^{-j}}{(-j)!}\\
&=&\ind{k\leq 0}e^{-2\lambda_2}\sum\limits_{j=0}^{-k} \frac{(-\lambda_2)^{(-k)-j}}{[(-k)-j]!}\frac{\lambda_2^{j}}{j!}\\
&&\text{(replacing the dummy index $j$ with $-j$)}\\
&=&\ind{k\leq 0}\frac{e^{-2\lambda_2}}{(-k)!}\sum\limits_{j=0}^{-k} {(-k)\choose j}(-\lambda_2)^{(-k)-j}\lambda_2^j\\
&=&\ind{k\leq 0}\frac{e^{-2\lambda_2}}{(-k)!}(\lambda_2-\lambda_2)^{(-k)}\\
&=&\begin{cases}e^{-2\lambda_2}\text{ if }k=0\\0 \text{ otherwise}\end{cases}
\end{eqnarray*}
From these calculations we can conclude that 
\begin{eqnarray*}
(g_Z\star f_Z(\cdot;\lambda_1,\lambda_2))(k)&=&((g_X\star g_Y)\star(f_X\star f_Y))(k)\\
&=&((g_X\star f_X)\star(g_Y\star f_Y))(k)\\
&&\text{(since convolutions are commutative and associative)}\\
&=&\begin{cases}
e^{-2(\lambda_1+\lambda_2)} & \text{  if }k=0\\
0 & \text{ otherwise}
\end{cases}
\end{eqnarray*}
These convolution calculations show that the  matrices $M^{(f)}$ and $M^{(g)}$, whose rows and columns are indexed by the integers and which are defined below, are inverses of each other.
\begin{eqnarray*}
M^{(f)}_{(i,j)}&\equiv&(i,j) \text{ entry of }M^{(f)} \\
&=&f_Z(i-j;\lambda_1,\lambda_2)\\
M^{(g)}_{(i,j)}&\equiv&(i,j)\text{ entry of }M^{(g)}\\
 &=& e^{2(\lambda_1+\lambda_2)}g_Z(i-j) 
\end{eqnarray*}
To see that they are inverses, note that the dot product between row $r$ of $M^{(f)}$ and column $c$ of $M^{(g)}$ is
\begin{eqnarray*}
\sum\limits_{j=-\infty}^\infty M^{(f)}_{(r,j)}M^{(g)}_{(j,c)}&=&\sum\limits_{j=-\infty}^\infty f_Z(r-j;\lambda_1,\lambda_2)e^{2(\lambda_1+\lambda_2)}g_Z(j-c)\\
&=&\sum\limits_{j=-\infty}^\infty f_Z(r-c-j;\lambda_1,\lambda_2)e^{2(\lambda_1+\lambda_2)}g_Z(j)\\
&=&e^{2(\lambda_1+\lambda_2)}(f_Z(\cdot;\lambda_1,\lambda_2)\star g_Z)(r-c)\\
&=&e^{2(\lambda_1+\lambda_2)}(g_Z\star f_Z(\cdot;\lambda_1,\lambda_2))(r-c)\\
&=&\begin{cases}
1 & \text{ if }r=c\\
0 & \text{ otherwise}
\end{cases}
\end{eqnarray*}
Now,  clearly $M^{(f)}$ is the matrix representation of $\mech_{\text{skell($\lambda_1,\lambda_2$)}}$ so that we can again use Theorem \ref{thm:invcnfinf} and the observation that $g_Z(k)=(-1)^kf_Z(k;\lambda_1,\lambda_2)$ so that column $c$ of $M^{(g)}=(M^{(f)})^{-1}$ is the column vector whose entry $j$ is $(-1)^{j-c}f_Z(j-c;\lambda_1,\lambda_2)$.

Note that the columns of $M^{(g)}$ have bounded $L_1$ norm since the absolute value of the entries in any column are proportional to the probabilities given by the Skellam distribution.

The proof is completed by the observation that for any $\vec{x}=(\dots,x_{-2}, x_{-1}, x_0, x_1, x_2, \dots)$,
\begin{eqnarray*}
\sum\limits_{j=-\infty}^\infty (-1)^{j-c}f_Z(j-c;\lambda_1,\lambda_2)x_j = \sum\limits_{j=-\infty}^\infty (-1)^{j}f_Z(j;\lambda_1,\lambda_2)x_{j+c}
\end{eqnarray*}
\end{proof}

\eat{

\section{Proof of Lemma \lowercase{\ref{lem:commute}}}\label{app:commute}

\begin{lemma}\emph{(Restatement and proof of Lemma \ref{lem:commute}).}
Let $\mech_1$ and $\mech_2$ be two algorithms that commute ($\mech_1\circ\mech_2=\mech_2\circ\mech_1$), let $\mech_1$ have a matrix representation that is invertible and let $\mech_2$ be idempotent. Then 
\begin{list}{\labelitemi}{\leftmargin=0.5em}
\itemsep 0pt
\parskip 2pt
\item $\cnf(\set{\mech_1\circ\mech_2})=\cnf(\set{\mech_1})\cap \cnf(\set{\mech_2})$
\item $\rowcone(\set{\mech_1\circ\mech_2})=\rowcone(\set{\mech_1})\cap \rowcone(\set{\mech_2})$
\end{list}
\end{lemma}
\begin{proof}
Let $\inp$ be the input domain. Since $\mech_1$ and $\mech_2$ commute, it is clear that $\range(\mech_1)\subseteq \inp$ and $\range(\mech_2)\subseteq \inp$ (i.e. the ranges must be subsets of the domain of the algorithms).  Let $M_1$ and $M_2$ be the matrix representations of $\mech_1$ and $\mech_2$ respectively. Furthermore, since $\mech_1$ has a matrix representation that is invertible, $\range(\mech_1)=\inp$ and therefore $\range(\mech_2)\subseteq\range(\mech_1)$. We expand the range of $\mech_2$ so that it equals the range of $\mech_1$ (some outputs will just have $0$ probability for all inputs). Then $\mech_1$ and $\mech_2$ have matrix representations $M_1$ and $M_2$ where the columns are indexed by datasets in $\inp$ and the rows are also indexed by datasets in $\inp$ (in the case of $M_2$, some of those rows may contain only $0$ values). This ensures that the corresponding matrix representations  $M_1$ and $M_2$ are square matrices. The commutativity of $\mech_1$ and $\mech_2$ imply that $M_1M_2 = M_2M_1$ and the fact that $\mech_2$ is idempotent (i.e. $\mech_2\circ\mech_2=\mech_2)$ implies $M_2M_2=M_2$.

By Corollary \ref{cor:one}, 
 we see that
\begin{itemize}
\item $\cnf(\set{\mech_1})$ is the set of algorithms of the form $\randalg\circ\mech_1$ where $\randalg$ is any postprocessing algorithm (whose domain contains the range of $\mech_1$.
\item $\cnf(\set{\mech_2})$ is the set of algorithms of the form $\randalg\circ\mech_2$
\item $\cnf(\set{\mech_1\circ\mech_2})$ is the set of algorithms of the form $\randalg \circ\mech_1\circ\mech_2$. By commutativity of $\mech_1$ and $\mech_2$ this is also the set of algorithms of the form $\randalg \circ\mech_2\circ\mech_1$.
\end{itemize}

Now, let $\mech_3$ be an algorithm in $\cnf(\set{\mech_1\circ\mech_2})$. Then $\mech_3=\randalg_1\circ\mech_1\circ\mech_2$ (for some $\randalg_1$) so that $\mech_3\in\cnf(\set{\mech_2})$. By commutativity, $\mech_3=\randalg_2\circ\mech_2\circ\mech_1$ (for some $\randalg_2$) so that $\mech_3\in\cnf(\set{\mech_1})$ and therefore $\mech_3\in \cnf(\set{\mech_2})\cap\cnf(\set{\mech_1})$.

To prove the other direction, choose a $\mech_4\in \cnf(\set{\mech_2})\cap\cnf(\set{\mech_1})$. Then $\mech_4=\randalg_4\circ\mech_2$ for some $\randalg_4$. In terms of the corresponding matrix representations, we have:
\begin{eqnarray}
M_4&=& A_4 M_2\nonumber\\
   &=& A_4 M_2 M_1^{-1}M_1\nonumber\\
   &&\text{($M_1^{-1}$ exists by hypothesis)}\nonumber\\
   &=& A_4  M_1^{-1}M_2M_1\label{eqn:commutealg}\\
&&\text{(since $M_1$ and $M_2$ commute if and only if }\nonumber\\
&&\text{$M_1^{-1}$ and $M_2$ commute)}\nonumber
\end{eqnarray}
Now, note that a $A_4  M_1^{-1}M_2$ is the matrix representation of an algorithm because of the following facts: (1) $M_4=A^* M_1$ where $A^*$ is the matrix representation of some algorithm since $\mech_4\in\cnf(\set{\mech_1})$ (by assumption), (2) 
$A^*M_1=M_4=(A_4  M_1^{-1}M_2)M_1$ by Equation \ref{eqn:commutealg} and so $A^*=A_4  M_1^{-1}M_2$ because $M_1$ is invertible. Therefore
\begin{eqnarray*}
 M_4&=&A_4  M_1^{-1}M_2M_1\\
&=&A_4 M_1^{-1}M_2M_2M_1\\
&&\text{(since $\mech_2$, and therefore $M_2$, is idempotent)}\\
&=&(A_4 M_1^{-1}M_2)M_1M_2\\
&& \text{(by commutativity)}
\end{eqnarray*}
and therefore $M_4\in \cnf(\set{\mech_1\circ\mech_2})$ since we had shown that $(A_4 M_1^{-1}M_2)$ is the matrix representation of an algorithm.

We now prove the corresponding statement for row cones. The results for the consistent normal form immediately imply $\rowcone(\set{\mech_1\circ\mech_2})\subseteq \rowcone(\set{\mech_1})\cap \rowcone(\set{\mech_2})$. 

Now consider a vector $\vec{x}\in \rowcone(\set{\mech_1})\cap \rowcone(\set{\mech_2})$. Then there is a an algorithm $\mech^\prime_1\in\cnf(\set{\mech_1})$ such that $c_1\vec{x}$ is a row of the corresponding matrix representation $M^\prime_1$ (for some $c_2>0$). Similarly, there is a  $\mech^\prime_2\in\cnf(\set{\mech_2})$ such that $c_2\vec{x}$ is a row of the corresponding matrix representation $M^\prime_2$ (for some $c_2>0$). By appropriate postprocessing\footnote{i.e. using an algorithm that maps outputs not associated with a designated row to the same symbol}, we can assume that $\mech_1$ and $\mech_2$ have only two possible outputs. Without loss of generality assume $c_1\geq c_2$. In this case, it is easy to construct and algorithm $\randalg$ so that $\randalg\circ\mech_1^\prime=\mech_2^\prime$. Hence $\mech_2^\prime\in \cnf(\set{\mech_2})\cap\cnf(\set{\mech_1})$ and therefore $\mech_2^\prime \in  \cnf(\set{\mech_1\circ\mech_2})$. Since $c_2\vec{x}$ is a row of the matrix representation $M_2^\prime$, this means that $\vec{x}\in\rowcone (\set{\mech_1\circ\mech_2})$. 
\end{proof}


\section{Proof of Theorem \lowercase{\ref{thm:samplecone}}}\label{app:samplecone}

In this section we prove Theorem \ref{thm:samplecone}. Before doing so, we first need to derive the matrix representation of the sampling algorithm $\mechsamp$ (defined in Definition \ref{def:mechsample}). As discussed in Definition \ref{def:mechsampledomain}, a typical input to $\mechsamp$ consists of a sequence of tuples, one per individual in the population. If $i^\text{th}$ individual did not provide any information for the survey, then the $i^\text{th}$ tuple value is ``?''. If the  $i^\text{th}$ individual did provide information, then the  $i^\text{th}$ tuple is the record corresponding to that information.

The sampling algorithm $\mechsamp$ (Definition \ref{def:mechsample}) can be expressed as a composition of two other algorithms: $$\mechsamp=\mechsort\circ\mechdrop$$ where $\mechdrop$ replaces tuples with ``?'' independently and with probability $1-p$, and $\mechsort$ sorts the tuples in its input dataset. Letting $\matsamp$, $\matdrop$, and $\matsort$ be the corresponding matrix representations, we see that 
$$\matsamp=\matsort\matdrop$$

The matrix representation $\matsort$ is easy to derive. The rows are indexed by the set of databases of size $W$ (the number of people in the population) and correspond to possible outputs. The columns are indexed by the set of databases of size $W$ and correspond to possible inputs. A column corresponding to an input dataset $D$ contains $0$ entries everywhere except in the row corresponding to the sorted version of $D$ (by convention ``?'' are considered larger than other tuple values). For example, when the domain of tuples $\tdom=\set{a,b}$ and the population size $W=2$ then the matrix representation of $\mechsort$ is:

$$\matsort = 
\left(
\bordermatrix{
           & \red{aa} & \red{ab} & \red{a?} & \red{ba} & \red{bb} & \red{b?} &\red{?a} & \red{?b} & \red{??}\cr
\blue{aa}  &  \mathbf{1}&    0     &    0     &    0     &    0     &    0     &    0    &    0     &    0    \cr
\blue{ab}  &     0    & \mathbf{1} &    0     & \mathbf{1} &    0     &    0     &    0    &    0     &    0    \cr
\blue{a?}  &     0    &    0     & \mathbf{1} &    0     &    0     &    0     & \mathbf{1}&    0     &    0    \cr
\blue{ba}  &     0    &    0     &    0     &    0     &    0     &    0     &    0    &    0     &    0    \cr
\blue{bb}  &     0    &    0     &    0     &    0     & \mathbf{1} &    0     &    0    &    0     &    0    \cr
\blue{b?}  &     0    &    0     &    0     &    0     &    0     & \mathbf{1} &    0    & \mathbf{1} &    0    \cr
\blue{?a}  &     0    &    0     &    0     &    0     &    0     &    0     &    0    &    0     &    0    \cr
\blue{?b}  &     0    &    0     &    0     &    0     &    0     &    0     &    0    &    0     &    0    \cr
\blue{??}  &     0    &    0     &    0     &    0     &    0     &    0     &    0    &    0     & \mathbf{1}
}
\right)
$$

\begin{figure*}[!t]
\begin{eqnarray*}
\matdrop &=& \bigoplus\limits_{i=1}^W B_p
=
\bordermatrix{
           & \red{aa} & \red{ab} & \red{a?} & \red{ba} & \red{bb} & \red{b?} &\red{?a} & \red{?b} & \red{??}\cr
\blue{aa}  &    p^2   &    0     &    0     &    0     &    0     &    0     &    0    &    0     &    0    \cr
\blue{ab}  &     0    &   p^2    &    0     &    0     &    0     &    0     &    0    &    0     &    0    \cr
\blue{a?}  &   p(1-p) &   p(1-p) &    p     &    0     &    0     &    0     &    0    &    0     &    0    \cr
\blue{ba}  &     0    &    0     &    0     &   p^2    &    0     &    0     &    0    &    0     &    0    \cr
\blue{bb}  &     0    &    0     &    0     &    0     &   p^2    &    0     &    0    &    0     &    0    \cr
\blue{b?}  &     0    &    0     &    0     &  p(1-p)  &  p(1-p)  &    p     &    0    &    0     &    0    \cr
\blue{?a}  &   p(1-p) &    0     &    0     &  p(1-p)  &    0     &    0     &    p    &    0     &    0    \cr
\blue{?b}  &     0    &   p(1-p) &    0     &    0     &  p(1-p)  &    0     &    0    &    p     &    0    \cr
\blue{??}  &  (1-p)^2 &  (1-p)^2 &    1-p   &  (1-p)^2 &  (1-p)^2 &   1-p    &   1-p   &   1-p    &    1
}
\end{eqnarray*}
\caption{Matrix representation of $\mechdrop$ with $\tdom=\set{a,b}$ and $W=2$}\label{fig:mechdrop}
\end{figure*}

The matrix representation of $\mechdrop$ has the form 
\begin{eqnarray}
\matdrop&=&\bigoplus\limits_{i=1}^W B_p \label{eqn:matdropdefine}\\
(i,j)\text{-entry of }B_p &=&
\begin{cases}
1-p&\text{ if }i=N+1, j<N+1\\
p & \text{ if }i=j, i\neq N+1\\
1  &\text{ if }i=j=N+1\\
0 &\text{ otherwise} 
\end{cases}\label{eqn:matdropb}
\end{eqnarray}

where $\bigoplus$ is the Kronecker product, $B_p$ is an $N+1\times N+1$ matrix (recall $N=|\tdom|$) whose first $N$ diagonal entries are $p$, the first $N$ entries of the last row are $1-p$, the last diagonal entry is $1$ and all other entries are $0$. When the duple domain is $\tdom=\set{a,b}$ and the population size $W=2$ then $B_p$ is:
$$B_p=
\begin{pmatrix}
p & 0 & 0\\
0 & p & 0\\
1-p & 1-p & 1
\end{pmatrix}
$$
and the matrix representation of $\mechdrop$ is shown in Figure \ref{fig:mechdrop}.
\begin{theorem}\emph{(Restatement and proof of Theorem \ref{thm:samplecone}).}
Let $\mech$ be an algorithm and $\omega\in\range(\mech)$. The vector 
$$(P[\mech(D_1)=\omega],~\dots,~P[\mech(D_n)=\omega])$$
belongs to $\rowcone(\set{\mechsamp})$ if and only if:
\begin{list}{\labelitemi}{\leftmargin=1em}
\itemsep 1pt
\parskip 4pt
\item $P(\mech(D_i)=\omega)=P(\mech(D_j)=\omega)$ whenever  $D_i$ and $D_j$ are permutations of each other, and
\item
$ \forall i:~\sum\limits_{D_j\subseteq D_i} P(\mech(D_j)=\omega)\left(-(1-p)\right)^{\text{blank}(D_i,D_j)}\geq 0$
\end{list}
where the notation $D_j\subseteq D_i$ means that $D_i$ can be converted to $D_j$ by replacing tuples with ``?'', nonblank$(D_i)$ is the number of tuples in $D_i$ not equal to ``?'', and blank$(D_i,D_j)$ is the number of tuples in $D_j$ equal to ``?'' minus the number of tuples in $D_i$ equal to ``?''.
\end{theorem}
\begin{proof}
Our strategy is to apply Lemma \ref{lem:commute} to the algorithm $\mechsamp=\mechsort\circ\mechdrop$.

\vspace{0.5cm}\noindent\textbf{Step 1:} computing the row cone $\rowcone(\set{\mechsort})$. Consider a partition on the input datasets such that two datasets $D_i$ and $D_j$ are in the same partition if and only if sorting their tuples gives the same result. Note that this is equivalent to saying that $D_i$ is a permutation of $D_j$. By 
Corollary \ref{cor:one}, $\cnf(\set{\mechsort})$ is the set of algorithms of the form $\randalg\circ\mechsort$ and therefore the row cone consists of all positive linear combinations of rows of the corresponding matrix representation $\matsort$.  It is easy to see that a vector with nonnegative entries is a positive linear of rows of $\matsort$ if and only if for all pairs of datasets $D_i$ and $D_j$ that are in the same partition, the components corresponding to $D_i$ and $D_j$ are the same. Therefore, given an algorithm $\mech$ and some output $\omega\in\range(\mech)$, the vector 
$$(P[\mech(D_1)=\omega],~\dots,~P[\mech(D_n)=\omega])$$
belongs to $\rowcone(\set{\mechsort})$ if and only if $P(\mech(D_i)=\omega)=P(\mech(D_j)=\omega)$ whenever  $D_i$ and $D_j$ are permutations of each other.

\vspace{0.5cm}\noindent\textbf{Step 2:} computing the row cone $\rowcone(\set{\mechdrop})$. The corresponding matrix representation is $\matdrop = \bigoplus\limits_{i=1}^W B_p$ where $B_p$ is the $N+1\times N+1$ matrix such that:
\begin{eqnarray*}
(i,j)\text{-entry of }B_p &=&
\begin{cases}
p & \text{ if }i=j, i\neq N+1\\
1-p&\text{ if }i=N+1, j<N+1\\
1  &\text{ if }i=j=N+1\\
0 &\text{ otherwise} 
\end{cases}
\end{eqnarray*}
The inverse $(B_p)^{-1}$ is easy to derive and equals:
\begin{eqnarray*}
(i,j)\text{-entry of }(B_p)^{-1} &=&
\begin{cases}
\frac{1}{p} & \text{ if }i=j, i\neq N+1\\
-\frac{1-p}{p}&\text{ if }i=N+1, j<N+1\\
1  &\text{ if }i=j=N+1\\
0 &\text{ otherwise} 
\end{cases}
\end{eqnarray*}
and thus the inverse of $\matdrop$ exists and equals $(\matdrop)^{-1} = \bigoplus\limits_{i=1}^W (B_p)^{-1}$. Now, according to Theorem \ref{thm:invcnfinf}, a vector $\vec{x}\in\rowcone(\set{\mechdrop})$ if and only if $x\cdot m^{(i)}\geq 0$ for every column $m^{(i)}$ of $(\matdrop)^{-1}$. Thus we need to enumerate the columns of $ (\matdrop)^{-1} = \bigoplus\limits_{i=1}^W (B_p)^{-1}$. Now, define the functions: 
\begin{enumerate}
\item super$(D_i,D_j)$ is $1$ if $D_i$ can be turned into $D_j$ by replacing some tuples with $''?''$, and it is $0$ otherwise.
\item blank$(D_i,D_j)$ is the number of tuples in $D_j$ whose value is ``?'' minus the number of tuples in $D_i$ whose value is ``?''.
\item nonblank$(D_i)$ is the number of tuples in $D_i$ whose value is not ``?''
\end{enumerate}

\begin{figure*}[!t]
\begin{eqnarray*}
(\matdrop)^{-1} &=& \bigoplus\limits_{i=1}^2 (B_p)^{-1} = \bigoplus\limits_{i=1}^2 \begin{pmatrix}
\frac{1}{p} & 0 & 0\\
0 & \frac{1}{p} & 0\\
\frac{-(1-p)}{p} & \frac{-(1-p)}{p} & 1
\end{pmatrix}\\
&=&
\bordermatrix{
           & \red{aa} & \red{ab} & \red{a?} & \red{ba} & \red{bb} & \red{b?} &\red{?a} & \red{?b} & \red{??}\cr
\blue{aa}  &    \frac{1}{p^2}   &    0     &    0     &    0     &    0     &    0     &    0    &    0     &    0    \cr
\blue{ab}  &     0    &   \frac{1}{p^2}    &    0     &    0     &    0     &    0     &    0    &    0     &    0    \cr
\blue{a?}  &   \frac{-(1-p)}{p^2} &   \frac{-(1-p)}{p^2} &    \frac{1}{p}     &    0     &    0     &    0     &    0    &    0     &    0    \cr
\blue{ba}  &     0    &    0     &    0     &   \frac{1}{p^2}    &    0     &    0     &    0    &    0     &    0    \cr
\blue{bb}  &     0    &    0     &    0     &    0     &   \frac{1}{p^2}    &    0     &    0    &    0     &    0    \cr
\blue{b?}  &     0    &    0     &    0     & \frac{-(1-p)}{p^2}  &  \frac{-(1-p)}{p^2}  &    \frac{1}{p}     &    0    &    0     &    0    \cr
\blue{?a}  &   \frac{-(1-p)}{p^2} &    0     &    0     &  \frac{-(1-p)}{p^2}  &    0     &    0     &    \frac{1}{p}    &    0     &    0    \cr
\blue{?b}  &     0    &   \frac{-(1-p)}{p^2} &    0     &    0     &  \frac{-(1-p)}{p^2}  &    0     &    0    &    \frac{1}{p}     &    0    \cr
\blue{??}  &  \frac{(1-p)^2}{p^2} &  \frac{(1-p)^2}{p^2} &    \frac{-(1-p)}{p}   &  \frac{(1-p)^2}{p^2} &  \frac{(1-p)^2}{p^2} &   \frac{-(1-p)}{p}    &   \frac{-(1-p)}{p}   &   \frac{-(1-p)}{p}    &    1
}
\end{eqnarray*}
\caption{$(\matdrop)^{-1}$ when $\tdom=\set{a,b}$ and $W=2$}\label{fig:matdropinv}
\end{figure*}

Note that if super$(D_i,D_j)=1$ then blank$(D_i,D_j)\geq 0$ because it is the number of tuples that need to be changed into ``?'' in order to convert $D_i$ into $D_j$.
With these definitions, we can enumerate the columns of  $ (\matdrop)^{-1}$ in the following way. We associate a dataset $D_i$ to the $i^\text{th}$ column $m^{(i)}$ of $(\matdrop)^{-1}$.
A simple proof by induction on $W$ shows that\footnote{Note in the induction, the base case is $(\matdrop)^{-1}=(B_p)^{-1}$ and the general case is $(\matdrop)^{-1} = \bigoplus_{i=1}^W (B_p)^{-1}$. For reference, Figure \ref{fig:matdropinv} shows $(\matdrop)^{-1} = \bigoplus_{i=1}^2(B_p)^{-1}$ for the case $W=2$.}:
\begin{eqnarray*}
j^\text{th} \text{ entry of }m^{(i)}&&\\
&&\hspace{-2.5cm}= \text{super}(D_i,D_j)\frac{\left(-(1-p)\right)^{\text{blank}(D_i,D_j)} }{p^{\text{nonblank}(D_i)}}
\end{eqnarray*}
For example, Figure \ref{fig:matdropinv} shows $(\matdrop)^{-1}$ for the case $W=2$ and $\tdom=\set{a,b}$. The second column, $m^{(2)}$ is associated with the dataset $D_2=ab$. Its $4^\text{th}$ entry is $0$ because $D_4=ba$ and there is no way to convert $ab$ to $ba$ by replacing tuple values with $''?''$. On the other hand, the $3^\text{rd}$ entry is $\frac{-(1-p)}{p^2}$ because $D_3=a?$ and super$(D_2,D_3)=1$ (we can replace the $b$ with ``?''), blank$(D_2,D_3)=1$, and nonblank$(D_2)=2$.

Now, according to Theorem \ref{thm:invcnfinf}, the vector $$(P[\mech(D_1)=\omega],~\dots,~P[\mech(D_n)=\omega])$$ (for some algorithm $\mech$ and output $\omega\in\range(\mech)$) belongs to $\rowcone(\set{\mechdrop})$ if and only if its dot product with every column $m^{(i)}$ of $(\matdrop)^{-1}$ is nonnegative. Using our method of associating $D_i$ with $m^{(i)}$, the condition becomes:
\begin{eqnarray*}
\forall i:~\sum\limits_{D_j\subseteq D_i} P(\mech(D_j)=\omega)\frac{\left(-(1-p)\right)^{\text{blank}(D_i,D_j)} }{p^{\text{nonblank}(D_i)}}\geq 0
\end{eqnarray*}
where we use the notation $D_j\subseteq D_i$ to mean that $D_i$ can be converted to $D_j$ by changing some tuple values to ``?''.  We then multiply by $p^{\text{nonblank}(D_i)}$ without affecting the inequalities to get:
\begin{eqnarray*}
\forall i:~\sum\limits_{D_j\subseteq D_i} P(\mech(D_j)=\omega)\left(-(1-p)\right)^{\text{blank}(D_i,D_j)}\geq 0
\end{eqnarray*}

\vspace{0.5cm}\noindent\textbf{Step 3:} combining the results. Since the matrix representation of $\mechdrop$ is invertible, $\mechdrop$ and $\mechsort$ commute, $\mechsort$ is idempotent, and $\mechsamp=\mechsort\circ\mechdrop$, we can use Lemma \ref{lem:commute} to combine the results from the previous two steps. Thus a vector 
$$(P[\mech(D_1)=\omega],~\dots,~P[\mech(D_n)=\omega])$$
belongs to $\rowcone(\set{\mechsamp})$ if and only if:
\begin{eqnarray*}
\forall i:~\sum\limits_{D_j\subseteq D_i} P(\mech(D_j)=\omega)\left(-(1-p)\right)^{\text{blank}(D_i,D_j)}\geq 0
\end{eqnarray*}
and  $P(\mech(D_i)=\omega)=P(\mech(D_j)=\omega)$ whenever  $D_i$ and $D_j$ are permutations of each other.
\end{proof}


\section{Proof of Theorem \lowercase{\ref{thm:samplesemantics}}}\label{app:samplesemantics}
\begin{theorem}\emph{(Restatement and proof of Theorem \ref{thm:samplesemantics}).}
Suppose an attacker knows individuals $\set{i_1,\dots,i_k}$ are a superset of those who participated in the survey. Furthermore, suppose the attacker knows that the individuals have record values $r_{j_1},\dots,r_{j_k}$ but is unsure about the true assignment $\sigma$ of record value to individual (i.e. the attacker may know that exactly 25 of 138 individuals have cancer but does now know who are the cancer patients). If the attacker believes that each individual participated in the survey with probability $q\geq \frac{1}{2-p}$, then $\mechsamp$  (and any other algorithm in $\cnf(\set{\mechsamp})$)  guarantees that after seeing the sanitized data, the attacker learns nothing new about the true assignment. Furthermore, the attacker will believe that the parity of the subset of $\set{i_1,\dots,i_k}$ who did not participate is more likely to be even than odd.
\end{theorem}
\begin{proof}
Let $\mech$ be an algorithm such that every row of its matrix representation $M$ belongs to $\rowcone(\set{\mechsamp})$.

Let $\set{i_1,\dots,i_k}$ be a set of $k$ individuals that is a superset of the individuals who participated in the survey. Suppose that an attacker knows that collectively their record values are $r_{j_1},\dots,r_{j_k}$ although the attacker may not know the specific assignment of record value to individual.

Let $\sigma$ be any assignment of the record values  $r_{j_1},\dots,r_{j_k}$ to the individuals  $\set{i_1,\dots,i_k}$. That is, $\sigma(i_1)$ is the record value assigned to individual $i_1$, etc. Let $D^\sigma\in\inp$ be the possible input dataset that is determined by $\sigma$: the tuples corresponding to individuals $i_1,\dots,i_k$ have values $\sigma(i_1),\dots,\sigma(i_k)$, respectively, and all other tuples are set to ''?''.

\vspace{0.5cm}\noindent\textbf{Step 1:} reduction to the case where $k=W$ (where $W$ is the number of individuals in the population).

Without loss of generality we can assume $k=W$. To see why, if an individual $j$ is not one of the $i_1,\dots,i_k$, then the attacker knows for sure that individual $j$ did not participate in the survey. In this case, it is impossible for $\mechsamp$ to output any sanitized data in which the $j^\text{th}$ tuple is different from ``?'' and it is impossible to have any input dataset where the $j^\text{th}$ tuple is different from ``?''. We can therefore redefine the input space to consist of all individuals except $j$ and $\mechsamp$ operates as before (by dropping tuples independently and sorting the result). In other words, if individual $j$ did not participate in the survey with probability $1$, we can just pretend individual $j$ never existed. We can repeat this process of eliminating from consideration all individuals not in $\set{i_1,\dots,i_k}$. 

Therefore, without loss of generality, we set $k=W$ so that the attacker knows that individuals $1,\dots, W$ and knows that their record values are $r_{i_1},\dots, r_{i_W}$ but may not know the specific assignment of records to individuals. Furthermore, the attacker believes that each individual participated in the survey with (independent) probability $q\geq\frac{1}{2-p}$.

\vspace{0.5cm}\noindent\textbf{Step 2:} show that the relative preferences of assignments is the same.

If $\sigma^\prime$ is any other assignment and $D^{\sigma^\prime}$ the corresponding database\footnote{i.e. the record belonging to individual $i$ is $\sigma^\prime(i)$.} then, by Theorem \ref{thm:samplecone}, $P(\mech(D^\sigma)=\omega)=P(\mech(D^{\sigma^\prime}=\omega)$ and consequently 
\begin{eqnarray*}
\frac{P(\data=D^\sigma~|~\mech(\data)=\omega) }{ P(\data=D^{\sigma^\prime}~|~\mech(\data)=\omega)} &=& \frac{P(\data=D^\sigma)}{P(\data=D^{\sigma^\prime})}
\end{eqnarray*}

\vspace{0.5cm}\noindent\textbf{Step 3:} show that parity is protected when attacker knows the true assignment $\sigma$.

First, note that the condition:
{\small
\begin{eqnarray}
\lefteqn{\hspace{-1em}P\left(\parity\left(\substack{\text{Individuals}\\\text{who did not}\\ \text{participate in survey}}\right)=0~|~\substack{\mechdrop(\data)=\omega}\right)}\nonumber\\
&&\hspace{-3em}\geq P\left(\parity\left(\substack{\text{Individuals }\\\text{who did not} \\\text{participate in survey}}\right)=1~|~\substack{\mechdrop(\data)=\omega}\right)\label{eqn:sampthisisit}
\end{eqnarray}
}
is exactly equivalent to the following condition (after subtracting the right hand side from the left hand side and plugging in the relevant probabilities): 
{\small
\begin{eqnarray}
\forall \omega\in\range(\mech):\qquad\qquad\qquad\qquad\qquad\qquad\qquad\qquad\qquad\nonumber\\
\sum\limits_{D_j\subseteq D^\sigma} \frac{P(\mech(D_j)=\omega)\left(-(1-q)\right)^{\text{miss}(D_j)}     q^{\text{nonblank}(D_j)}}{P(\mech(\data)=\omega)}\geq 0\label{eqn:sampthisisit2}
\end{eqnarray}
}
where the notation $D_j\subseteq D^\sigma$ means that $D^\sigma$ can be converted to $D_j$ by replacing some tuples with ``?'', nonblank$(D_j)$ is the number of tuples in $D_j$ not equal to ``?'', and miss$(D_j)$ is the number of tuples in $D_j$ equal to ``?''. Note that the terms corresponding to $D_j$ such that $D_j\not\subseteq D^\sigma$ have been dropped because they are multiplied by a $0$ prior.

Multiplying by $P(\mech(\data)=\omega)$,  dividing by $q^W$ (size of population), and noting that $q^{\text{nonblank}(D_j)}/q^{W} =q^{-miss(D_j)}$, our goal is now to prove:
\begin{eqnarray}
\lefteqn{\forall \omega\in\range(\mech):}\nonumber\\
&&\hspace{-2em}\sum\limits_{D_j\subseteq D^\sigma} \hspace{-0.75em}P(\mech(D_j)=\omega)\left(-\frac{1-q}{q}\right)^{\text{miss}(D_j)}   \geq 0\label{eqn:sampprove}
\end{eqnarray}
for any $\mech$ in the row cone of $\mechsamp$.

By Lemma \ref{lem:commute}, 
\begin{eqnarray*}
\cnf(\set{\mechsamp})&=&\cnf(\set{\mechsort\circ\mechdrop})\\
&\subseteq& \cnf(\set{\mechdrop})
\end{eqnarray*}
 By 
Corollary \ref{cor:one}, $\mech$ is therefore of the form $\randalg\circ\mechdrop$ for some postprocessing algorithm $\randalg$. This means that every row of $M$ (the matrix representation of $\mech$) is nonnegative linear combination of the rows of $\matdrop$ (the matrix representation of $\mechdrop$). Thus all we need to do is to show that:
\begin{eqnarray}
\lefteqn{\forall \omega\in\range(\mechdrop):}\nonumber\\
&&\hspace{-1.5cm}\sum\limits_{D_j\subseteq D^\sigma} \hspace{-0.75em}P(\mechdrop(D_j)=\omega)\left(-\frac{1-q}{q}\right)^{\text{miss}(D_j)}\hspace{0em}\geq 0\label{eqn:sampprove2}
\end{eqnarray}
because every inequality (one for each $\omega\in\range(\mech)$) in Equation \ref{eqn:sampprove} is a nonnegative linear combination of the inequalities in Equation \ref{eqn:sampprove2} since the rows in the matrix representation of $\mech$ are nonnegative linear combinations of the rows of the matrix representation of $\mechdrop$.

Now consider the vector:
$$(y_1,\dots, y_n)^T$$
where $y_j=\left(-\frac{1-q}{q}\right)^{\text{miss}(D_j)}$ if $D_j\subseteq D_\sigma$ and  $y_j=0$ if $D_j\not\subseteq D_\sigma$. Then the conditions in Equation \ref{eqn:sampprove2} are equivalent to:
\begin{eqnarray}
\matdrop ~\cdot~(y_1,\dots,y_n)^T\succeq 0\label{eqn:sampprove3}
\end{eqnarray}
where $\matdrop$ is the matrix representation of $\mechdrop$ and the notation $\vec{v}\succeq 0$ means that all of the components of $\vec{v}$ are nonnegative.

Now, it is easy to see that:
\begin{eqnarray*}
y&=&(y_1,\dots, y_n)\\
&=&\bigoplus\limits_{i=1}^W C_i
\end{eqnarray*}
where the $C_i$ are column vectors defined as:
\begin{eqnarray*}
j^\text{th}\text{ entry of } C_i=
\begin{cases}
1 &\text{ if $\sigma(i)=r_j\in \tdom$}\\
-\frac{1-q}{q}&\text{ if $j=N+1$}
\end{cases}
\end{eqnarray*}
Recall $\sigma(i)$ is the assignment of a record to individual $i$ (in this step we assume that $\sigma$ is known to the attacker and we remove this assumption in Step 4). Recall also that $\tdom=\set{r_1,\dots,r_N}$ is the domain of possible record values and that tuples come from the set $\tdom\cup\set{?}$, where ``?'' represents a missing value (and which follows all the $r_i$ in the ordering of the domain). The values in $\tdom$ are ordered only for the purposes of expressing our algorithms as matrices\footnote{Thus we need an order on columns, which correspond to datasets, and the order on datasets is induced by the order on the tuples, for example see Figure \ref{fig:mechdrop}.}.

Recalling the equation for $\matdrop$, the matrix representation of $\mechdrop$ as the Kronecker product $\matdrop=\bigoplus_{i=1}^W B_p$ where $B_p$ is defined in Equation \ref{eqn:matdropb} in Appendix \ref{app:samplecone}, Equation \ref{eqn:sampprove3} is equivalent to:
\begin{eqnarray}
0\preceq \left(\bigoplus\limits_{i=1}^W B_p\right)\left(\bigoplus\limits_{i=1}^W C_i\right)
= \bigoplus\limits_{i=1}^W (B_pC_i)\label{eqn:sampprove4}
\end{eqnarray}

And thus our goal is to prove that the components of $B_pC_i$ are nonnegative. Considering the rows of $B_p$, we see that the dot product between row $j<N+1$ of $B_p$ and the vector $C_i$ is either $p$ or $0$. The dot product between row $N+1$ of $B_p$ and the vector $C_i$ is $(1-p)-\frac{1-q}{q}$. This quantity is nonnegative as long as $q\geq\frac{1}{2-p}$. Thus for this setting of $q$, Equation \ref{eqn:sampprove4} is true, which implies Equation \ref{eqn:sampprove3} is true, which implies Equation \ref{eqn:sampprove2}, which implies Equation \ref{eqn:sampprove}, which implies Equation \ref{eqn:sampthisisit}, which is what we needed to prove.

\vspace{0.5cm}\noindent\textbf{Step 4:} show that parity is protected when attacker does not know the true assignment $\sigma$.
When the attacker does not know the true assignment, there is a probability distribution over assignments. In this case, the difference in posterior probability of the parity is equivalent to a nonnegative linear combination of Equation  \ref{eqn:sampthisisit2} (where we vary $D^{\sigma^\prime}$ through all possible assignments $\sigma^\prime$ of record values to individuals (as long as the assignments are consistent with the attacker's background knowledge) and the weights of the nonnegative combination are $P(\data=D^{sigma^\prime})$). Thus the difference in posterior probability of the parity is a nonnegative linear combination of nonnegative quantities (by Step 3) and therefore is nonnegative.
\end{proof}

}

\section{Proof of Lemma \lowercase{\ref{lem:mechgeo}}}\label{app:mechgeo}
\begin{lemma}\emph{(Proof and restatement of Lemma \ref{lem:mechgeo}).}
$\linebreak[0]\mech_{DNB(p,1)}$, the differenced negative binomial mechanism with $r=1$, is the geometric mechanism.
\end{lemma}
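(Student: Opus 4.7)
The plan is to directly compute the distribution of $X - Y$ where $X$ and $Y$ are independent NB$(p,1)$ random variables and verify that it matches the noise distribution of the geometric mechanism, namely $\frac{1-p}{1+p}p^{|k|}$.

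First I would observe that the NB$(p,1)$ distribution coincides with the Geometric$(p)$ distribution: substituting $r=1$ into the mass function ${k+r-1 \choose k}p^k(1-p)^r$ yields $p^k(1-p)$ for $k \geq 0$. Hence $\mech_{DNB(p,1)}$ adds $X - Y$ to its input where $X, Y$ are independent Geometric$(p)$ random variables.

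Next I would compute the probability mass function of $Z = X - Y$ by conditioning on $Y$. For $k \geq 0$,
\begin{eqnarray*}
P(Z = k) &=& \sum_{j=0}^\infty P(X = k+j)\,P(Y = j) \\
&=& \sum_{j=0}^\infty p^{k+j}(1-p) \cdot p^j(1-p) \\
&=& (1-p)^2 p^k \sum_{j=0}^\infty p^{2j} \;=\; \frac{(1-p)^2 p^k}{1-p^2} \;=\; \frac{1-p}{1+p}\,p^k.
\end{eqnarray*}
By the symmetry $X - Y \stackrel{d}{=} -(X - Y)$ (since $X$ and $Y$ are i.i.d.), the same expression holds with $p^{|k|}$ in place of $p^k$ for all integers $k$. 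This is exactly the noise distribution of the geometric mechanism, which completes the identification $\mech_{DNB(p,1)} = \mech_{\text{geom}(p)}$.

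There is no real obstacle here; the only step requiring any care is the geometric-series summation and the observation that the result is symmetric in $k$, so extending to negative $k$ is immediate. The proof is a routine convolution calculation.
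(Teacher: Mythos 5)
Your proof is correct and follows essentially the same route as the paper's: a direct convolution of two independent Geometric$(p)$ mass functions followed by a geometric-series summation. The only cosmetic difference is that you handle negative $k$ via the symmetry of $X-Y$ whereas the paper writes out both cases explicitly before combining them.
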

\begin{proof}
We need to show that the difference between two independent Geometric$(p)$ distributions has the probability mass function $f(k)=\frac{1-p}{1+p}p^{|k|}$.

Let $X$ and $Y$ be independent Geometric$(p)$ random variables and let $Z=X-Y$. Then
\begin{eqnarray*}
P(Z=k)&=&
\begin{cases}
\sum\limits_{j=0}^\infty P(X=j+k)P(Y=j) &\text{ if }k\geq 0\\
\sum\limits_{i=0}^\infty P(X=j)P(Y=j+|k|) &\text{ if }k<0
\end{cases}
\end{eqnarray*}
Combining both cases, we get
\begin{eqnarray*}
P(Z=k)&=&\sum\limits_{j=0}^\infty (1-p)p^{j+|k|}(1-p)p^j\\
         &=&(1-p)^2p^{|k|}\sum\limits_{j=0}^\infty(p^2)^j\\
         &=&(1-p)^2p^{|k|}\frac{1}{1-p^2}\\
         &=&(1-p)^2p^{|k|}\frac{1}{(1-p)(1+p)}\\
        &=&\frac{1-p}{1+p}p^{|k|}
\end{eqnarray*}
\end{proof}

\eat{
\begin{theorem}\label{thm:geo}
Let the input domain $\inp=\set{\dots, -2, -1, 0, 1, 2, \dots}$ be the set of integers. Let $\mech_{\epsilon\text{-Geo}}$ be the algorithm that adds to its input a random integer $k$ with distribution $\frac{e^\epsilon-1}{e^\epsilon+1} e^{-\epsilon |k|}$. A  bounded row vector $\vec{x}=(\dots, x_{-2}, x_{-1}, x_0, x_1, x_2, \dots)$ belongs to $\rowcone(\set{\mech_{\epsilon\text{-Geo}}})$ if for all integers $k$,
$$-x_{k-1} + (e^\epsilon+1/e^\epsilon)x_k - x_{k+1}\geq 0$$
\end{theorem}
\begin{proof}
Let $M_{\epsilon\text{-Geo}}$ be the matrix representation of $\mech_{\epsilon\text{-Geo}}$.  Define the matrix $W$ whose rows and columns are indexed by the integers such that the $(i,j)^\text{th}$ entry of $W$, denoted by $W_{i,j}$ is
\begin{eqnarray*}
W_{i,j}&=&
\begin{cases}
-1 & \text{ if } |i-j|=1\\
e^\epsilon + 1/e^\epsilon & \text{ if } i=j\\
0 & \text{ otherwise }
\end{cases}
\end{eqnarray*}
it is easy to see that $(e^\epsilon/(e^\epsilon-1)^2)W$ is the inverse of $M_{\epsilon\text{-Geo}}$. Since each column of $W$ has $L_1$ norm equal to $e^\epsilon+1/e^\epsilon+2$, we can apply Theorem \ref{thm:invcnfinf} so that the row vector $\vec{x}$ is in the row cone if and only if $\vec{x}W$ has no negative components.
\end{proof}
}

\section{Proof of Theorem \lowercase{\ref{thm:dnbrowcone}}}\label{app:dnbrowcone}
We first need an intermediate result.


\begin{lemma}\label{lem:fourierh}
Let $X$ and $Y$ be independent random variables with the Binomial$(\frac{p}{1+p},r)$ distribution (where $p/(1+p)$ is the success probability and $r$ is the number of trials). Let $Z=X-Y$ and let  $f_B\left(k;\frac{p}{p+1},r\right)=P(Z=k)$ for integers $k=-r, \dots, 0, \dots r$.
Define the function $h$ as $h(k)=(-1)^k f_B\left(k;\frac{p}{p+1},r\right)$. The Fourier series transform $\widehat{h}$ of $h$ (defined as $\widehat{h}(t)=\sum_{\ell=-\infty}^\infty h(\ell)e^{i\ell t}$) is equal to
\begin{eqnarray*}
\widehat{h}(t)=\frac{1}{(1+p)^{2r}}(1-pe^{it})^r(1-pe^{-it})^r
\end{eqnarray*}
\end{lemma}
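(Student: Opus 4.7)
The plan is to recognize that $\widehat{h}$ is, up to a shift of the frequency variable, the characteristic function of the random variable $Z = X - Y$, and then to evaluate that characteristic function in closed form using the binomial theorem together with the independence of $X$ and $Y$.

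First I would write down the characteristic function of a single binomial random variable. If $X \sim \text{Binomial}(q, r)$, then the binomial theorem gives
\begin{equation*}
E\bigl[e^{itX}\bigr] = \sum_{k=0}^{r} \binom{r}{k} q^k (1-q)^{r-k} e^{itk} = \bigl(1 - q + qe^{it}\bigr)^r.
\end{equation*}
Setting $q = p/(1+p)$, so that $1-q = 1/(1+p)$, this simplifies to $(1+p e^{it})^r/(1+p)^r$. By independence of $X$ and $Y$, the characteristic function of $Z = X - Y$ factors as
\begin{equation*}
\widehat{f_B}(t) \;=\; \sum_{\ell} f_B(\ell; \tfrac{p}{1+p}, r)\, e^{i\ell t} \;=\; E\bigl[e^{itX}\bigr]\,E\bigl[e^{-itY}\bigr] \;=\; \frac{(1+p e^{it})^r\,(1+p e^{-it})^r}{(1+p)^{2r}}.
\end{equation*}

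Second, I would translate the alternating sign $(-1)^\ell$ into a phase shift. Since $(-1)^\ell = e^{i\ell\pi}$, we have
\begin{equation*}
\widehat{h}(t) \;=\; \sum_{\ell} (-1)^\ell f_B(\ell;\tfrac{p}{1+p},r)\, e^{i\ell t} \;=\; \sum_{\ell} f_B(\ell;\tfrac{p}{1+p},r)\, e^{i\ell(t+\pi)} \;=\; \widehat{f_B}(t+\pi).
\end{equation*}
Substituting $t \mapsto t + \pi$ into the closed form above and using $e^{\pm i\pi} = -1$ yields
\begin{equation*}
\widehat{h}(t) \;=\; \frac{(1 - p e^{it})^r\,(1 - p e^{-it})^r}{(1+p)^{2r}},
\end{equation*}
which is exactly the claimed formula.

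There is no real obstacle here: the only thing to be careful about is the identification $(-1)^\ell = e^{i\ell\pi}$ and the fact that the sum defining $\widehat{h}$ is finite (the support of $f_B$ is $\{-r,\dots,r\}$), so convergence is automatic and all manipulations are legal. This lemma will then feed into the proof of Theorem \ref{thm:dnbrowcone} via the general Fourier-inversion recipe described at the end of Section \ref{sec:applications:addnoise}: $1/\widehat{f_B(\cdot;p,r)}$ at the shifted frequency gives, after inverse transform, the columns of $(M^*)^{-1}$ whose nonnegativity with $\vec{x}$ characterizes the row cone via Theorem \ref{thm:invcnfinf}.
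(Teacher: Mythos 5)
Your proof is correct and follows essentially the same route as the paper's: both factor the transform over the independent binomial summands and evaluate each factor with the binomial theorem. The only difference is organizational --- you handle the alternating sign via the phase shift $t\mapsto t+\pi$ applied to the ordinary characteristic function of $Z$, whereas the paper folds $(-1)^\ell$ directly into the binomial expansion through terms of the form $(-pe^{it})^\ell$; the computation is the same either way.
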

\begin{proof}
Define the random variable $Y^\prime =-Y$. Then $X+Y^\prime=Z$. Thus
\begin{eqnarray*}
\widehat{h}(t)&=&\sum\limits_{\ell=-\infty}^\infty h(\ell)e^{i\ell t}\\
&=&\sum\limits_{\ell=-\infty}^\infty (-1)^\ell f_B\left(\ell;\frac{p}{p+1},r\right)e^{i\ell t}\\
&=&\sum\limits_{\ell=-\infty}^\infty (-1)^\ell e^{i\ell t}P(Z=\ell)\\
&=&\sum\limits_{\ell=-\infty}^\infty e^{i\ell t}(-1)^\ell\sum\limits_{j=-\infty}^\infty P(X=\ell-j)P(Y^\prime=j)\\
&=&\sum\limits_{\ell=-\infty}^\infty e^{i\ell t}\sum\limits_{j=-\infty}^\infty (-1)^{\ell-j}P(X=\ell-j)(-1)^jP(Y^\prime=j)\\
&=&\sum\limits_{\ell=-\infty}^\infty \sum\limits_{j=-\infty}^\infty (-1)^{\ell-j}e^{i(\ell-j)t}P(X=\ell-j)(-1)^je^{ijt}P(Y^\prime=j)\\
&=&\sum\limits_{j=-\infty}^\infty (-1)^je^{ijt}P(Y^\prime=j)\sum\limits_{\ell=-\infty}^\infty(-1)^{\ell-j}e^{i(\ell-j)t}P(X=\ell-j)
\end{eqnarray*}
Now,
\begin{eqnarray*}
\lefteqn{\sum\limits_{\ell=-\infty}^\infty(-1)^{\ell-j}e^{i(\ell-j)t}P(X=\ell-j)}\\
&=&\sum\limits_{\ell=-\infty}^\infty(-1)^{\ell}e^{i\ell t}P(X=\ell)\\
&=&\sum\limits_{\ell=0}^r(-1)^{\ell}e^{i\ell t}P(X=\ell)\\
&&\text{(Since $X$ can only be $0,\dots,r$)}\\
&=&\sum\limits_{\ell=0}^r(-1)^{\ell}e^{i\ell t} {r \choose \ell}\left(\frac{p}{1+p}\right)^\ell\left(\frac{1}{1+p}\right)^{r-\ell}\\
&=&\frac{1}{(1+p)^r}\sum\limits_{\ell=0}^r(-1)^{\ell}e^{i\ell t} {r \choose \ell}p^{\ell}\\
&=&\frac{1}{(1+p)^r}\sum\limits_{\ell=0}^r{r\choose \ell}(-pe^{it})^{\ell} \\
&=&\frac{1}{(1+p)^r}(1-pe^{it})^r\text{ by the Binomial theorem}
\end{eqnarray*}
Thus continuing our previous calculation,
\begin{eqnarray*}
\widehat{h}(t)&=&\sum\limits_{j=-\infty}^\infty (-1)^je^{ijt}P(Y^\prime=j) \frac{1}{(1+p)^r}(1-pe^{it})^r\\
&=&\sum\limits_{j=-r}^0 (-1)^je^{ijt}P(Y^\prime=j) \frac{1}{(1+p)^r}(1-pe^{it})^r\\
&&\text{(since $Y^\prime$ can only be $-r,\dots, 0$)}\\
&=&\sum\limits_{j=-r}^0 (-1)^je^{ijt}P(Y=-j) \frac{1}{(1+p)^r}(1-pe^{it})^r\\
&&\text{(since $Y^\prime=-Y$ )}\\
&=&\sum\limits_{j=0}^r (-1)^je^{-ijt}P(Y=j) \frac{1}{(1+p)^r}(1-pe^{it})^r\\
\end{eqnarray*}
Now, similar to what we did before, we can derive that $\sum\limits_{j=0}^r (-1)^je^{-ijt}P(Y=j)= \frac{1}{(1+p)^r}(1-pe^{-it})^r$ and therefore
\begin{eqnarray*}
\widehat{h}(t)=\frac{1}{(1+p)^{2r}}(1-pe^{it})^r(1-pe^{-it})^r
\end{eqnarray*}
\end{proof}

\begin{theorem}\emph{(Restatement and proof of Theorem \ref{thm:dnbrowcone}).}
A  bounded row vector $\vec{x}=(\dots, \linebreak[0]x_{-2}, \linebreak[0]x_{-1}, \linebreak[0]x_0, x_1, x_2, \dots)$ belongs to $\rowcone(\set{\mech_{DNB(p,r)}})$ if for all integers $k$,
$$\forall k:~~~\sum\limits_{j=-r}^r (-1)^j f_B\left(j;\frac{p}{1+p},r\right) x_{k+j} \geq 0$$
where $p$ and $r$ are the parameters of the differenced negative binomial distribution and $f_B(\cdot;p/(1+p),r)$ is the probability mass function of the difference of two independent binomial (not negative binomial) distributions whose parameters are $p/(1+p)$ (success probability) and $r$ (number of trials).
\end{theorem}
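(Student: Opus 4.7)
The plan is to apply Theorem~\ref{thm:invcnfinf}(1), which reduces the characterization of $\rowcone(\set{\mech_{DNB(p,r)}})$ to the task of inverting the matrix representation $M^*$ of $\mech_{DNB(p,r)}$ and reading off its columns. Since $\mech_{DNB(p,r)}$ adds integer-valued noise with probability mass function $f$ (where $f(\ell) = P(X-Y = \ell)$ for independent NB$(p,r)$ random variables $X$ and $Y$), the matrix $M^*$ is a Toeplitz-type convolution operator whose $(i,j)$ entry is $f(i-j)$. The natural way to invert such an operator is via Fourier series, exactly along the lines of the recipe described at the end of Section~\ref{sec:applications:addnoise}: the inverse should be the matrix whose $(i,j)$ entry is $g(i-j)$, where $g$ is the inverse Fourier transform of $1/\widehat{f}$.

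First I would compute $\widehat{f}$. Using that the probability generating function of NB$(p,r)$ equals $\left((1-p)/(1-pz)\right)^r$ together with independence of $X$ and $Y$, I get
\[
\widehat{f}(t) \;=\; \left(\frac{1-p}{1-pe^{it}}\right)^{r}\!\left(\frac{1-p}{1-pe^{-it}}\right)^{r} \;=\; \frac{(1-p)^{2r}}{(1-pe^{it})^{r}(1-pe^{-it})^{r}}.
\]
Hence $1/\widehat{f}(t) = (1-pe^{it})^{r}(1-pe^{-it})^{r}/(1-p)^{2r}$, which by Lemma~\ref{lem:fourierh} is exactly $C\cdot \widehat{h}(t)$ for the positive constant $C=((1+p)/(1-p))^{2r}$ and for $h(k) = (-1)^{k} f_B(k; p/(1+p), r)$. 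Inverting the transform gives $g(k) = C(-1)^{k} f_B(k; p/(1+p), r)$, which is supported on $\{-r,\ldots,r\}$. Consequently each column of $(M^*)^{-1}$ has only finitely many nonzero entries with uniformly bounded absolute values, so the $L_1$ column-norm bound required by Theorem~\ref{thm:invcnfinf} holds automatically.

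Theorem~\ref{thm:invcnfinf}(1) then tells us that a bounded vector $\vec{x}$ belongs to $\rowcone(\set{\mech_{DNB(p,r)}})$ iff its dot product with every column of $(M^*)^{-1}$ is nonnegative. The column indexed by integer $k$ has entries $g(i-k)$, so this dot product equals $\sum_i g(i-k)\, x_i$; after the substitution $j = i-k$ and division by the positive constant $C$, the condition becomes exactly $\sum_{j=-r}^{r}(-1)^{j} f_B(j; p/(1+p), r)\, x_{k+j} \geq 0$ for every integer $k$, which is the stated inequality. The main technicality to be careful about is justifying the Fourier-series inversion, i.e.\ confirming that the candidate $g$ really yields a two-sided inverse of $M^*$ in the sense that $f\star g = g\star f = \delta_0$; this is straightforward since $g$ has finite support and $f$ has rapidly decaying tails, so both convolutions converge absolutely and the multiplicative identity $\widehat{f}\,\widehat{g}\equiv 1$ transfers back to $f\star g=\delta_0$ by uniqueness of Fourier coefficients.
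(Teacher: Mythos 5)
Your proposal is correct and follows essentially the same route as the paper's proof: both invert the Toeplitz convolution operator $M^*$ via Fourier series, use Lemma~\ref{lem:fourierh} to identify the inverse kernel as $((1+p)/(1-p))^{2r}(-1)^k f_B(k;p/(1+p),r)$, and then apply Theorem~\ref{thm:invcnfinf}(1). Your explicit check of the $L_1$ column-norm hypothesis (immediate from the finite support of $f_B$) is a small point the paper's proof leaves implicit, but the argument is otherwise identical.
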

\begin{proof}
For convenience, define the function $h$ as follows:
\begin{eqnarray*}
h(j)&=&(-1)^jf_B\left(j;\frac{p}{1+p},r\right)
\end{eqnarray*}
Let $g_{NB}(\cdot;p,r)$ be the probability distribution function for the difference of two independent NB$(p,r)$ random variables. Then the matrix representation $M_{DNB(p,r)}$ of the differenced negative binomial mechanism $\mech_{DNB(p,r)}$ is the matrix whose rows and columns are indexed by the integers and whose entries are defined as:
\begin{eqnarray*}
(i,j)\text{ entry of }M_{DNB(p,r)}&=& g_{NB}(i-j;p,r)
\end{eqnarray*}
By Theorem \ref{thm:invcnfinf} we need to show that $M_{DNB(p,r)}$ is the inverse of $\frac{(1+p)^{2r}}{(1-p)^{2r}}H$ where $H$ is the matrix whose rows and columns are indexed by the integers and whose entries are defined as:
\begin{eqnarray*}
(i,j)\text{ entry of }H&=&h(i-j)=(-1)^{i-j}f_B\left(i-j;\frac{p}{1+p},r\right)
\end{eqnarray*}
(to see how Theorem \ref{thm:invcnfinf} is applied, note that each entry of the product $\vec{x}H$ has the form $\sum\limits_{j=-r}^r (-1)^j f_B\left(j;\frac{p}{1+p},r\right) x_{k+j} $).

Now, to show that $M_{DNB(p,r)}$ and $\frac{(1+p)^{2r}}{(1-p)^{2r}}H$ are inverses of each other, we note that
\begin{eqnarray}
\lefteqn{(i,j)\text{ entry of }\left(M_{DNB(p,r)}H\right)}\nonumber\\
&=&\sum\limits_{\ell=-\infty}^\infty g_{NB}(i-\ell;p,r)h(\ell-j)\nonumber\\
&=&\sum\limits_{\ell^\prime=-\infty}^\infty g_{NB}(i-j-\ell^\prime;p,r)h(\ell^\prime)\nonumber\\
&=&\sum\limits_{\ell^\prime=-r}^r g_{NB}(i-j-\ell^\prime;p,r)h(\ell^\prime)\label{eqn:convnb}
\end{eqnarray}
The last step follows from the fact that $f_B(\ell^\prime;p,r)$ and $h(\ell^\prime)$ are nonzero only when $\ell^\prime$ is between $-r$ and $r$ since $f_B(\cdot;p,r)$ is the probability mass function of the difference of two binomial random variables (each of which is bounded between $0$ and $r$).

Now, Equation \ref{eqn:convnb} is the definition of the convolution \cite{rudin} of $g_{NB}(\cdot;p,r)$ and $h$ at the point $i-j$. That is,
\begin{eqnarray*}
(g_{NB}(\cdot;p,r)\star h)(k)=\sum\limits_{\ell^\prime=-r}^r g_{NB}(k-\ell^\prime;p,r)h(\ell^\prime)
\end{eqnarray*}
and thus to show that  $M_{DNB(p,r)}$ and $\frac{(1+p)^{2r}}{(1-p)^{2r}}H$ are inverses of each other, we just need to show that the convolution of $g_{NB}(\cdot;p,r)$ and $h$ at the point $0$ is equal to $\frac{(1-p)^{2r}}{(1+p)^{2r}}$ and that the convolution at all other integers is $0$. In other words, we want to show that for all integers $k$,
\begin{eqnarray}
(g_{NB}(\cdot;p,r)\star h)(k) = \frac{(1-p)^{2r}}{(1+p)^{2r}}~\delta(k)\label{eqn:showconvdelta}
\end{eqnarray}
where $\delta$ is the function that $\delta(0)=1$ and $\delta(k)=0$ for all other integers. Take the Fourier series transform of both sides while noting two facts: (1) the Fourier series transform of $\delta$ is $\widehat{\delta}(t)=\sum\limits_{\ell=-\infty}^\infty \delta(\ell)e^{i\ell t}\equiv 1$, and  (2) the Fourier transform of a convolution is the product of the Fourier transforms \cite{rudin}. Then the transformed version of Equation \ref{eqn:showconvdelta}  becomes
\begin{eqnarray}
\widehat{g_{NB}}(t)~\widehat{h}(t) = \frac{(1-p)^{2r}}{(1+p)^{2r}}\widehat{\delta}(t) \equiv \frac{(1-p)^{2r}}{(1+p)^{2r}}\label{eqn:fouriershow}
\end{eqnarray}
for all real $t$, where $\widehat{g_{NB}}$, $\widehat{h}$, $\widehat{\delta}$ are the Fourier series transforms of $g_{NB}(\cdot;p,r)$, $h$, and $\delta$, respectively. Once we prove that Equation \ref{eqn:fouriershow} is true, this implies Equation \ref{eqn:showconvdelta} is true (by the inverse Fourier transform) which then implies that $M_{DNB(p,r)}$ and $\frac{(1+p)^{2r}}{(1-p)^{2r}}H$ are inverses of each other and this would finish the proof (by Theorem \ref{thm:invcnfinf}).

Thus our goal is to prove Equation \ref{eqn:fouriershow}.
The Fourier series transform (i.e. characteristic function), as a function of $t$, of the NB$(p,r)$ distribution is known to be:
\begin{eqnarray*}
\left(\frac{1-p}{1-pe^{it}}\right)^r
\end{eqnarray*}
so $g_{NB}(\cdot;p,r)$, being the difference of two independent negative binomial random variables, has the Fourier series transform (as a function of $t$)
\begin{eqnarray*}
\widehat{g_{NB}}(t)&=&\left(\frac{1-p}{1-pe^{it}}\right)^r \left(\frac{1-p}{1-pe^{-it}}\right)^r
\end{eqnarray*}
By Lemma \ref{lem:fourierh},
\begin{eqnarray*}
\widehat{h}(t)=\frac{1}{(1+p)^{2r}}(1-pe^{it})^r(1-pe^{-it})^r
\end{eqnarray*}
Thus Equation \ref{eqn:fouriershow} is true and we are done.
\end{proof}

}
\end{document}